\documentclass[letterpaper, 11pt]{article}

\usepackage[margin=1in]{geometry}
\usepackage[bookmarks, colorlinks=true, plainpages=false,
            citecolor=red,
            linkcolor=blue,
            anchorcolor=red,
            urlcolor=blue]{hyperref}
\usepackage{url}
\usepackage{amssymb, amsmath, mathtools, amsfonts, amsthm, dsfont, mathrsfs}
\usepackage{float, xcolor, xspace, graphicx, appendix, enumitem, booktabs, multirow}
\IfFileExists{subfigure.sty}{\usepackage{subfigure}}{}
\IfFileExists{cleveref.sty}{\usepackage{cleveref}}{}
\usepackage{algorithm, algpseudocode}
\usepackage[normalem]{ulem}

\makeatletter
\algrenewcommand\alglinenumber[1]{\footnotesize #1. }
\algrenewcommand\algorithmiccomment[1]{\hfill\textcolor{blue}{$\triangleright$ #1}}
\makeatother

\usepackage{prettyref}

\newrefformat{eq}{(\ref{#1})}
\newrefformat{sec}{Section~\ref{#1}}
\newrefformat{alg}{Algorithm~\ref{#1}}
\newrefformat{fig}{Figure~\ref{#1}}
\newrefformat{tab}{Table~\ref{#1}}
\newrefformat{rmk}{Remark~\ref{#1}}
\newrefformat{clm}{Claim~\ref{#1}}
\newrefformat{claim}{Claim~\ref{#1}}
\newrefformat{def}{Definition~\ref{#1}}
\newrefformat{assump}{Assumption~\ref{#1}}
\newrefformat{cor}{Corollary~\ref{#1}}
\newrefformat{lmm}{Lemma~\ref{#1}}
\newrefformat{prop}{Lemma~\ref{#1}}
\newrefformat{app}{Appendix~\ref{#1}}
\newrefformat{thm}{Theorem~\ref{#1}}

\usepackage[size=tiny]{todonotes}

\newif\ifshowrevisions
\showrevisionstrue

\theoremstyle{plain}
\newtheorem{theorem}{Theorem}
\newtheorem{lemma}{Lemma}
\newtheorem{corollary}{Corollary}
\newtheorem{fact}{Fact}

\theoremstyle{definition}

\newtheorem{remark}{Remark}

\newif\ifhideproofs
\ifhideproofs
    \usepackage{environ}
    \NewEnviron{hide}{}

\fi
\newif\ifdraft
\drafttrue

\newcommand{\norm}[1]{\left\|{#1} \right\|}

\newcommand{\E}{\mathbb{E}}

\newcommand{\Var}{\mathsf{Var}}

\makeatletter
\@tfor\@tempa:=ABCDEFGHIJKLMNOPQRSTUVWXYZ\do{
  \expandafter\edef\csname t\@tempa\endcsname{
    {\noexpand\widetilde{\@tempa}}
  }
}
\@tfor\@tempa:=ABCDEFGHIJKLMNOPQRSTUVWXYZ\do{
  \expandafter\edef\csname cal\@tempa\endcsname{
    \noexpand\mathcal{\@tempa}
  }
}
\@tfor\@tempa:=abcdefghijklmnopqrstuvwxyz\do{
  \expandafter\edef\csname sf\@tempa\endcsname{
    \noexpand\mathsf{\@tempa}
  }
}
\makeatother

\newcommand{\Qd}{[0,1]^d}
\newcommand{\Td}{\mathbb T^d}
\newcommand{\Pp}{\mathbb P}

\newcommand{\OPT}{\mathrm{OPT}}
\newcommand{\dist}{\operatorname{dist}}
\newcommand{\NN}{\operatorname{NN}}

\usepackage{color-edits}
\addauthor[Mingwei]{mw}{purple}

\hypersetup{pdftitle={Optimal Analysis of Greedy for Stochastic Online Euclidean Matching}}
\title{Optimal Analysis of Greedy for Stochastic Online Euclidean Matching}
\author{Mingwei Yang and Sophie H.\ Yu\thanks{M. Yang is with the Department of Management Science and Engineering, Stanford University, Stanford CA, USA,
\texttt{mwyang@stanford.edu}.
S.\ H.\ Yu is with The Wharton School of Business, University of Pennsylvania, Philadelphia PA, USA,  \texttt{hysophie@wharton.upenn.edu}.
}}
\date{}

\begin{document}
\maketitle

\begin{abstract}
We study Greedy for online metric matching with $n$ servers and $n$
requests sampled independently and uniformly from $[0,1]^d$.  Servers
are available initially, and Greedy irrevocably matches each arriving
request to its closest available server, incurring a cost of their distance.
We prove that Greedy has competitive
ratio $O(1)$ for every fixed $d\ne2$, and
$\Theta(\sqrt{\log n})$ for $d=2$.  Previously,
constant competitiveness was shown for $d = 1$~\cite{DBLP:conf/sigecom/BalkanskiFP23}, and no non-trivial results for this setting were known for higher dimensions.
Our proof first analyzes Greedy on the flat torus and then transfers the estimates back to the cube.
\end{abstract}

\setcounter{tocdepth}{3}
{\small\tableofcontents}

\section{Introduction}\label{sec:intro}

Online bipartite matching models the allocation of a fixed set of resources
to participants who arrive sequentially.  One side of a bipartite graph is
known in advance; each vertex on the other side arrives with its incident
edges, and the algorithm must immediately match it to an unmatched neighbor
or leave it unmatched.  These decisions are irrevocable.  The classical
objective is to maximize the number of matched pairs~\cite{DBLP:conf/stoc/KarpVV90}.
Weighted and budgeted extensions capture online advertising and related
allocation markets~\cite{DBLP:journals/jacm/MehtaSVV07}.  These problems connect online
algorithms with market design; a recent survey reviews their main models,
techniques, and applications~\cite{DBLP:journals/sigecom/HuangTW24}.

We study the cost-minimization variant, \emph{online metric matching}, in
which resources have locations in a metric space.  Each arriving request must be matched
immediately and irrevocably to an unused server, at a cost equal to their
distance.  Applications include ride-hailing and delivery, where servers
represent drivers or couriers, requests represent customers, and the matching
cost measures pickup or travel distance~\cite{DBLP:journals/ior/ChenKKZ26,DBLP:journals/ior/YangY26}.
The objective is to minimize the total matching cost without observing future
requests.

A natural allocation rule is \emph{Greedy}: match each request
to its closest unused server.  This rule requires neither distributional
knowledge nor samples of future requests, and each decision reduces to a
nearest-neighbor query.  Experiments on real and synthetic spatial data
report low matching costs and favorable running time and memory
use~\cite{tong2016online}.  These features make Greedy attractive for
real-time allocation in ride-hailing and delivery.
Its worst-case guarantee is substantially weaker: adversarial instances on
the line can give Greedy an exponential competitive
ratio~\cite{DBLP:journals/jal/KalyanasundaramP93,DBLP:journals/tcs/KhullerMV94} and, when the arrival order is randomly permuted, a polynomial competitive ratio~\cite{DBLP:journals/orl/GairingK19}.  Understanding the performance
of this rule therefore requires an analysis that accounts for the spatial
distribution of servers and requests.

We study the \emph{fully random model}, in which $n$ servers and $n$ requests
are sampled independently and uniformly from $[0,1]^d$.  All servers are
available initially, and requests arrive sequentially.  We compare Greedy's
expected total cost with the expected minimum cost of an offline matching
on the same input.  A constant competitive ratio is known on the
line~\cite{DBLP:conf/sigecom/BalkanskiFP23}, and a central question concerns whether this guarantee extends to higher dimensions.

\subsection{Main Results}

As our main results, we determine the tight competitive ratio of Greedy in every fixed dimension.  
The first theorem gives an
upper bound in every fixed dimension.

\begin{theorem}\label{thm:costs}
    In the fully random model, Greedy is $O(1)$-competitive for $d \neq 2$ and is $O(\sqrt{\log n})$-competitive for $d = 2$.
\end{theorem}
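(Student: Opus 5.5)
We will bound Greedy's expected cost from above and compare it with the known order of the optimal offline cost in the fully random model. For $n$ i.i.d. uniform points on each side in $\Qd$, the classical results of Ajtai--Koml\'os--Tusn\'ady and Talagrand give
\[
\E[\OPT]=\Theta(\sqrt n)\ (d=1),\qquad \Theta(\sqrt{n\log n})\ (d=2),\qquad \Theta(n^{1-1/d})\ (d\ge 3).
\]
So it suffices to show:
\begin{itemize}
\item Greedy's expected cost is $O(\sqrt n)$ for $d=1$ and $O(n^{1-1/d})$ for $d\ge3$;
\item it is $O(\sqrt n\log n)$ for $d=2$.
\end{itemize}
For $d=2$, $\sqrt n\log n=\sqrt{\log n}\cdot\sqrt{n\log n}$, which gives the stated ratio. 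The case $d=1$ is known~\cite{DBLP:conf/sigecom/BalkanskiFP23}, but we aim for a uniform argument.

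Following the abstract, we first work on the flat torus $\Td$, where translation invariance makes the law of the configuration of remaining servers homogeneous in space. Consider step $t$, with $m=n-t+1$ servers still available. The cost of request $t$ is the distance from a uniform point to its nearest available server. If the available servers were roughly uniform, this would be about $m^{-1/d}$. Summing over $t$ then gives
\[
\sum_{m} m^{-1/d}=O(n^{1-1/d}) \quad \text{for } d\ge 2,
\]
with only a $\log n$ when $d=1$, which is too crude. So the true content is the deviation of the available servers from uniformity. Greedy depletes servers locally where requests over-cluster. Hence the relevant quantity at time $t$ is the discrepancy between the empirical measure of the first $t$ requests and that of $t$ of the initial servers, measured at each scale $r$.

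The key step is a multiscale estimate. For a dyadic cube $Q$ of side $r$, let the surplus $D_t(Q)$ be the number of requests that arrived in $Q$ minus the number of initial servers in $Q$. We want to show that Greedy's request at time $t$ pays at least $r$ only if some cube of side $\asymp r$ near it has been emptied. That requires a local surplus of order the local server count, roughly $m r^d$. Dyadic cubes are tracked only as a device for this estimate. The surplus in a cube of side $r$ behaves like a random-walk/CLT fluctuation of size $\sqrt{n r^d}$, up to matching moves across the boundary of $Q$. We expect to bound these boundary moves by the same quantity at the parent scale, via a recursive (AKT-style) argument.

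Summing $r\cdot\Pr[\text{emptied}]$ over dyadic scales and over time reproduces the AKT sum $\sum_k 2^{k}\cdot 2^{-k}\sqrt{n}$ pattern. This sum is balanced (logarithmic) exactly when $d=2$. Here Greedy loses an extra $\sqrt{\log n}$ relative to $\OPT$, because the greedy choices do not coordinate across scales the way the optimal transport does, so we lose a union bound over the $\log n$ scales. For $d\ne2$ the sum is geometric and dominated by one extreme scale, giving $O(\E[\OPT])$.

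Finally, we transfer the estimates back to the cube. Couple by viewing $\Qd$ inside a torus, or by reflecting. The only new effect is the boundary layer: near $\partial\Qd$ servers are scarcer on one side, and Greedy may drift. We will bound the extra cost by running the torus analysis on $[0,1]^d$ embedded as a fundamental domain, with distances comparable up to constants away from the boundary. Near the boundary, a layer of width $r$ carries mass $r$, so the contribution is lower order than the bulk sum.

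The main obstacle is the multiscale surplus estimate under Greedy's adaptive dynamics. The emptied-region events are correlated across time and scales, and boundary flows between cubes depend on the algorithm. We would first try to dominate the depletion of a cube by a monotone quantity: the running maximum of the request-minus-server count over time in slightly enlarged cubes. Then we would apply a maximal inequality (Doob) to this martingale-like count at each scale.
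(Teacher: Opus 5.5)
\textbf{Gap: the fluctuation bound for remaining-server counts under Greedy is never established, and the cube transfer does not control cascades.}

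Several parts of your plan are right. Reducing to $\E G_n=O(M_n)$ via the AKT/Talagrand offline scales is right, and so is working on the torus first. Your target estimate is also the right one: suppose the remaining-server count of a cell of side $r$ at rank $m$ has mean $mr^d$ and fluctuations of order $\sqrt{nr^d}$. Then Chebyshev gives emptiness probability $O(n/(m^2r^d))$, which is exactly the tail bound \eqref{eq:torus-tail} the paper integrates.

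The gap is that you never establish those fluctuations. Emptiness of a cube $Q$ is governed by how many servers are \emph{removed} from $Q$, not by how many requests \emph{land} in $Q$. Once a neighbourhood is depleted, requests from outside $Q$, possibly far away, are matched into $Q$. So your surplus $D_t(Q)$ does not control emptiness.

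The ``boundary moves'' you defer to a recursive AKT-style argument are exactly Greedy's chain-reaction difficulty. In AKT the inter-cube flows belong to a transport plan you design. Here they are dictated by the algorithm, and nothing bounds them by the parent-scale discrepancy. Your Doob/enlarged-cube domination fails for the same reason: no fixed enlargement of $Q$ contains all requests that can deplete $Q$.

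The missing idea is Greedy's stability (Lemma~\ref{lem:greedy-stability}). Under any metric, replacing one initial server or one past request changes the remaining configuration by at most one point replacement. Fed into Efron--Stein, this gives $\sum_P\Var(N_P(B_m))\le 2n-m$ for every partition (Lemma~\ref{prop:aggregate-variance}), with no factor for the number of cells and no flow tracking. Translation invariance on the torus gives the exact mean $\E N_P(B_m)=m|P|$. With both in hand, Chebyshev at a single scale per threshold finishes the torus bound, and no multiscale recursion is needed. Relatedly, the extra $\log n$ in $d=2$ comes from summing the per-rank bound $O(\sqrt n/m)$ over $\sqrt n<m\le n$, not from a union bound over scales.

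Your transfer to the cube is also not an argument as written. Greedy under the Euclidean and torus metrics make different choices near the seams, and one different choice can change all later matches. So the observation that a boundary layer of width $r$ has volume $O(r)$ bounds nothing.

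The paper couples torus Greedy with Greedy under the randomly cut metric, which has exactly the law of cube Greedy (Lemma~\ref{lem:unwrapped-greedy-law}), and uses stability again. The discrepancy $K_m$ between the two remaining sets grows by at most one per request whose torus match is cut by a seam. That event has probability $O(\E D_k^{\mathbb T})$, so $\E K_m=O(M_n)$ (Lemma~\ref{lem:discrepancy}).

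Because cube means are no longer exact, a good/bad-cell split is then needed. On good cells, Chebyshev applies with the metric-free aggregate variance bound. The total volume of bad cells is controlled by the mean-bias bound $H_m\le\E K_m/m$ (Lemma~\ref{prop:mean-bias}).
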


\Cref{thm:costs} recovers the one-dimensional guarantee
of~\cite[Theorem 1]{DBLP:conf/sigecom/BalkanskiFP23} and establishes
constant competitiveness for $d\ge3$.  In the plane, the following lower
bound matches the upper bound.

\begin{theorem}\label{thm:lb-2d}
    In the fully random model, Greedy is $\Omega(\sqrt{\log n})$-competitive for $d = 2$.
\end{theorem}

Together, \Cref{thm:costs,thm:lb-2d} determine the competitive ratio in
the plane as $\Theta(\sqrt{\log n})$.  
Thus, in dimension two, even under independent uniform input, Greedy does
not attain the order of the expected offline optimum as in other dimensions.
The loss concerns this particular allocation rule: other online algorithms
attain the optimal order in the same input model~\cite{holden2021gravitational,kanoria2025dynamic}.
The distinction between the cases of $d =2$ and $d \neq 2$ shown by our results is aligned with the observations in prior work that optimal matchings in dimension two exhibit unique structures~\cite{talagrand2022upper,kanoria2025dynamic,DBLP:journals/ior/YangY26,DBLP:conf/innovations/LiV026}.

\subsection{Proof Overview}

The main difficulty of analyzing Greedy is that the remaining servers form a dependent point
configuration.  Earlier matches determine which servers survive, so their
counts cannot be analyzed as independent binomial variables.  Moreover, the
boundary of the unit cube can make their expected spatial distribution
nonuniform.

\paragraph{Starting with the flat torus.}
To overcome the above barriers, the primary departure of our analysis from prior work is to study Greedy first on the \emph{flat torus}, obtained by identifying opposite faces of the unit cube.
This removes the boundary and restores translation symmetry: translating
all inputs translates every Greedy choice without changing the input law.
Consequently, at any moment, although the remaining servers are still dependent, each of them is now uniformly distributed (Fact~\ref{fact:torus-translation-stationarity}).

To see why the above uniformity property matters, consider matching costs at a distance
scale $r$.  Partition the torus into equal grid cells of side length
roughly $r$, and each cell has diameter roughly $\sqrt{d} r$.
A match longer than this diameter must cross between cells; under Greedy,
it can occur only if the request's cell contains no available server.
Hence, upper bounding the total matching cost entails upper bounding the probability of a cell being empty.
Specifically, let $N_P$ denote the remaining-server count in cell $P$, and Chebyshev's inequality gives
\[
  \Pp\{N_P=0\}\le\frac{\Var(N_P)}{\E[N_P]^2} = \frac{\Var(N_P)}{(m|P|)^2},
\]
where the equality holds by the uniformity property for torus Greedy.

To control the variance term $\Var(N_P)$, we use a well-known stability property of Greedy valid under any fixed
metric: replacing one initial server or one past request changes the server configuration at any moment by at most one server, and hence at most two cells have their server counts change, each by one (\Cref{lem:greedy-stability}).
Consequently, the Efron--Stein inequality (\Cref{lem:efron-stein}) gives
\[
  \sum_P\Var(N_P)\le 2n-m,
\]
with no factor for the number of cells (\Cref{prop:aggregate-variance}).
This Greedy variance bound remarkably holds for any metric and leads to a tail bound for the total matching cost of torus Greedy at scale $r$; integrating this tail bound yields an upper bound for the total Greedy cost on torus (\Cref{prop:torus-edge}).

\paragraph{Comparing the torus and cube by a random cut.}
To transfer the Greedy estimates on torus back to our target metric, the Euclidean metric in the unit cube, we open the torus into a cube by choosing a uniformly random cut position on each coordinate circle, and unwrapping
each circle into $[0,1)$.
Euclidean distances on the unwrapped torus induce a new metric that we refer to as the \emph{cut metric}, and the unwrapped torus inputs remain independent and uniform conditional on the cuts.
Crucially, conditional on the cuts, the Greedy process under the cut metric has precisely the same law as the original Euclidean Greedy process and hence serves as an intermediate for us to compare the original Euclidean Greedy process and the torus Greedy process (\Cref{lem:unwrapped-greedy-law}).

We then show that the Greedy process under the cut metric and the torus Greedy process stay close (\Cref{lem:discrepancy}).
We first argue that the discrepancy between the server configurations of the two Greedy processes increases after a new request only when a cut separates this request and
its torus nearest available server along a shortest coordinate arc, and the stability property of Greedy ensures that this increment is at most one.
The probability of such a cut occurring can be upper bounded by the torus matching length of this request (Fact~\ref{lem:random-cut}).
Hence, the overall discrepancy between these two server configurations can be controlled by the total matching cost of torus Greedy.
By the uniformity property of the torus Greedy, the same coupling also allows us to bound the total variation distance between the normalized expected server configuration of the original Euclidean Greedy process and the uniform measure (\Cref{prop:mean-bias}).

\paragraph{The cube upper bound.}
Using the same approach as for the torus, we now bound the total matching cost of Euclidean Greedy.
The variance bound still holds, but we no longer have a precise estimate on the first-moment term $\E[N_P]$.
To get around this, we classify cells according to their expected server count $\E[N_P]$.
Specifically, we call a cell $P$ \emph{good} if $\E[N_P] \geq m|P|/2$, and
\emph{bad} otherwise.
On good cells, Chebyshev's inequality and the aggregate variance bound
control the probability of being empty just as on the torus, while the total volume
of bad cells is controlled by the aforementioned estimate on the distance between the normalized expected server configuration of the original Euclidean Greedy process and the uniform measure.
This yields the desired upper bound on the total matching cost of Euclidean Greedy.

\paragraph{The planar lower bound.}
The lower-bound proof for $d=2$ is more involved.
Here, we select $\Theta(\log n)$ disjoint stages, with each stage consisting of a time interval, and prove that each
stage incurs expected cost $\Omega(\sqrt n)$.
To analyze each stage, suppose that $m$ servers remain at the beginning of this stage, and we impose a randomly
shifted grid of side length $h_m=\Theta(\sqrt n/m)$.
Define the \emph{deficit} of a cell $P$ as the positive difference between $m|P|$, the
expected number of future requests landing in $P$, and its remaining-server count.
Intuitively, a cell possessing a large deficit must have many future requests in it being matched to the servers in other cells, incurring a large cost.
We formally show that an expected total deficit $\Omega(m)$ suffices to force $\Omega(m)$ such cell-crossing matches in this stage, incurring an expected cost $\Omega(mh_m)=\Omega(\sqrt n)$ (\Cref{prop:delayed-service}).

We then show that a deficit bound for the torus can be transferred to a deficit bound for the square (\Cref{lem:torus-square-deficit}).
Specifically, under the aforementioned random-cut coupling between the torus Greedy process and the original Euclidean Greedy process, the cells split by the cuts only contribute a negligible amount of deficit, and hence we can discard the contribution from these cells.
The transfer is then concluded by the established property that the server configurations of these two Greedy processes admit a small discrepancy.

To prove the deficit bound for the torus, recall that the past-request counts of the cells are binomial random variables and hence have expected total absolute deviation of $\Theta(\sqrt{n} / h_m) = \Theta(m)$.
Our proof strategy is to show that the remaining-server counts inherit the fluctuations from the past-request counts, resulting in the desired deficit bound.
This inheritance can be made formal under the additional assumption that torus Greedy has small sensitivity to skipping one request (\Cref{prop:demand-localization}).
Specifically, consider two torus Greedy processes: one serves the requests $X_1, \ldots, X_{n-m}$, and the other serves the requests $X_1, \ldots, X_{u - 1}, X_{u + 1}, \ldots, X_{n - m}$.
The stability property of Greedy ensures
that the process skipping $X_u$ has the same remaining servers as the other process plus exactly one extra server, denoted by $Z_{m,u}$.
If $X_u$ and $Z_{m,u}$ lie in the same cell $P$, restoring $X_u$ increases
the past-request count of $P$ by one and decreases the remaining-server
count of $P$ by one, leaving their sum unchanged.
Consequently, if many $(X_u, Z_{m, u})$ pairs have both points belonging to the same cell, then the fluctuations of the past-request counts can be passed to the remaining-server counts.
Since we impose a randomly shifted grid of side length $h_m$, a standard property of randomly shifted grids bounds the probability of $X_u$ and $Z_{m, u}$ lying in different cells
by $O(\E d_{\mathbb T}(X_u, Z_{m, u}) / h_m)$, where $d_{\mathbb T}$ denotes the torus distance.
The required geometric sensitivity property is therefore an upper bound on $\E d_{\mathbb T}(X_u, Z_{m, u})$.

\paragraph{The geometric sensitivity bound.}
Bounding $\E d_{\mathbb T}(X_u, Z_{m, u})$ constitutes the most technical part of our analysis.
To see the difficulty, skipping one request can change a series of later matches, so Greedy's stability property alone does not control this distance.
Such a chain reaction was previously described as the main technical challenge of analyzing Greedy~\cite{kanoria2025dynamic}.

In our analysis, we fix the server configuration just before $X_u$ and the server matched to each subsequent request under the process skipping $X_u$.
Under this conditioning, each subsequent request is independently uniform
in the Voronoi cell of the server it matches to, while $X_u$ remains
independent and uniform on the entire torus.
Consider the unique server under the process skipping $X_u$ that is unavailable under the process not skipping $X_u$.
This extra server moves as subsequent requests arrive, and it ultimately reaches $Z_{m, u}$.
In a periodic representation of the torus, we characterize the extra server's random motion by two periodic Voronoi properties: the first property states that this motion is a martingale, while the second property asserts that the expected squared length of the increment of this motion is upper bounded by the increment of some quadratic potential (\Cref{lem:periodic-voronoi-deletion}).
These quadratic-potential increments then telescope, enabling us to upper bound $\E d_{\mathbb T}(X_u, Z_{m, u})$ by the expected cost of torus Greedy for matching request $X_{n - m}$ (\Cref{prop:endpoint-localization}).

\subsection{Related Literature}\label{sec:related-literature}

\paragraph{Adversarial input and random order.}
Online metric matching was introduced in the adversarial setting,
where the optimal deterministic competitive ratio on general metrics is
$2n-1$~\cite{DBLP:journals/jal/KalyanasundaramP93,DBLP:journals/tcs/KhullerMV94}.
Randomized algorithms achieve competitive ratios of
$O((\log n)^3)$~\cite{DBLP:conf/soda/MeyersonNP06} and
$O((\log n)^2)$~\cite{DBLP:journals/algorithmica/BansalBGN14}.
For the line, \cite{DBLP:conf/compgeom/Raghvendra18} gives a deterministic $O(\log n)$-competitive algorithm, while an
$\Omega(\sqrt{\log n})$ lower bound holds even for randomized
algorithms~\cite{DBLP:journals/talg/PesericoS23}.
When the request
arrival order is randomly permuted, \cite{DBLP:conf/approx/Raghvendra16} gives a deterministic algorithm that achieves an optimal
$\Theta(\log n)$ competitive ratio on general metrics while retaining an
asymptotically optimal adversarial guarantee.
\cite{DBLP:journals/orl/GairingK19} show that Greedy is $n$-competitive in random-order arrivals
and admits a polynomial competitive-ratio lower bound even on the
line.
In non-bipartite matching on the line with general arrivals, where every
agent arrives online and may wait for a match, \cite{DBLP:journals/corr/abs-2606-05546} show that every algorithm has an unbounded competitive ratio
under adversarial or random-order
arrivals.

\paragraph{Random input.}
For fixed servers and iid requests from a known distribution, the algorithm of \cite{DBLP:conf/icalp/GuptaGPW19}
achieves an $O((\log\log\log n)^2)$ competitive ratio on general metrics
and a constant competitive ratio on trees.
\cite{kanoria2025dynamic} presents a constant-competitive algorithm when all inputs are uniformly distributed in $[0, 1]^d$, and considers more general models where servers can outnumber requests, or servers can be replenished.
\cite{DBLP:journals/ior/YangY26} discover a reduction from arbitrary initial server configurations to fully
stochastic input, which then gives a constant-competitive algorithm for smooth request distributions on $[0,1]^d$ when $d\ge3$.
\cite{DBLP:conf/innovations/LiV026} adopt the smoothed-analysis framework that further permits independent
requests from different distributions with bounded densities; for fixed
$d\ne2$, they give an $O(1)$-competitive algorithm that uses one sample from each request distribution and no further distributional knowledge.
\cite{kumar2026feature,DBLP:journals/ior/ChenKKZ26} study more general feature-based models that allow different supply and demand distributions and richer match values.
Under the non-bipartite general-arrivals model, \cite{DBLP:journals/corr/abs-2606-05546} give an
$O((\log n)^2)$-competitive algorithm on the line when arrivals are iid from an unknown distribution.

\paragraph{Greedy under random input.}
The competitive ratio of Greedy has been extensively studied on the line for iid uniform servers and requests.
\cite{DBLP:conf/sigecom/AkbarpourALS22} prove an $O((\log n)^3)$ competitive ratio when there are $\Theta(n)$ excess
servers, and \cite{DBLP:conf/sigecom/BalkanskiFP23} further give an $O(1)$ competitive ratio both in the balanced case and with a linear excess of
servers.
\cite{DBLP:conf/sigecom/BalkanskiFP23} further establish
a tight $\Theta(\log n)$ competitive ratio for adversarial servers and iid uniform requests.
In higher dimensions, \cite{DBLP:journals/ipl/TsaiTC94} give an $O(\sqrt n)$
competitive ratio for iid uniform input on the unit
disk.

The recent one-dimensional analyses of Greedy heavily exploit the order structure of the
line and hence cannot be easily generalized to higher dimensions. \cite{DBLP:conf/sigecom/AkbarpourALS22}
represent spatially ordered servers and requests by a random walk,
whose exit times partition the line into intervals with a server surplus.
The analysis in~\cite{DBLP:conf/sigecom/BalkanskiFP23} compares
Greedy with hierarchical Greedy of \cite{kanoria2025dynamic} through hybrid algorithms that switch
from one algorithm to the other, controlling the gap between the two
servers on which the coupled matching processes differ.
\section{Preliminaries}\label{sec:preliminaries}

Let $Y_1,\ldots,Y_n$ be the servers and $X_1,\ldots,X_n$ the requests,
all sampled independently and uniformly from the unit cube $\Qd$.
The requests arrive in order.  Upon the arrival of $X_i$, Greedy matches
it to the closest available server $Y_j$, at cost $\norm{X_i-Y_j}_2$.
The matched server $Y_j$ then becomes unavailable to all subsequent requests.
Throughout, we assume distinct server locations and unique nearest-server
choices, which hold almost surely for all metrics and resampled instances
considered below.

We define the rank of a Greedy state to be its number of unmatched servers.
At rank $m$, exactly $m$ servers are unmatched, and the next request moves the
process to rank $m-1$.
Let $S_m$ denote the set of servers remaining after Greedy has processed
the first $n-m$ requests, for $0\le m\le n$.  For $1\le m\le n$, write
$X^{(m)}=X_{n-m+1}$ for the next request arriving at rank $m$.

Let $G_n$ denote the total Euclidean matching cost incurred by Greedy.
Let $\OPT_n$ denote the minimum total Euclidean cost of a one-to-one
matching between all servers and requests, with all locations known in advance.
In the fully random model, we define the competitive ratio of Greedy
as the ratio of expected costs,
$\mathrm{CR}_n:=\E G_n/\E\OPT_n$, where both expectations are over
the independent server and request samples.

We will use the following two forms of the Efron--Stein inequality
to control count variances.

\begin{lemma}[\cite{steele1986efron,boucheron2005moment}]
\label{lem:efron-stein}
Let $Z=(Z_1,\ldots,Z_N)$ have independent coordinates.
\begin{enumerate}[label=\textup{(\roman*)}]
\item \textbf{Resampling.}
Let $f(Z)$ be a square-integrable real-valued function.
Let $Z^{(j)}$ be obtained from $Z$
by replacing $Z_j$ with an independent copy $Z_j'$, leaving all other
coordinates unchanged.
Then
\begin{equation}\label{eq:efron-stein-resampling}
  \Var(f(Z))\le\frac12\sum_{j=1}^{N}
       \E\bigl[(f(Z)-f(Z^{(j)}))^2\bigr].
\end{equation}
\item \textbf{Omitting one input.}
Let $f(Z)$ be a square-integrable real-valued function.
For each $j$, let $f^{(-j)}$ be any square-integrable real-valued
function of the inputs other than $Z_j$.
Then
\begin{equation}\label{eq:efron-stein-deletion}
  \Var(f(Z))\le\sum_{j=1}^{N}
       \E\bigl[(f(Z)-f^{(-j)})^2\bigr].
\end{equation}
\end{enumerate}
\end{lemma}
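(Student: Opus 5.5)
We prove the two parts of \Cref{lem:efron-stein} in order. Part (i) is the standard Efron--Stein inequality, proved via the martingale (Doob) decomposition. Part (ii) is then derived from part (i) by a variational characterization of conditional variance.

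For (i), set $\E_j[\cdot]:=\E[\cdot\mid Z_1,\ldots,Z_j]$ and $\Delta_j:=\E_j f-\E_{j-1}f$. Then
\[
f(Z)-\E f=\sum_{j=1}^N\Delta_j,
\]
and the increments are orthogonal, so $\Var(f)=\sum_j\E[\Delta_j^2]$. Let $\E^{(j)}$ denote integration over $Z_j$ alone, with the other coordinates held fixed. By independence of the coordinates,
\[
\E_{j-1}f=\E_j\bigl[\E^{(j)}f\bigr],
\qquad\text{so}\qquad
\Delta_j=\E_j\bigl[f-\E^{(j)}f\bigr].
\]
Conditional Jensen then gives $\E[\Delta_j^2]\le\E\bigl[(f-\E^{(j)}f)^2\bigr]$. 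The right side is the expected conditional variance of $f$ in the $j$-th coordinate. Since $Z_j$ and $Z_j'$ are i.i.d.\ given the other coordinates, this equals $\tfrac12\E\bigl[(f(Z)-f(Z^{(j)}))^2\bigr]$. Summing over $j$ yields \eqref{eq:efron-stein-resampling}.

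For (ii), we keep the same decomposition, $\Var(f)\le\sum_j\E\bigl[(f-\E^{(j)}f)^2\bigr]$. We then use the fact that, conditionally on the coordinates other than $Z_j$, the quantity $\E^{(j)}f$ is the best constant approximation of $f$ in $L^2$. Any $f^{(-j)}$ that does not depend on $Z_j$ is such a constant, so
\[
\E\bigl[(f-\E^{(j)}f)^2\bigr]\le\E\bigl[(f-f^{(-j)})^2\bigr].
\]
Summing over $j$ gives \eqref{eq:efron-stein-deletion}. This is even slightly stronger than needed, since the factor is $1$ rather than something larger.

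The only delicate point is justifying $\E_{j-1}f=\E_j[\E^{(j)}f]$, which is where independence of the coordinates is essential. We handle it by Fubini: write $f$ as a measurable function on the product space, and integrate out $Z_j,\ldots,Z_N$ in either order. Square integrability ensures that all the conditional expectations and Jensen steps are well defined.
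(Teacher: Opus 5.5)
Your proof is correct, and it is the standard argument behind the cited results. The paper does not reprove this lemma; it only cites Steele and Boucheron et al., where the inequality is obtained exactly as you do: Doob martingale decomposition, conditional Jensen, the identity $\E[(f-\E^{(j)}f)^2]=\tfrac12\E[(f(Z)-f(Z^{(j)}))^2]$, and, for (ii), the $L^2$-optimality of $\E^{(j)}f$ among functions of the coordinates other than $Z_j$.

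One small correction to your closing remark: the final bound in (ii) has exactly the stated factor $1$, so it is not stronger than needed. What is stronger than (ii) is your intermediate bound $\Var(f)\le\sum_j\E[(f-\E^{(j)}f)^2]$.
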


\subsection{The Periodic Cube (Flat Torus)}
\label{sec:torus-preliminaries}

Think of the unit cube with wraparound in every coordinate: crossing
one face brings us back through the opposite face.  This periodic cube is
called the \emph{flat torus}, denoted by $\mathbb T^d$.  We represent its
points by coordinates in $[0,1)^d$ and perform addition and subtraction
coordinatewise modulo one, keeping only the fractional part of each
coordinate.  The conventional notation $\mathbb T^d=\mathbb R^d/\mathbb Z^d$
expresses this same rule: points whose coordinates differ by integers
represent the same torus point.

Distance on the torus allows these wraparound shortcuts.  For example,
in one dimension the torus distance between $0.99$ and $0.01$ is $0.02$,
rather than their Euclidean distance $0.98$.  For a coordinate displacement
$a\in\mathbb R$, its shortest wraparound length is
$|a|_{\mathbb T}:=\min_{k\in\mathbb Z}|a-k|$.
Combining the shortest displacements in all coordinates gives the
torus distance between $x,y\in[0,1)^d$:
\[
  d_{\mathbb T}(x,y)
  :=\left(\sum_{i=1}^d|x_i-y_i|_{\mathbb T}^2\right)^{1/2}
   =\min_{z\in\mathbb Z^d}\|x-y+z\|_2.
\]
Each coordinate distance is at most $1/2$, and therefore
$\operatorname{diam}(\mathbb T^d)=\sqrt d/2$.

The formula $\min_{z\in\mathbb Z^d}\|x-y+z\|_2$ in the definition of
$d_{\mathbb T}(x,y)$ gives an equivalent picture:
repeat the cube throughout $\mathbb R^d$, placing a \emph{periodic copy}
$y+z$ of every point $y$ in each translated cube.  The torus distance is
the ordinary Euclidean distance from $x$ to the closest periodic copy of
$y$.  We will use both the \emph{periodic-cube viewpoint}, which keeps
coordinates in $[0,1)^d$ and uses wraparound distance, and the
\emph{periodic-lift viewpoint}, which uses all the periodic copies in
$\mathbb R^d$.

Sampling a uniform point on $\mathbb T^d$ simply means sampling each
coordinate independently and uniformly from $[0,1)$.  Uniform measure on
the torus is thus ordinary volume (Lebesgue measure) on this cube.
The half-open convention assigns every
point a unique representative, and cube boundaries have zero measure.

\paragraph{Translation symmetry and torus stationarity.}
The torus metric is invariant under a common translation of all input
points.  For $u\in\mathbb T^d$, define the torus translation
\[
  \tau_u(x):=x+u\pmod 1.
\]
The following fact states this translation invariance and its consequence
for expected server counts under uniform input.

\begin{fact}
\label{fact:torus-translation-stationarity}
The following properties hold.
\begin{enumerate}[label=\textup{(\arabic*)}]
\item For every $u\in\mathbb T^d$, the map $\tau_u$ preserves torus distances
and uniform volume.  Moreover, for every input
sequence whose Greedy choices are unique, simultaneously translating every
server and every request by $\tau_u$ translates every choice made by Greedy
by $\tau_u$.

\item \label{item:torus-station} For Greedy on $n$ iid uniform servers and $n$ iid uniform requests on
$\mathbb T^d$, sampled independently of the servers, let $B_m$ be the set
of unmatched servers at rank $m$.  For every $0\le m\le n$, the
distribution of $B_m$ is invariant under torus translations.
Consequently, for every measurable region $P\subseteq\mathbb T^d$,
\begin{equation}
\label{eq:torus-stationarity}
  \E|B_m\cap P|=m|P|.
\end{equation}
\end{enumerate}
\end{fact}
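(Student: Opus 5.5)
The plan is to prove part (1) directly from the definitions, then obtain part (2) by a pushforward argument that combines (1) with the translation invariance of the input law. For (1), fix $u$. Since $\tau_u(x)-\tau_u(y)\equiv x-y \pmod 1$ coordinatewise, we get $|(\tau_u x)_i-(\tau_u y)_i|_{\mathbb T}=|x_i-y_i|_{\mathbb T}$, and hence $d_{\mathbb T}(\tau_u x,\tau_u y)=d_{\mathbb T}(x,y)$. On each coordinate, $\tau_u$ is a piecewise translation of $[0,1)$ with at most two pieces. It therefore preserves one-dimensional Lebesgue measure, and by taking products it preserves uniform volume on $[0,1)^d$.

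For the statement about Greedy choices I will induct on the arrival index. Suppose that after $i-1$ requests the available set of the translated instance is $\tau_u$ of the available set of the original instance. This holds at $i=1$, where both sets are the full server sets. The request $\tau_u X_i$ then has distances to the translated available servers equal to the original distances, by isometry. The unique minimizer in the translated instance is therefore $\tau_u$ of the original minimizer, so the choice translates and the inductive invariant persists.

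For (2), define the deterministic map $\Phi_m$ that sends the input $(Y_1,\dots,Y_n,X_1,\dots,X_n)$ to the set $B_m$; this map is defined almost surely, where choices are unique. Part (1) gives the equivariance $\Phi_m(\tau_u\text{-input})=\tau_u\Phi_m(\text{input})$ almost surely. The input is iid uniform and $\tau_u$ preserves the uniform law, so the translated input has the same joint law as the original. Consequently $\tau_u B_m$ has the same law as $B_m$.

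For the expectation, the map $P\mapsto \E|B_m\cap P|$ is a finite measure $\mu$ on $\mathbb T^d$. Translation invariance of the law of $B_m$ makes $\mu$ translation invariant, since $\E|B_m\cap \tau_u^{-1}P|=\E|\tau_uB_m\cap P|=\E|B_m\cap P|$. On the compact group $\mathbb T^d$, every finite translation-invariant Borel measure is a constant multiple of Haar (Lebesgue) measure, so $\mu=c|\cdot|$. Taking $P=\mathbb T^d$ gives $c=\E|B_m|=m$, because exactly $m$ servers remain at rank $m$. This proves \eqref{eq:torus-stationarity}.

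As an alternative that avoids quoting uniqueness of Haar measure, I can average over $u$. Draw $U$ uniform and independent of the input. Then $\E|B_m\cap P|=\E|\tau_U B_m\cap P|=\E\sum_{y\in B_m}\Pp\{y+U\in P\}=\E\sum_{y\in B_m}|P|=m|P|$.

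The only delicate point is measurability together with the almost-sure uniqueness of the Greedy choices. I will handle it by noting that ties occur only on a null set, and that this null set is itself translation invariant.
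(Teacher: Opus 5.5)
Your proposal is correct and follows essentially the same route as the paper: an isometry plus induction over requests for (1), equivariance of the Greedy map combined with translation invariance of the iid uniform input law for the distributional invariance in (2), and then averaging over an independent uniform translation to obtain $\E|B_m\cap P|=m|P|$, which is exactly your ``alternative'' argument. Your primary Haar-uniqueness route is also valid, though less elementary, and your remark that the tie set is a translation-invariant null set is extra care that the paper leaves implicit.
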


Property~\ref{item:torus-station} determines expected server counts, without asserting independence of the
remaining servers.  We refer to this translation symmetry of expected server counts as \emph{torus
stationarity}.  The proof of Fact~\ref{fact:torus-translation-stationarity}
is deferred to
Appendix~\ref{app:proof-torus-translation}.

\paragraph{\normalcolor Random cuts and unwrapping.}

The torus gives us translation symmetry, but our matching problem
uses ordinary Euclidean distances in a cube.  To compare the two, we cut
open the periodic cube to obtain an ordinary cube.  In one dimension,
imagine choosing a point on a circle and opening it there to form an
interval; in higher dimensions, we do this in each coordinate.
Choosing the cuts uniformly at random lets us control the chance that
the resulting cube boundary separates two nearby torus points.

Independently of all inputs, choose the coordinate cut positions
\[
  U=(U_1,\ldots,U_d)\sim\operatorname{Unif}(\mathbb T^d).
\]
In coordinate $i$, we cut at $U_i$; the set $\{z:z_i=U_i\}$ is
called the \emph{seam} in that coordinate.  Each seam becomes a pair of
opposite faces of the opened cube.  We assign new coordinates by measuring
from the cut position, modulo one, using the \emph{unwrapping map}
\[
  \kappa_U(z):=(z-U)\pmod 1\in[0,1)^d
\]
and measure ordinary Euclidean distance between the resulting
points using the \emph{cut metric}
\[
  d_U(z,z'):=\|\kappa_U(z)-\kappa_U(z')\|_2.
\]
For example, cutting the unit circle at $0$ places $0.99$ and $0.01$
near opposite ends of the interval, increasing their distance from $0.02$
to $0.98$.  Cutting at $0.5$ instead maps them to $0.49$ and $0.51$, so
their distance remains $0.02$.

To quantify when a cut changes a distance, write the length of a
shortest arc between the $i$th coordinates of $z,z'\in\mathbb T^d$ as
\[
  \Delta_i(z,z'):=|z_i-z_i'|_{\mathbb T}.
\]

The following fact collects the properties needed to compare Greedy on
the torus and in the cube.  Items (i), (iii), and (v) are standard
properties of uniform torus measure, quotient metrics, and randomly
shifted grids, respectively.  The remaining items (ii), (iv), and (vi)
follow immediately from these properties and the definition of unwrapping.
In particular, item (vi) will let us control disagreements between the
two Greedy processes.

\begin{fact}
\label{lem:random-cut}
The following statements hold.
\begin{enumerate}[label=\textup{(\roman*)}]
\item For every fixed cut position $u\in\mathbb T^d$, the map
  $\kappa_u$ sends the torus onto $[0,1)^d$ one-to-one and preserves
  uniform volume.
\item \label{item:fact-cut-item-ii} Even after conditioning on $U$, applying $\kappa_U$ to iid
  uniform torus inputs gives independent uniform points in $[0,1)^d$.
\item Unwrapping cannot decrease distances: for every
  $z,z'\in\mathbb T^d$,
  \begin{equation}\label{eq:metric-domination}
    d_{\mathbb T}(z,z')\le d_U(z,z').
  \end{equation}
\item \label{item:fact-cut-item-iv} View each coordinate as a point on the unit circle, and
  choose a shortest arc joining $z_i$ to $z_i'$.
  Say that this arc is \emph{cut} if the coordinate cut point $U_i$ lies
  on it, including either endpoint.
  If none of the chosen arcs is cut, unwrapping preserves the distance:
  $d_U(z,z')=d_{\mathbb T}(z,z')$.
\item \label{item:fact-cut-item-v} For fixed $z,z'$, the probability that at least one chosen
  arc in \ref{item:fact-cut-item-iv} is cut satisfies
  \[
    \Pp_U\{\text{at least one chosen arc is cut}\}
    \le\sum_{i=1}^d\Delta_i(z,z')
    =O\!\left(d_{\mathbb T}(z,z')\right).
  \]
\item \label{item:fact-cut-item-vi} Let $S$ be a nonempty finite subset of $\mathbb T^d$, and suppose that
  $x$ has a unique nearest server $b\in S$ under the torus metric.
  Choose shortest coordinate arcs from $x$ to $b$ as in \ref{item:fact-cut-item-iv}.
  If none is cut, then $b$ is also the unique nearest server under the
  cut metric $d_U$.
\end{enumerate}
\end{fact}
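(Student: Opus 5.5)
We verify the six items in order, working one coordinate at a time wherever possible. The only items needing care are (iv) and (vi).

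For (i), in each coordinate the map $t\mapsto (t-u_i)\bmod 1$ is a bijection of $[0,1)$ that is piecewise a translation, with two pieces. Hence it preserves Lebesgue measure. The product map $\kappa_u$ is then a bijection of $[0,1)^d$ preserving $d$-dimensional volume.

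For (ii), fix $U=u$. Because $U$ is independent of the inputs, the inputs are still iid uniform under this conditioning. Applying the measure-preserving bijection from (i) to each input coordinatewise keeps them independent and uniform.

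For (iii), write $\kappa_U(z)-\kappa_U(z')=z-z'+k$ for some $k\in\mathbb Z^d$. Then $d_U(z,z')=\|z-z'+k\|_2\ge \min_{k}\|z-z'+k\|_2=d_{\mathbb T}(z,z')$.

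For (iv), fix a coordinate $i$ whose chosen shortest arc avoids $U_i$, including its endpoints. Unwrapping at $U_i$ maps the circle minus $\{U_i\}$ isometrically, in arc length, onto the open interval $(0,1)$ of the line. The arc lies inside this set, so its image is a segment of the same length. This gives $|\kappa_U(z)_i-\kappa_U(z')_i|=\Delta_i(z,z')$. Summing squares over $i$ yields $d_U(z,z')=d_{\mathbb T}(z,z')$.

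For (v), the chosen arc in coordinate $i$ is a closed set of length $\Delta_i(z,z')$, and $U_i$ is uniform on the circle. So the arc is cut with probability exactly $\Delta_i(z,z')$. A union bound over $i$ gives the sum. Cauchy--Schwarz then gives $\sum_i\Delta_i\le\sqrt d\,d_{\mathbb T}(z,z')=O(d_{\mathbb T}(z,z'))$ for fixed $d$.

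For (vi), suppose no chosen arc from $x$ to $b$ is cut. By (iv), $d_U(x,b)=d_{\mathbb T}(x,b)$. For any other $s\in S$, (iii) and uniqueness of the torus nearest server give
\[
d_U(x,s)\ge d_{\mathbb T}(x,s)>d_{\mathbb T}(x,b)=d_U(x,b).
\]
So $b$ is the unique $d_U$-nearest server.

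The main obstacle, though a mild one, is the bookkeeping in (iv). When $\Delta_i=1/2$ there are two shortest arcs, so the statement must refer to the one chosen. Also, the endpoint convention ("including either endpoint") is exactly what rules out a seam passing through $z_i$ or $z'_i$, where the half-open unwrapping would otherwise send a point to $0$ instead of near $1$. Once these cases are excluded, the isometry of the cut circle onto $(0,1)$ is immediate.
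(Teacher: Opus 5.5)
Your proposal is correct and takes essentially the same route as the paper's proof in Appendix~\ref{app:proof-random-cut}: (i)--(ii) use translation modulo one, (iii) the integer-shift comparison, (iv) the fact that uncut arcs are realized by the unwrapped coordinates, (v) the union bound plus Cauchy--Schwarz, and (vi) the chain $d_U(x,s)\ge d_{\mathbb T}(x,s)>d_{\mathbb T}(x,b)=d_U(x,b)$. Your explicit isometry of the cut circle onto $(0,1)$ and your remark on the endpoint convention spell out what the paper asserts in one line for (iv).
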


The proof of Fact~\ref{lem:random-cut}, including references for the
standard items, is deferred to
Appendix~\ref{app:proof-random-cut}.

\paragraph{Randomly shifted grids.}

We will use grids to count matches whose endpoints lie in different
cells.  A fixed grid can separate two points even when they are very
close, simply because a cell boundary lies between them.  Moving the
grid by a uniformly random amount makes this unlikely for any fixed
pair that is close compared with the cell size.  Later, this will let
us lower-bound total matching cost by counting such matches.

To construct the grid, fix a desired cell size $0<h\le1$ and use side
length $s:=\lceil1/h\rceil^{-1}\in[h/2,h]$.  
Independently in each coordinate $i$, choose $V_i$ uniformly from $[0,s)$ and move
all cuts in that coordinate by $V_i$, keeping the points fixed.
The vector $V=(V_1,\ldots,V_d)$ is the \emph{grid shift}.
In Euclidean space the grid extends in every direction.
On the torus we move the cuts with wraparound.  We use half-open cells
and choose the shift independently of the inputs.

The following standard separation estimate is part of the random-dissection
method used in geometric approximation algorithms; see
\cite[Sections~2.2--2.3]{DBLP:journals/jacm/Arora98} and the explicit grid bound in
\cite[Section~1]{DBLP:journals/siamcomp/AigerKS14}.
It bounds the chance of separating two fixed points by a constant times
their distance divided by the cell size.

\begin{lemma}
\label{lem:random-grid-separation}
For the random grid just defined, let $x,y$ be two fixed points in
either $\mathbb T^d$ or $\mathbb R^d$, and let $r$ be their distance
under the torus or Euclidean metric, respectively.  Then
\begin{equation}\label{eq:random-grid-separation}
\Pp_V\{x\text{ and }y\text{ lie in different cells}\}
  =O\!\left(\min\left\{1,\frac rh\right\}\right).
\end{equation}
\end{lemma}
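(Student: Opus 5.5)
The plan is to reduce the statement to one coordinate by a union bound, and then settle the one-dimensional case by a direct computation with the uniform shift $V_i$. Two points $x,y$ lie in different cells exactly when, in at least one coordinate $i$, a grid cut in that coordinate falls between $x_i$ and $y_i$. In Euclidean space, "between" means on the segment joining $x_i$ and $y_i$. On the torus, it means on the shortest arc of length $\Delta_i(x,y)=|x_i-y_i|_{\mathbb T}$ joining them; if that arc contains no cut, both coordinates lie in the same half-open cell interval. So
\[
\Pp_V\{x,y\text{ in different cells}\}\le\sum_{i=1}^d\Pp_{V_i}\{\text{a coordinate-}i\text{ cut separates }x_i,y_i\}.
\]
The coordinate cuts depend only on $V_i$, and the $V_i$ are independent of the fixed points, so each term can be handled separately.

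For a single coordinate, let $a_i=|x_i-y_i|$ in the Euclidean case or $a_i=\Delta_i(x,y)$ in the torus case. The cuts form the periodic set $V_i+s\mathbb Z$, reduced modulo one on the torus. On the torus this set is consistent because $1/s=\lceil 1/h\rceil$ is an integer, so the cuts tile the circle exactly; this is the reason for choosing $s=\lceil1/h\rceil^{-1}$. Since $V_i$ is uniform on $[0,s)$, the cut set is a uniformly shifted lattice of spacing $s$. An interval of length $a_i$ therefore contains a cut with probability exactly $\min\{1,a_i/s\}$: if $a_i\ge s$ it always contains one, and otherwise the position of the interval relative to the lattice is uniform modulo $s$. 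The torus case works the same way because $x_i$ and $y_i$ are fixed and only the cut positions are random.

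Summing over coordinates and using $s\ge h/2$ gives a bound of $\sum_i\min\{1,2a_i/h\}$. Since $a_i\le r$ in both metrics (for the torus, $a_i=\Delta_i\le d_{\mathbb T}(x,y)=r$), this is at most $d\min\{1,2r/h\}=O(\min\{1,r/h\})$, with the constant depending only on the fixed dimension $d$. Trivially the probability is also at most $1$.

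I expect no genuine obstacle here. The only point needing care is the torus wraparound: the cut lattice must be exactly periodic modulo one, which the choice $s=1/\lceil1/h\rceil$ guarantees, and separation on the torus must be identified with a cut on the shortest arc. The half-open cell convention ensures that a point lying exactly on a cut causes no ambiguity, and such events have probability zero in any case.
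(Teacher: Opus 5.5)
Your proof is correct and is essentially the paper's argument made self-contained. The paper cites the Euclidean bound $\min\{1,\sqrt d\,\|x-y\|_2/s\}$ from Aiger--Kaplan--Sharir, which is exactly your coordinatewise union bound, and reaches the torus by lifting $x,y$ to periodic copies at Euclidean distance $d_{\mathbb T}(x,y)$, using $1/s\in\mathbb Z$ so that the torus grid is the Euclidean grid reduced modulo one; this is the $d$-dimensional form of your shortest-arc argument. One small overstatement: on the torus, ``different cells'' is not equivalent to ``a cut on the shortest arc'' (when $s=1$ each coordinate circle is a single cell), but you only use the valid implication that no cut on the shortest arc forces the same cell.
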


Appendix~\ref{app:proof-random-grid-separation} records the source of
Lemma~\ref{lem:random-grid-separation} and its application to the torus grid.

\subsection{Periodic Voronoi Geometry}
\label{sec:periodic-voronoi}

This subsection establishes deterministic properties of periodic Voronoi
diagrams in every dimension.  The Voronoi cell of a server consists of
the points for which it is
nearest.  Deleting a server $s$ changes the nearest choice only within
its cell.  In the periodic-lift viewpoint, fix one copy of $s$.  For each point $x$ in the Voronoi cell of $s$, the replacement vector corresponding to $x$ runs
from this copy of $s$ to the periodic copy of a surviving server that becomes
nearest to $x$.  The two estimates in
Lemma~\ref{lem:periodic-voronoi-deletion} show that the integral of these
replacement vectors over the cell is zero, and that the integral of their squared
lengths is controlled by the increase in average squared nearest-server
distance.

For a finite nonempty $S\subset\mathbb T^d$, define the distance to $S$
and the functional $Q$\footnote{The functional $Q$ is the periodic version,
with uniform density, of the quadratic quantization energy studied in
the theory of centroidal Voronoi
tessellations~\cite[Section~3]{du1999centroidal}.  It measures the mean
squared error when each location is represented by its nearest server.
Here the configuration $S$ is arbitrary; no assumption that servers
are the centroids of their cells is imposed.} by
\[
  d_{\mathbb T}(x,S):=\min_{s\in S}d_{\mathbb T}(x,s),
  \qquad
  Q(S):=\int_{\mathbb T^d}d_{\mathbb T}(x,S)^2\,\mathrm{d}x,
\]
and, for $s\in S$ with $|S|\ge2$, define the increase in $Q$ caused by
deleting server $s$ from $S$ as
\[
  \Delta_sQ(S):=Q(S\setminus\{s\})-Q(S).
\]
Fix such a state $S$ and server $s$, using representatives in $[0,1)^d$.
Translate a copy of $s$ to the origin and compare it with every periodic
copy $t-s+z$ of every server.  Its Euclidean Voronoi cell and volume are respectively
\[
  \mathcal V_s:=\left\{x\in\mathbb R^d:
    \|x\|_2\le\|x-(t-s+z)\|_2
    \text{ for all }t\in S,\ z\in\mathbb Z^d\right\},
  \qquad v_s:=|\mathcal V_s|.
\]
The cell $\mathcal V_s$ is a bounded convex polytope contained in
$[-1/2,1/2]^d$.  Under $x\mapsto s+x\pmod1$, it represents the torus
Voronoi cell of $s$, up to boundaries of zero volume.  In particular,
$v_s$ is the probability that a uniform request selects $s$.  These
geometric properties are verified in the proof of
Lemma~\ref{lem:periodic-voronoi-deletion}.

For $x\in\mathcal V_s$, let $T_s(x)$ be the location of a nearest
periodic copy of a surviving server after $s$ is deleted.  Equivalently,
write $T_s(x)=t-s+z$, where the surviving server
$t\in S\setminus\{s\}$ and the integer translation $z\in\mathbb Z^d$
are chosen to attain the minimum in
\[
  \|x-T_s(x)\|_2
  =\min_{\substack{t\in S\setminus\{s\}\\z\in\mathbb Z^d}}
       \|x-(t-s+z)\|_2.
\]
Since the deleted copy is at the origin, $T_s(x)$ is also the vector
from it to the replacement copy.  Ties can be resolved by any fixed
measurable rule; their boundaries have zero volume and do not affect
the integrals.  The objects $\mathcal V_s,v_s,T_s$ depend on $S$;
we display this dependence only when comparing different states.

\begin{lemma}
\label{lem:periodic-voronoi-deletion}
For every finite set $S\subset\mathbb T^d$ with $|S|\ge2$ and every $s\in S$,
\[
  \int_{\mathcal V_s}T_s(x)\,\mathrm{d}x=0,
  \qquad
  \int_{\mathcal V_s}\|T_s(x)\|_2^2\,\mathrm{d}x\le(d+1)\Delta_sQ(S).
\]
\end{lemma}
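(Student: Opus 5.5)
The plan is to encode the deletion geometry in a single periodic scalar function and get both estimates from the divergence theorem. Work in the periodic-lift viewpoint with the chosen copy of $s$ at the origin. Let $\phi(x):=\min_{t\in S\setminus\{s\},\,z\in\mathbb Z^d}\|x-(t-s+z)\|_2^2$ be the squared distance to the surviving copies, and let $\phi_S(x)$ be the squared distance to all copies of all servers. Both are $\mathbb Z^d$-periodic, and each is a minimum of finitely many smooth quadratics on bounded sets, hence locally Lipschitz. Define
\[
  g(x):=\tfrac12\bigl(\phi(x)-\phi_S(x)\bigr)\ge 0 .
\]
This $g$ is periodic and Lipschitz, and it vanishes outside the periodic copies of $\mathcal V_s$ (up to null sets). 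On $\mathcal V_s$ we have $\phi_S(x)=\|x\|_2^2$ and $\phi(x)=\|x-T_s(x)\|_2^2$, so a.e.\ on $\mathcal V_s$
\[
  \nabla g(x)=(x-T_s(x))-x=-T_s(x),
  \qquad
  2g(x)=\|T_s(x)\|_2^2-2\langle x,T_s(x)\rangle .
\]
In particular $\Delta_sQ(S)=\int_{\mathcal V_s}2g$, because $\mathcal V_s$ represents the torus Voronoi cell of $s$.

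\textbf{First identity.} Since $g$ is periodic and Lipschitz, $\int_{[0,1)^d}\nabla g=0$: integrate each partial derivative along coordinate lines and use periodicity. Now $\nabla g=0$ a.e.\ off the copies of $\mathcal V_s$. The translation $x\mapsto x+z$ maps these copies onto one another, and a fundamental domain meets them in exactly one copy up to null sets. Hence $0=\int_{\mathcal V_s}\nabla g=-\int_{\mathcal V_s}T_s$.

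\textbf{Second inequality.} Use the divergence theorem on the convex polytope $\mathcal V_s$, which contains the origin, applied to the vector field $g(x)\,x$:
\[
  \int_{\mathcal V_s}\langle x,\nabla g\rangle\,\mathrm dx
  =\int_{\partial\mathcal V_s} g\,\langle x,n\rangle\,\mathrm d\sigma-d\int_{\mathcal V_s}g\,\mathrm dx .
\]
Every facet of $\mathcal V_s$ is a bisector between $0$ and some copy $c$, and on it $\langle x,n\rangle=\|c\|_2/2>0$. Together with $g\ge0$, this makes the boundary term nonnegative. Therefore
\[
  \int_{\mathcal V_s}\langle x,T_s\rangle=-\int_{\mathcal V_s}\langle x,\nabla g\rangle\le d\int_{\mathcal V_s}g=\tfrac d2\,\Delta_sQ(S).
\]
Integrating the pointwise identity $2g=\|T_s\|^2-2\langle x,T_s\rangle$ over $\mathcal V_s$ then gives
\[
  \Delta_sQ(S)=\int_{\mathcal V_s}\|T_s\|_2^2-2\int_{\mathcal V_s}\langle x,T_s\rangle\ge\int_{\mathcal V_s}\|T_s\|_2^2-d\,\Delta_sQ(S),
\]
which rearranges to the claimed bound $\int_{\mathcal V_s}\|T_s\|_2^2\le(d+1)\Delta_sQ(S)$.

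\textbf{Main obstacle.} The main obstacle is the geometric bookkeeping rather than the analysis. Several facts must be checked carefully:
\begin{itemize}
\item $\mathcal V_s$ is a bounded convex polytope inside $[-1/2,1/2]^d$, because the copies of $s$ at $\pm e_i$ already confine it.
\item The copies of $\mathcal V_s$ tile the torus Voronoi cell of $s$ exactly once modulo $\mathbb Z^d$.
\item $g$ is Lipschitz, so that the a.e.\ gradient formulas and the divergence theorem apply to a merely Lipschitz $g$ on a polytope. One route is to approximate $g$ by smooth functions, or to apply the divergence theorem piece by piece on the polyhedral pieces where $T_s$ is constant.
\item Faces of $\mathcal V_s$ shared with the cell of another copy of $s$ itself cause no trouble. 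There $g$ need not vanish, but only the sign of the boundary term was used, so these faces are harmless.
\end{itemize}
The hypothesis $|S|\ge2$ guarantees that $\phi$ is finite.
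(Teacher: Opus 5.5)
Your proof is correct and is essentially the paper's argument: your $g$ is the paper's $\psi_s/2$, and the second-moment bound comes from the same divergence identity for $g\,x$ on $\mathcal V_s$, with the boundary term discarded because $g\ge0$ and $x\cdot\nu\ge0$. The only difference is the zero-drift identity: you integrate $\nabla g$ of the globally periodic Lipschitz potential over a period and unfold over the copies of $\mathcal V_s$, whereas the paper applies the divergence theorem on $\mathcal V_s$ and cancels the faces shared with copies of $s$ in pairs $F_z$, $F_{-z}=F_z-z$; this is the same periodic cancellation, and your version avoids the face-pairing bookkeeping.
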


The proof of Lemma~\ref{lem:periodic-voronoi-deletion} is deferred to
Appendix~\ref{app:proof-periodic-deletion}.

\begin{remark}
The first identity in Lemma~\ref{lem:periodic-voronoi-deletion} is a
periodic counterpart of the classical vector identity for Dirichlet
tessellations~\cite{sibson1980vector}.  To state the connection, consider
distinct sites $q,q_1,\ldots,q_k\in\mathbb R^d$ such that the Voronoi
cell $P$ of $q$ is bounded.  Let $P_j\subseteq P$ be the region assigned
to $q_j$ after deleting $q$.  The classical identity states that
\[
  \sum_{j=1}^k |P_j|(q_j-q)=0.
\]
Thus the volume fractions $\lambda_j:=|P_j|/|P|$ satisfy
$\sum_j\lambda_j=1$ and $\sum_j\lambda_jq_j=q$.  These are the weights
of natural-neighbor interpolation: for every affine function $f$,
$\sum_j\lambda_j f(q_j)=f(q)$.

In the periodic setting, deleting $s$ removes every copy of $s$, not
only the copy at the origin.  In particular, the replacement $T_s(x)$
must belong to a different torus server, even if another copy of $s$
would be the second-nearest site in the original Euclidean diagram.
The single-site deletion identity therefore does not apply directly to
this replacement map.  The proof of
Lemma~\ref{lem:periodic-voronoi-deletion} accounts for simultaneous
deletion by pairing faces shared with copies of $s$.  This is the additional
boundary argument needed to obtain the same zero-vector integral on
the torus.
\end{remark}

\section{Properties of the Greedy Process}
\label{sec:greedy-properties}

This section presents several properties of the Greedy process that
control server counts, matching costs, and the effect of changing the
underlying metric.  We begin with a standard stability property: changing one available server
changes at most one remaining server after any common sequence of requests.
Together with the Efron--Stein resampling bound
\eqref{eq:efron-stein-resampling}, this bounds the sum of the
server-count variances over any partition.  On the torus, translation
symmetry also determines the mean server counts, allowing us to bound the tail
and mean of Greedy's matching cost at every rank.  Finally, we compare
Greedy on the torus with Greedy after a random cut opens it into a cube.
The established cost bounds control how often the cut changes a match, and
hence how many unmatched servers differ between the torus and cut
Greedy processes.

\subsection{Stability Under Input Changes}
\label{sec:stability-variance}

Two finite point configurations $\Sigma$ and $\Sigma'$ of the same size
differ by at most one \emph{point replacement} if $|\Sigma\setminus\Sigma'|=|\Sigma'\setminus\Sigma|\le1$.
Although changing one initial server may alter many later matches,
the Greedy processes before and after this change continue to differ
in at most one available server.
We use the following
standard coupling property established
in~\cite[proof of Lemma~13(1)]{DBLP:conf/sigecom/BalkanskiFP23}.

\begin{lemma}[\cite{DBLP:conf/sigecom/BalkanskiFP23}]\label{lem:greedy-stability}
Fix a metric and two finite server configurations of the same size that
differ by at most one point replacement.
Consider the two Greedy processes starting from these configurations,
with this metric and the same request sequence.
Assume that every nearest-server choice in each process is unique.
After each request, the remaining configurations
still differ by at most one point replacement.
\end{lemma}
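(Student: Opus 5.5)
We argue by induction on the number of requests processed, carrying a slightly stronger invariant. Let the two available configurations before some request be $A$ and $A'$, and suppose $A=A'$ or $A=C\cup\{a\}$, $A'=C\cup\{a'\}$ with $a\ne a'$ and common part $C$. If $A=A'$, both processes make the same unique nearest choice and remain equal, so it suffices to treat the differing case. Let $x$ be the next request, $b$ its unique nearest server in $A$, and $b'$ its unique nearest server in $A'$.

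\textbf{Case analysis.} There are three cases, according to where $b$ and $b'$ lie.
\begin{enumerate}[label=\textup{(\alph*)}]
\item If $b\in C$ and $b'\in C$, then $b=b'$. Indeed, $b$ is the nearest point of $C\cup\{a\}$, so it is the nearest point of $C$; the same holds for $b'$, and the nearest point of $C$ is unique (the problem assumes unique nearest choices). Both processes remove the same server, leaving $C\setminus\{b\}$ plus $a$ and $a'$ respectively, which is again one replacement.
\item If $b=a$ and $b'=a'$, both processes remove their differing server. The new configurations coincide (both equal $C$), so they differ by zero replacements.
\item If $b=a$ and $b'\in C$ (the symmetric case is identical), the first process leaves $C$ and the second leaves $(C\setminus\{b'\})\cup\{a'\}$. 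These differ by the single replacement $b'\leftrightarrow a'$, and $b'\ne a'$.
\end{enumerate}
In every case the invariant "differ by at most one point replacement" is preserved, with sizes equal since each process removes exactly one server. Once the configurations coincide they stay equal forever, so induction over the request sequence gives the claim.

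\textbf{Main obstacle.} The only subtle point is case (a). A careless argument might worry that the extra server $a$ could be nearer than $b$ in one process but not the other. The case split on membership in $C$ handles this, because $b\in C$ forces $d(x,b)<d(x,a)$, and so $b$ is the minimizer over $C$. The hypothesis of unique nearest-server choices is exactly what is needed to conclude $b=b'$. No property of the metric beyond being a distance function is used, which matches the generality of the statement.
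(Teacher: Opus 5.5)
Your proof is correct: the induction with the case split on whether each process's chosen server lies in the common part $C$ or is the differing server is the standard argument behind the cited result of Balkanski--Faenza--P\'erivier, which the paper invokes without giving a proof of its own. One step in case (a) needs better justification. Uniqueness of the nearest point of $C$ is not itself a hypothesis, since $C$ is not a configuration of either process. It follows because $b$ is the unique nearest server in $A\supseteq C$ and $b'$ is the unique nearest server in $A'\supseteq C$: if $b\ne b'$, you would get both $d(x,b)<d(x,b')$ and $d(x,b')<d(x,b)$, a contradiction.
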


The following consequence of Lemma~\ref{lem:greedy-stability} applies to a
change in one input and records its effect on server counts.
For a measurable set $P$ and a finite point configuration $\Sigma$, write $N_P(\Sigma):=|\Sigma\cap P|$.

\begin{corollary}\label{lem:swap}
Fix $1\le m\le n$ and a metric.
Consider two Greedy processes under this same
metric, obtained from one another by replacing one initial server or one of
the requests $X_1,\ldots,X_{n-m}$.
Suppose that the initial server locations are pairwise distinct
and every nearest-neighbor choice is unique in each of these processes.
Then the configurations of the $m$ remaining servers
differ by at most one point replacement.  Consequently, for every finite
measurable partition $\mathcal P$, the two configurations $\Sigma_m$ and
$\Sigma_m'$ satisfy
\[
  \sum_{P\in\mathcal P}
  \bigl(N_P(\Sigma_m)-N_P(\Sigma_m')\bigr)^2\le2.
\]
\end{corollary}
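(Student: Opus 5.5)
Two cases, depending on whether the change is to an initial server or to a past request; in both, the aim is to reduce to Lemma~\ref{lem:greedy-stability} and then do a short counting step.

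\emph{Server replacement.} If one initial server is replaced, the two initial configurations differ by at most one point replacement. The request sequence is the same in both processes. Lemma~\ref{lem:greedy-stability} then applies directly after each of the requests $X_1,\ldots,X_{n-m}$. So at rank $m$ the remaining configurations $\Sigma_m$ and $\Sigma_m'$ differ by at most one point replacement.

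\emph{Request replacement.} Suppose request $X_u$ with $u\le n-m$ is replaced by $X_u'$. The two processes agree up to rank $n-u+1$, so they share the configuration $S$ just before time $u$. Request $X_u$ removes some server $b\in S$, and $X_u'$ removes some $b'\in S$. The resulting configurations $S\setminus\{b\}$ and $S\setminus\{b'\}$ have the same size and differ by at most one point replacement. From then on, both processes see the identical requests $X_{u+1},\ldots,X_{n-m}$. Applying Lemma~\ref{lem:greedy-stability} from time $u+1$ onward gives the claim at rank $m$. The uniqueness hypotheses of that lemma are exactly the assumed distinctness and uniqueness of choices. The only slightly delicate point is this re-anchoring of the lemma at the intermediate time $u$; I expect no real obstacle.

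\emph{Counting.} Write $\Sigma_m\setminus\Sigma_m'=\{a\}$ and $\Sigma_m'\setminus\Sigma_m=\{a'\}$, or both sets empty. Then for every cell $P$,
\[
N_P(\Sigma_m)-N_P(\Sigma_m')=\mathbf 1\{a\in P\}-\mathbf 1\{a'\in P\}.
\]
The partition $\mathcal P$ is finite, so $a$ lies in exactly one cell and $a'$ lies in exactly one cell.
\begin{itemize}
\item If $a$ and $a'$ lie in the same cell, or both sets are empty, every difference is $0$ and the sum is $0$.
\item Otherwise exactly two cells have difference $\pm1$, so the sum of squares is $2$.
\end{itemize}
In either case $\sum_{P\in\mathcal P}\bigl(N_P(\Sigma_m)-N_P(\Sigma_m')\bigr)^2\le 2$.
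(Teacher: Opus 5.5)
Your proof is correct and follows the paper's route. The paper gives no separate proof; it presents this corollary as an immediate consequence of Lemma~\ref{lem:greedy-stability}, and your two cases (applying the lemma from the start for a server swap, and re-anchoring it at the common configuration just before slot $u$ for a request swap) together with the two-cell counting step are the intended argument. One cosmetic point: $a$ and $a'$ each lie in exactly one cell because $\mathcal P$ is a partition, not because it is finite.
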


\subsection{Matching Costs on the Torus}
\label{sec:torus-matching-distances}

Recall the torus configuration $B_m$ from
Fact~\ref{fact:torus-translation-stationarity}.
For $1\le m\le n$, let $D_m^{\mathbb T}$ be Greedy's matching cost
for the request $X^{(m)}$, equal to its distance to the chosen server
under the torus metric:
\[
  D_m^{\mathbb T}
  :=d_{\mathbb T}(X^{(m)},B_m)
  :=\min_{b\in B_m}d_{\mathbb T}(X^{(m)},b).
\]
We will use the rank-dependent length scale
\begin{equation}\label{eq:intro-rank-scale}
  \ell_m
  :=\min\left\{1,\left(\frac{n}{m^2}\right)^{1/d}\right\}.
\end{equation}
This scale will serve as the cutoff when integrating the tail bound.

The next lemma bounds the probability of a long match at each rank
and the expected total matching cost.  We write the total-cost scale as
\begin{equation}\label{eq:rank-cutoff}
  M_n:=
  \begin{cases}
    \sqrt n, & d=1,\\
    \sqrt n\,\log n, & d=2,\\
    n^{1-1/d}, & d\ge3.
  \end{cases}
\end{equation}

\begin{lemma}
\label{prop:torus-edge}
For every $1\le m\le n$ and $0<r\le1$,
\begin{equation}\label{eq:torus-tail}
\Pp\{D_m^{\mathbb T}>\sqrt d\,r\}
  =O\!\left(\min\left\{1,\frac{n}{m^2r^d}\right\}\right).
\end{equation}
The expected total matching cost satisfies
\begin{equation}\label{eq:torus-total-one}
\sum_{m=1}^n\E D_m^{\mathbb T}=O(M_n).
\end{equation}
\end{lemma}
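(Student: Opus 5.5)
The tail bound \eqref{eq:torus-tail} follows from a grid, Chebyshev, and Efron--Stein; the total-cost bound \eqref{eq:torus-total-one} then follows by integrating the tail over $r$ and summing over ranks.

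\textbf{Step 1: reduce a long match to an empty cell.} Fix $m$ and $r\in(0,1]$. Partition $\mathbb T^d$ into a grid $\mathcal P$ of congruent half-open cubes of side $s:=\lceil 1/r\rceil^{-1}\in[r/2,r]$; each cell has diameter at most $\sqrt d\,r$ in the torus metric. If $X^{(m)}$ lies in a cell $P$ with $N_P(B_m)\ge1$, then $D_m^{\mathbb T}\le\sqrt d\,r$. Since $X^{(m)}$ is uniform and independent of $B_m$,
\[
  \Pp\{D_m^{\mathbb T}>\sqrt d\,r\}\le\sum_{P\in\mathcal P}|P|\,\Pp\{N_P(B_m)=0\}.
\]

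\textbf{Step 2: bound each empty-cell probability.} By Chebyshev and the torus stationarity \eqref{eq:torus-stationarity}, $\E N_P(B_m)=m|P|$, so
\[
  \Pp\{N_P(B_m)=0\}\le\frac{\Var N_P(B_m)}{(m|P|)^2}.
\]
Plugging in and using $|P|=s^d$ gives
\[
  \Pp\{D_m^{\mathbb T}>\sqrt d\,r\}\le\frac{1}{m^2s^d}\sum_{P}\Var N_P(B_m).
\]

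\textbf{Step 3: aggregate variance.} Apply the resampling form \eqref{eq:efron-stein-resampling} to each function $N_P(B_m)$ and sum over $P$. Here $B_m$ depends only on the $n$ servers and the requests $X_1,\dots,X_{n-m}$, so there are $2n-m$ coordinates. By Corollary~\ref{lem:swap}, under the torus metric each resampling changes $\sum_P(N_P-N_P')^2$ by at most $2$, almost surely. Hence
\[
  \sum_P\Var N_P(B_m)\le\tfrac12(2n-m)\cdot 2\le 2n.
\]
Combining with Step 2 and $s\ge r/2$ gives $\Pp\{D_m^{\mathbb T}>\sqrt d\,r\}\le 2^{d+1}n/(m^2r^d)$, which together with the trivial bound $1$ is \eqref{eq:torus-tail}.

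\textbf{Step 4: integrate the tail.} Since $D_m^{\mathbb T}\le\sqrt d/2$,
\[
  \E D_m^{\mathbb T}=\sqrt d\int_0^{1/2}\Pp\{D_m^{\mathbb T}>\sqrt d\,r\}\,dr
  \le \sqrt d\Bigl(\ell_m+\int_{\ell_m}^{1}\min\Bigl\{1,\frac{Cn}{m^2r^d}\Bigr\}\,dr\Bigr),
\]
where the split at $\ell_m$ uses the cutoff \eqref{eq:intro-rank-scale}.

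\textbf{Step 5: sum over ranks.} Bound the integral over $[\ell_m,1]$ by dimension.
\begin{itemize}
\item For $d\ge2$ and $\ell_m<1$, the integral is at most $\frac{n}{m^2}\int_{\ell_m}^\infty r^{-d}\,dr=O(\ell_m)$ when $d\ge2$ is handled with care: for $d\ge3$ (and $d=2$ up to the logarithm below) this gives $\E D_m^{\mathbb T}=O(\ell_m)$.
\item For $d=1$, the integral is $\frac{n}{m^2}\log(1/\ell_m)$. Summing over $m\ge\sqrt n$, the terms $\frac{n}{m^2}\log(m^2/n)$ give $O(\sqrt n)$.
\end{itemize}
It then remains to compute $\sum_m\ell_m$. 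Since $\ell_m=1$ for $m\le\sqrt n$, those ranks contribute $O(\sqrt n)$. For $m>\sqrt n$ the sum is $n^{1/d}\sum_{m>\sqrt n}m^{-2/d}$, which is:
\begin{itemize}
\item $O(\sqrt n)$ for $d=1$;
\item $\sqrt n\sum_{m\le n}m^{-1}=O(\sqrt n\log n)$ for $d=2$;
\item $n^{1/d}\,n^{1-2/d}=n^{1-1/d}$ for $d\ge3$, which dominates $\sqrt n$.
\end{itemize}
In each case the total is $O(M_n)$, proving \eqref{eq:torus-total-one}.

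The step I expect to be most delicate is Step 3. One must check that the unique-choice and distinct-location hypotheses of Corollary~\ref{lem:swap} hold almost surely for resampled instances on the torus. One must also verify that Efron--Stein summed over cells yields a single factor $2n$ with no dependence on the number of cells.
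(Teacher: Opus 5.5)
Your proposal is correct and follows the paper's proof essentially step for step. It uses the grid of side $\lceil 1/r\rceil^{-1}$, Chebyshev with torus stationarity, and the Efron--Stein aggregate variance bound (the paper packages your Step~3 as Lemma~\ref{prop:aggregate-variance}), then integrates the tail above $\ell_m$ and sums $\ell_m$ over ranks. Your hedge in Step~5 is unnecessary: for $d=2$ one already has $\E D_m^{\mathbb T}=O(\ell_m)$, since $\ell_m^2\int_{\ell_m}^\infty r^{-2}\,\mathrm{d}r=\ell_m$, and the $\log n$ enters only through $\sum_m \ell_m$.
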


The rest of this subsection is devoted to proving Lemma~\ref{prop:torus-edge}.
Intuitively, a large matching cost requires the request's grid cell to be empty.
Since translation symmetry determines its mean server count, Chebyshev's
inequality reduces the tail bound to a bound on count variance.  We first
establish this variance bound, then integrate the resulting tail estimate
and sum over ranks to bound the expected total cost.

\label{sec:proof-aggregate-variance}

The next lemma bounds the total variance of server counts over any
partition, without a factor for the number of cells.
Its proof combines stability under replacing one input with the
resampling form \eqref{eq:efron-stein-resampling} of
Lemma~\ref{lem:efron-stein}.

\begin{lemma}
\label{prop:aggregate-variance}
{Fix $1\le m\le n$, and let $(\mathsf X,\delta)$ be a metric
space equipped with its Borel sigma-field.  Suppose that the $n$ servers and
the first $n-m$ requests are independent $\mathsf X$-valued random points.
Assume that the server locations are pairwise distinct almost surely and
that, for each $1\le i\le n-m$, the distances
$\delta(X_i,Y_1),\ldots,\delta(X_i,Y_n)$ are pairwise distinct almost surely.
Let $\Sigma_m$ be the set of $m$ unmatched servers after the
Greedy process under $\delta$ serves the first $n-m$ requests.
Then, for every finite Borel partition $\mathcal P$ of
$\mathsf X$,}
\begin{equation}\label{eq:aggregate-variance}
  \sum_{P\in\mathcal P}\Var\!\bigl(N_P(\Sigma_m)\bigr)\le2n-m.
\end{equation}
\end{lemma}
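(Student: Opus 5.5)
The plan is to apply the resampling form \eqref{eq:efron-stein-resampling} of Lemma~\ref{lem:efron-stein} to each cell count separately and then sum over cells, interchanging the two sums. The input vector consists of the $N=n+(n-m)=2n-m$ independent coordinates $Z=(Y_1,\ldots,Y_n,X_1,\ldots,X_{n-m})$. For each $P\in\mathcal P$ we set $f_P(Z):=N_P(\Sigma_m)$, which is bounded by $m$ and hence square-integrable; later requests do not affect $\Sigma_m$. Let $Z^{(j)}$ be $Z$ with coordinate $j$ replaced by an independent copy, and let $\Sigma_m^{(j)}$ be the resulting remaining-server configuration. Efron--Stein then gives, for each $P$,
\[
  \Var\bigl(N_P(\Sigma_m)\bigr)\le\frac12\sum_{j=1}^{2n-m}\E\bigl[(N_P(\Sigma_m)-N_P(\Sigma_m^{(j)}))^2\bigr].
\]

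Summing over the finitely many $P\in\mathcal P$ and exchanging the finite sums with the expectation, we get
\[
  \sum_{P}\Var\bigl(N_P(\Sigma_m)\bigr)\le\frac12\sum_{j=1}^{2n-m}\E\Bigl[\sum_{P}\bigl(N_P(\Sigma_m)-N_P(\Sigma_m^{(j)})\bigr)^2\Bigr].
\]
The pair of processes driven by $Z$ and $Z^{(j)}$ differs by replacing one initial server or one of $X_1,\ldots,X_{n-m}$. Provided Corollary~\ref{lem:swap} applies almost surely, the inner sum is at most $2$. Each $j$ therefore contributes at most $\tfrac12\cdot2=1$, and the total is $2n-m$, which is \eqref{eq:aggregate-variance}.

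The main point to check is that the hypotheses of Corollary~\ref{lem:swap} hold almost surely for both $Z$ and every $Z^{(j)}$. Since $Z^{(j)}$ has the same law as $Z$, the distinctness assumptions in the statement transfer to it. That is, the server locations are pairwise distinct, and for each request the distances to all $n$ servers are pairwise distinct. Pairwise distinct distances from $X_i$ to all of $Y_1,\ldots,Y_n$ imply that the nearest available server is unique at every step, whatever subset is still available. A countable union of null sets is null, so the corollary applies simultaneously to all $j$ outside a null event, where the bound on the inner sum may fail. That event does not affect the expectations. The crucial feature is that the bound $2$ per replacement is uniform over partitions, so no factor for the number of cells appears.
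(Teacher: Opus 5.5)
Your proposal is correct and follows the paper's proof essentially step for step. You apply the resampling bound \eqref{eq:efron-stein-resampling} to each cell count $f_P(Z)=N_P(\Sigma_m(Z))$, sum over the cells, and use Corollary~\ref{lem:swap} to bound the inner sum by $2$, which gives the total $\frac12\cdot2\cdot(2n-m)$. As in the paper, the corollary applies almost surely to $Z$ and to every $Z^{(j)}$ because $Z^{(j)}\overset{\mathrm d}=Z$.
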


\begin{proof}
Index the $2n-m$ independent inputs by writing
\[
  Z=(Z_1,\ldots,Z_{2n-m})
   :=(Y_1,\ldots,Y_n,X_1,\ldots,X_{n-m}).
\]
For each $1\le j\le2n-m$, let $Z^{(j)}$ be obtained by replacing
only $Z_j$ with an independent copy, as in Lemma~\ref{lem:efron-stein}.

Write $\Sigma_m=\Sigma_m(Z)$ and
$\Sigma_m^{(j)}=\Sigma_m(Z^{(j)})$ for the unmatched-server configurations
obtained from these input vectors.
Since $Z^{(j)}$ has the same distribution as $Z$, the
distinct-location and no-distance-ties assumptions hold almost surely
for the Greedy processes with inputs $Z$ and $Z^{(j)}$.
These inputs differ in one server when $j\le n$, or one request
when $j>n$.
Corollary~\ref{lem:swap} therefore gives, for every $j$, almost surely,
\[
  \sum_{P\in\mathcal P}
    \bigl(N_P(\Sigma_m)-N_P(\Sigma_m^{(j)})\bigr)^2\le2.
\]
Apply \eqref{eq:efron-stein-resampling} to each function
$f_P(Z)=N_P(\Sigma_m(Z))$, which is bounded by $m$, and sum over the
cells to obtain
\[
  \sum_{P\in\mathcal P}\Var\!\bigl(N_P(\Sigma_m)\bigr)
  \le\frac12\sum_{j=1}^{2n-m}
          \E\sum_{P\in\mathcal P}
          \bigl(N_P(\Sigma_m)-N_P(\Sigma_m^{(j)})\bigr)^2
  \le2n-m.
\]
This is \eqref{eq:aggregate-variance}.
\end{proof}

Now, we are ready to prove Lemma~\ref{prop:torus-edge}.

\begin{proof}[Proof of Lemma~\ref{prop:torus-edge}]

Recall the torus configuration $B_m$ from
Fact~\ref{fact:torus-translation-stationarity}.  Its expected-count identity gives
$\E N_P(B_m)=m|P|$ for every measurable torus region $P$.

We first prove the tail estimate and then integrate it.  Fix $0<r\le1$,
and let $\mathcal P$ be the partition of the torus into half-open cubes
(intervals when $d=1$) of side $s=\lceil1/r\rceil^{-1}\in[r/2,r]$.
The fresh request lies in each cell $P\in\mathcal P$ with probability
$|P|$.  If that cell contains an unmatched server, the matching cost is
at most $\sqrt d\,s$.  Independence of the fresh request and $B_m$ gives
\[
  \Pp\{D_m^{\mathbb T}>\sqrt d\,r\}
  \le\sum_{P\in\mathcal P}|P|\Pp\{N_P(B_m)=0\}.
\]
By Chebyshev's inequality and the mean identity in
Fact~\ref{fact:torus-translation-stationarity}, we have
\[
  \Pp\{N_P(B_m)=0\}
  \le\frac{\Var(N_P(B_m))}{m^2|P|^2}.
\]
{Since every cell has volume $s^d$, summing the preceding bound and
applying Lemma~\ref{prop:aggregate-variance} with $\Sigma_m=B_m$
gives
\begin{align*}
  \Pp\{D_m^{\mathbb T}>\sqrt d\,r\}
  &\le \frac{1}{m^2s^d}
       \sum_{P\in\mathcal P}\Var\!\bigl(N_P(B_m)\bigr) \\
  &\le \frac{2n-m}{m^2s^d}
   =O\!\left(\frac{n}{m^2r^d}\right),
\end{align*}
where the last estimate uses $s\ge r/2$.  Combining this estimate with
the trivial probability bound by one proves \eqref{eq:torus-tail}.}

To bound the expected total cost, we integrate the tail bound at each rank
and then sum, using the scale $\ell_m$ from \eqref{eq:intro-rank-scale}.
For $d\ge2$, integrating \eqref{eq:torus-tail} above $\ell_m$ gives
\begin{equation}\label{eq:torus-edge}
\E D_m^{\mathbb T}
  =O\!\left(\ell_m+\ell_m^d\int_{\ell_m}^1r^{-d}\,\mathrm{d}r\right)
  =O(\ell_m).
\end{equation}
Here $\ell_m^d=n/m^2$ when $m>\sqrt n$, and the last estimate uses
$\int_\ell^1r^{-d}\,\mathrm{d}r\le\ell^{1-d}/(d-1)$.
When $m\le\sqrt n$, we have $\ell_m=1$, and the same bound follows
from the torus diameter.

For $d=1$ and $m\le\sqrt n$, the diameter gives
$\E D_m^{\mathbb T}\le1$.  For $m>\sqrt n$, put
$\ell:=\ell_m=n/m^2$.  Integrating \eqref{eq:torus-tail} now gives
\begin{equation}\label{eq:torus-edge-one}
\E D_m^{\mathbb T}
  =O\!\left(\ell+\ell\int_\ell^1\frac{\mathrm{d}r}{r}\right)
  =O\!\left(\frac{n}{m^2}\left(1+\log\frac{m^2}{n}\right)\right).
\end{equation}

It remains to sum the bounds over all ranks.  The first
$\lfloor\sqrt n\rfloor$ terms in $\sum_{k=1}^n\ell_k$ contribute at
most $\sqrt n$.  For $k>\sqrt n$, the scale is $n^{1/d}k^{-2/d}$.
For $d=2$, the remaining terms sum to at most
$\sqrt n(1+\log n)=O(\sqrt n\log n)$.
For $d=1$, the convergent tail contributes $O(\sqrt n)$; for $d\ge3$,
the power sum contributes $O(n^{1-1/d})$.  Thus, in every dimension,
\begin{equation*}
\sum_{k=1}^n\ell_k=O(M_n).
\end{equation*}
For $d\ge2$, summing \eqref{eq:torus-edge} and applying this estimate
proves \eqref{eq:torus-total-one}.  For $d=1$, summing
\eqref{eq:torus-edge-one} and using the diameter bound for $m\le\sqrt n$
gives
\[
  \sum_{m=1}^n\E D_m^{\mathbb T}
  =O\!\left(\sqrt n+n\int_{\sqrt n}^{\infty}
        \frac{1+\log(x^2/n)}{x^2}\,\mathrm{d}x\right)
  =O(\sqrt n),
\]
where the first estimate uses an integral comparison.  This proves the
remaining case of \eqref{eq:torus-total-one}.
\end{proof}

\subsection{Comparing Greedy on Torus and Cube}
\label{sec:random-cut-discrepancy}

The torus estimates benefit from translation symmetry: the expected
server count in every region is known.  To use these estimates for our
original problem, we must compare them with Greedy under Euclidean
distances in the cube, where the boundary breaks this symmetry.
We compare the Greedy processes on the same random inputs under
the torus metric and the Euclidean metric obtained after a random cut.
We will bound the expected number of servers that remain unmatched
under only one of these metrics.
This bounds the error when transferring server counts from the
torus to the cube.

Draw iid uniform servers and requests on $\Td$, together with the
independent random cut $U$ from Subsection~\ref{sec:torus-preliminaries}.
Compare Greedy under the cut metric $d_U$ and the torus metric
$d_{\mathbb T}$, using the same server locations and request order.
Let $A_m$ be the servers remaining after the first $n-m$ requests when
Greedy uses the cut metric $d_U$, and let $B_m$ be the servers remaining
when Greedy uses the torus metric $d_{\mathbb T}$.
Recall that $S_m$ is the set of $m$ unmatched servers in the
Greedy process on the original cube inputs with the Euclidean metric.

Set $A_0=B_0=\varnothing$.
The following lemma shows that unwrapping the Greedy process under
$d_U$ gives the law of the original cube Greedy process, allowing us
to transfer both server-count and matching-cost estimates.

\begin{samepage}
\begin{lemma}\label{lem:unwrapped-greedy-law}
In the coupling above, conditional on $U$, the transformed servers
$\kappa_U(Y_i)$ and requests $\kappa_U(X_i)$, $1\le i\le n$, are mutually
independent and uniform in $[0,1)^d$.
Unwrapping the Greedy process under $d_U$ gives exactly the
Euclidean Greedy process on these transformed inputs, with every
matching cost preserved.
Consequently, conditional on $U$, the transformed
inputs, unmatched-server configurations at all ranks, and matching costs
jointly have the law of the original cube Greedy process.
In particular,
\[
  \bigl(\kappa_U(A_m)\bigr)_{0 \leq m \leq n}
  \overset{\mathrm d}=\bigl(S_m\bigr)_{0 \leq m \leq n}.
\]
\end{lemma}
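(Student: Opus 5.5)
The plan is to condition on $U=u$ and exhibit the Greedy process under $d_u$ as a deterministic image of the Euclidean Greedy process on the transformed inputs. Once that is done, the conclusion follows from Fact~\ref{lem:random-cut}\ref{item:fact-cut-item-ii}. Since $U$ is independent of the inputs, conditioning on $U=u$ leaves $X_1,\ldots,X_n,Y_1,\ldots,Y_n$ iid uniform on $\mathbb T^d$. By Fact~\ref{lem:random-cut}(i), $\kappa_u$ is a volume-preserving bijection $\mathbb T^d\to[0,1)^d$. Hence the points $\kappa_u(Y_i)$ and $\kappa_u(X_i)$ are iid uniform on $[0,1)^d$, and $[0,1)^d$ differs from $\Qd$ only by a null set. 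This gives the first claim.

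Next, I would note that by definition $d_u(z,z')=\|\kappa_u(z)-\kappa_u(z')\|_2$. Thus $\kappa_u$ is an isometry from $(\mathbb T^d,d_u)$ onto $([0,1)^d,\|\cdot\|_2)$. Greedy's decisions depend only on the pairwise distances between the current request and the available servers, together with the order of arrival. I would argue by induction on the number of processed requests, with the following hypothesis: $\kappa_u(A_m)$ is exactly the set of servers remaining when Euclidean Greedy runs on the inputs $\kappa_u(Y_i)$ and requests $\kappa_u(X_1),\ldots,\kappa_u(X_{n-m})$. The base case $m=n$ holds because both sets are the full server set. For the inductive step, a server $b\in A_m$ minimizes $d_u(X^{(m)},\cdot)$ over $A_m$ if and only if $\kappa_u(b)$ minimizes $\|\kappa_u(X^{(m)})-\cdot\|_2$ over $\kappa_u(A_m)$. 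Moreover, uniqueness of the minimizer transfers along the isometry, and so does the matching cost. So the same server is removed in both processes and the hypothesis propagates. Almost-sure uniqueness of nearest choices on the cube side holds because the transformed inputs are iid uniform, as the preliminaries assume.

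Consequently, the whole trajectory of transformed inputs, remaining-server sets and costs is a fixed measurable function $\Phi$ of the $2n$ transformed points. By the same argument, the original cube Greedy trajectory is $\Phi$ applied to $(Y_i,X_i)$. Equal input laws then give equal trajectory laws, conditionally on $U$. Since this conditional law does not depend on $U$, the equality $(\kappa_U(A_m))_m\overset{\mathrm d}=(S_m)_m$ also holds unconditionally.

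I do not expect any step to be a genuine obstacle. The only care needed is measurability and the null-set issues: ties, and the half-open cube versus the closed cube $\Qd$. Both are handled because they occur with probability zero.
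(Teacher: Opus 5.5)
Your proposal is correct and follows essentially the same route as the paper. Both arguments condition on $U=u$ and take the conditional input law from the random-cut fact. Both then use $d_u(x,y)=\|\kappa_u(x)-\kappa_u(y)\|_2$ to see that the unique nearest available server and its matching cost correspond under unwrapping, and both conclude by induction over the requests. Your explicit points, that the trajectory is a fixed measurable map $\Phi$ of the inputs and that the conditional law does not depend on $U$ (which gives the unconditional identity), only spell out what the paper leaves implicit.
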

\end{samepage}

\begin{proof}
Fact~\ref{lem:random-cut}\ref{item:fact-cut-item-ii} gives the
conditional input distribution.  Fixing $U=u$, the identity
$d_u(x,y)=\|\kappa_u(x)-\kappa_u(y)\|_2$ shows that the unique nearest
available server under $d_u$ maps to the unique Euclidean nearest server
after unwrapping, with the same matching cost.
The Greedy processes under $d_u$ and on the unwrapped Euclidean
inputs then remove the corresponding servers.
Induction over the requests proves the
claim for all matches and remaining configurations simultaneously.
Together with the conditional input distribution, this gives the
asserted joint law.
\end{proof}

We now compare $A_m$ and $B_m$, both subsets of the same initial torus
servers.  The torus process $(B_m)_{m=0}^n$ is independent of $U$.
Define
\[
  K_m:=|A_m\setminus B_m|=|B_m\setminus A_m|,
  \qquad 0\le m\le n.
\]
This counts the servers in either configuration that do not occur
in the other.

The next lemma bounds this discrepancy by the total-cost scale
$M_n$ from \eqref{eq:rank-cutoff}.  A random cut is unlikely to
change a short match, and each change can add at most one server to the
discrepancy.  Combining this observation with Lemma~\ref{prop:torus-edge}
gives the bound below.

\begin{lemma}\label{lem:discrepancy}
For every $0\le m\le n$, the coupled Greedy processes under
$d_U$ and $d_{\mathbb T}$ satisfy
\begin{equation}\label{eq:K-expected}
\E K_m=O\!\left(\min\{m,M_n\}\right).
\end{equation}
\end{lemma}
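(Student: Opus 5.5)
The bound $K_m\le m$ is trivial, since $A_m$ and $B_m$ both have $m$ elements. The content is $\E K_m=O(M_n)$. I will prove it by tracking how $K$ evolves as the requests $X_1,\ldots,X_{n-m}$ are served, that is, as the rank decreases from $k$ to $k-1$. The goal is to show
\[
  K_{k-1}\le K_k+\mathbf 1\{E_k\},
\]
where $E_k$ is the event that the torus match of $X^{(k)}$ is cut, meaning one of the chosen shortest coordinate arcs from $X^{(k)}$ to its torus nearest server $b\in B_k$ contains a cut point $U_i$. Telescoping from $K_n=0$ then gives $K_m\le\sum_{k=m+1}^n\mathbf 1\{E_k\}$. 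The cut $U$ is independent of the torus process, so conditioning on the torus process and using Fact~\ref{lem:random-cut}\ref{item:fact-cut-item-v} gives
\[
  \Pp\{E_k\}=O\bigl(\E D_k^{\mathbb T}\bigr).
\]
Summing over $k$ and applying \eqref{eq:torus-total-one} of Lemma~\ref{prop:torus-edge} yields $\E K_m=O(M_n)$.

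The main step is the one-step inequality. The difficulty is that $A_k$ and $B_k$ may differ in several servers, so Lemma~\ref{lem:greedy-stability}, which handles only a single replacement, does not apply directly. I plan a case analysis on the pair of removed servers: the torus process removes $b$, the nearest server of $X^{(k)}$ in $B_k$ under $d_{\mathbb T}$, and the cut process removes $a$, the nearest server in $A_k$ under $d_U$.

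In the generic case, $a=b$, or $a\in A_k\setminus B_k$ and $b\in B_k\setminus A_k$. Here the symmetric difference does not grow, so $K_{k-1}\le K_k$. The remaining configurations are one where the symmetric difference can grow by exactly one. These are $a=b'$ with $b'\in A_k\cap B_k$ but $b'\ne b$, or the analogous situation with the roles of $a$ and $b$ exchanged. Since removing one element from each of two sets can change $|A\setminus B|$ by at most one, the growth is at most one in every case.

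It then remains to show that growth forces $E_k$. Suppose no chosen arc is cut. Fact~\ref{lem:random-cut}(vi), applied with $S=B_k$, says $b$ is the unique $d_U$-nearest point of $B_k$. The complication is that the cut process chooses among $A_k$, not $B_k$, so I need to show that growth still cannot occur.

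Consider first the case $b\in A_k$. By \eqref{eq:metric-domination} and Fact~\ref{lem:random-cut}(iv),
\[
  d_U(X,b)=d_{\mathbb T}(X,b)\le d_{\mathbb T}(X,c)\le d_U(X,c)
\]
for every $c\in B_k$. Hence, if $a\ne b$, then $a\in A_k\setminus B_k$, and removing $b$ from $B$ only together with $a\in A\setminus B$ does not increase $K$.

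Next consider $b\notin A_k$, so that $b\in B_k\setminus A_k$. Then $K$ increases only if $a\in A_k\cap B_k$. But $a\in B_k$ gives $d_{\mathbb T}(X,a)>d_{\mathbb T}(X,b)$, and I must show this cannot happen together with $a$ being the $d_U$-minimizer over $A_k$. In fact removing $b$ from $B$ and any $a\in B$ from $A$ gives $B\setminus A$ size unchanged only if $a\in A\setminus B$. Otherwise the net change is: $b$ leaves $B\setminus A$ and $a$ enters $B\setminus A$, so the change is zero.

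This last observation suggests a cleaner check. Growth occurs only if $b\in A_k$ and $a\in A_k\cap B_k$ with $a\ne b$. The displayed chain above excludes $a\in B_k$ when there is no cut, which establishes the one-step inequality. I expect this case bookkeeping to be the only delicate point; the probabilistic part is immediate.
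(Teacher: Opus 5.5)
Your proposal is correct and is essentially the paper's proof. It has the same ingredients: the one-step bound $K_{k-1}\le K_k+\mathbf 1\{E_k\}$; the observation that $K$ can grow only when $a\ne b$ with both in $A_k\cap B_k$, which Fact~\ref{lem:random-cut}(iii), (iv), (vi) exclude absent a cut; and then Fact~\ref{lem:random-cut}(v), independence of $U$ from the torus process, and \eqref{eq:torus-total-one} of Lemma~\ref{prop:torus-edge}.

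The differences are cosmetic. The paper works with the smaller event that $b$ is not the $d_U$-nearest point of $B_k$, rather than the arc-cut event. You should also state the clean growth criterion up front, because your intermediate case enumeration is muddled before you reach it: some of your listed ``growth'' cases do not increase $K$, and the sentence about removing $a\in B$ from $A$ is self-contradictory.
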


\begin{proof}
We first bound the discrepancy by the expected total matching
cost of the torus Greedy process.
To track the choices of the cut and torus Greedy processes,
we use the following notation.
For a metric $\delta$, let $\NN_S^\delta(x)$ denote the unique nearest
point of a nonempty finite point set $S$ to a point $x$ under $\delta$.
For $1\le m\le n$, define
\[
  a_m:=\NN_{A_m}^{d_U}(X^{(m)}),
  \qquad
  b_m:=\NN_{B_m}^{d_{\mathbb T}}(X^{(m)}).
\]
Thus $a_m$ and $b_m$ are the servers actually deleted by the cut and torus
processes, respectively.  To isolate changes caused by the metric,
compare the nearest server in the same set $B_m$ under the two metrics.
Define the event that these choices differ and its probability by
\[
  E_m:=\{b_m\ne\NN_{B_m}^{d_U}(X^{(m)})\},
  \qquad
  p_m:=\Pp(E_m).
\]
By Fact~\ref{lem:random-cut}\ref{item:fact-cut-item-vi}, $E_m$
can occur only if, for at least one coordinate $i$, the cut point $U_i$
lies on the chosen shortest arc joining the $i$th coordinates of
$X^{(m)}$ and $b_m$.
For $1\le m\le n$, we first show that
\begin{equation}\label{eq:K-recursion}
  K_{m-1}\le K_m+\mathbf1_{E_m}.
\end{equation}
Both configurations have size $m$, so $K_m$ equals $m$ minus their
number of common servers.  Each update decreases the size by one.
The discrepancy can therefore increase only if the cut and torus
Greedy processes delete distinct common servers, and even then it
increases by only one.
On $E_m^c$, the definition of $E_m$ says that $b_m$ is also the unique
nearest server in $B_m$ under $d_U$.  If $a_m$ and $b_m$ both belong to
$A_m\cap B_m$, each is the unique nearest server to $X^{(m)}$ within
this intersection under $d_U$, so $a_m=b_m$.
Thus the discrepancy cannot increase on $E_m^c$, proving
\eqref{eq:K-recursion}.  Iterating from $K_n=0$ and taking expectations
gives
\begin{equation}\label{eq:K-bound}
  \E K_m\le\sum_{k=m+1}^np_k.
\end{equation}

The torus configuration $B_m$ and the request $X^{(m)}$ are jointly
independent of $U$.  Conditional on $X^{(m)}$ and $B_m$, the server
$b_m$ and the chosen arcs are fixed, while $U$ remains uniform.
Fact~\ref{lem:random-cut}\ref{item:fact-cut-item-v} therefore bounds
the conditional probability of cutting at least one of these arcs,
and hence of $E_m$, by $\sum_{i=1}^d\Delta_i(X^{(m)},b_m)$.
Taking expectation over the random request $X^{(m)}$ and configuration
$B_m$ gives
\begin{equation}\label{eq:seam-probability}
  p_m
  \le\E\sum_{i=1}^d
       \Delta_i\bigl(X^{(m)},b_m\bigr)
  =O\!\left(\E D_m^{\mathbb T}\right).
\end{equation}
Combining \eqref{eq:K-bound} with \eqref{eq:seam-probability} gives
\begin{equation*}
  \E K_m
  \le\sum_{k=m+1}^n p_k
  =O\!\left(\sum_{k=m+1}^n\E D_k^{\mathbb T}\right).
\end{equation*}
The total-cost bound in Lemma~\ref{prop:torus-edge} bounds the last
expression by $O(M_n)$.  Also, $K_m\le m$ because each configuration
contains $m$ servers.  This proves \eqref{eq:K-expected}.
\end{proof}

\section{Competitive Ratio Upper Bound}\label{sec:upper}

In this section, we prove \Cref{thm:costs}.
Recall that $S_m$ is the set of the $m$ unmatched cube servers at rank $m$, and
$X^{(m)}$ is the next request.
Set
\[
  D_m:=\dist(X^{(m)},S_m)
      :=\min_{y\in S_m}\|X^{(m)}-y\|_2.
\]
The set $S_m$ depends only on the servers and the earlier requests
$X^{(n)},\ldots,X^{(m+1)}$.  Hence $X^{(m)}$ is uniform and independent of
$S_m$, and for every input sequence
\begin{equation}\label{eq:rank-sum}
  G_n=\sum_{m=1}^nD_m.
\end{equation}

We use the total-cost scale $M_n$ from \eqref{eq:rank-cutoff} as a
rank cutoff.  The next lemma bounds the expected matching cost
in the cube at each rank.

\begin{lemma}\label{prop:rankwise}
For $d=1$ and every $1\le m\le n$,
\begin{equation}\label{eq:rankwise-mean-one}
\E D_m=
  \begin{cases}
    O(1), & 1\le m\le\sqrt n,\\[2mm]
    \displaystyle O\!\left(\frac{n}{m^2}\left(1+\log\frac{m^2}{n}\right)\right),
      & \sqrt n<m\le n.
  \end{cases}
\end{equation}
For $d\ge2$ and every $1\le m\le n$,
\begin{equation}\label{eq:rankwise-mean}
\E D_m=
  \begin{cases}
    O(1), & 1\le m\le M_n,\\[2mm]
    \displaystyle O\!\left(\frac{n^{1/d}}{m^{2/d}}
      +\left(\frac{M_n}m\right)^{1+1/d}\right),
      & M_n<m\le n.
  \end{cases}
\end{equation}
\end{lemma}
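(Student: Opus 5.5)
We bound the tail of $D_m$ at each scale $r$ by the grid-and-Chebyshev argument from the proof of Lemma~\ref{prop:torus-edge}, now run on the cube. The missing ingredient is the torus mean identity. We replace it by a good/bad cell classification, and we control the bad cells through the random-cut coupling of Lemma~\ref{lem:unwrapped-greedy-law} and the discrepancy bound of Lemma~\ref{lem:discrepancy}. For $m\le\sqrt n$ (when $d=1$) or $m\le M_n$ (when $d\ge2$), the bound $O(1)$ follows from $D_m\le\sqrt d$, so assume $m$ is larger.

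\emph{Step 1: first-moment deficit.} Fix $0<r\le1$ and partition $[0,1)^d$ into half-open cubes of side $s=\lceil1/r\rceil^{-1}\in[r/2,r]$. By Lemma~\ref{lem:unwrapped-greedy-law}, $\E N_P(S_m)=\E N_P(\kappa_U(A_m))$. Condition on $U$. The region $\kappa_U^{-1}(P)$ is a torus region of volume $|P|$. Since $B_m$ is independent of $U$, Fact~\ref{fact:torus-translation-stationarity} gives $\E[N_{\kappa_U^{-1}P}(B_m)\mid U]=m|P|$. Pointwise,
\[
N_P(\kappa_U(A_m))\ge N_{\kappa_U^{-1}P}(B_m)-\bigl|(B_m\setminus A_m)\cap\kappa_U^{-1}P\bigr|.
\]
Taking expectations and summing over $P$, the deficits satisfy
\[
\sum_P\bigl(m|P|-\E N_P(S_m)\bigr)_+\le\E K_m=O(M_n)
\]
by Lemma~\ref{lem:discrepancy}.

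\emph{Step 2: tail bound.} Call $P$ good if $\E N_P(S_m)\ge m|P|/2$, and bad otherwise. Each bad cell has deficit at least $ms^d/2$, so there are at most $2\E K_m/(ms^d)$ bad cells. Their total volume is therefore $O(M_n/m)$. Moreover, there are no bad cells at all once $s^d> C M_n/m$ for a suitable constant $C$: the count bound is then below one, and the count is an integer.

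On good cells, Chebyshev gives $\Pp\{N_P(S_m)=0\}\le 4\Var(N_P(S_m))/(m|P|)^2$. Lemma~\ref{prop:aggregate-variance} applies to the Euclidean cube process and bounds the aggregate variance by $2n-m$. Since $X^{(m)}$ is uniform and independent of $S_m$, we obtain
\[
\Pp\{D_m>\sqrt d\,r\}\le\sum_P|P|\,\Pp\{N_P(S_m)=0\}
=O\!\left(\frac{n}{m^2r^d}\right)+O\!\left(\frac{M_n}{m}\right)\mathbf 1\Bigl\{r\le C'\Bigl(\frac{M_n}{m}\Bigr)^{1/d}\Bigr\}.
\]

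\emph{Step 3: integration.} The first term integrates exactly as in \eqref{eq:torus-edge} and \eqref{eq:torus-edge-one}. For $d\ge2$ it gives $O(n^{1/d}m^{-2/d})$. For $d=1$ it gives $O\bigl(\tfrac{n}{m^2}(1+\log\tfrac{m^2}{n})\bigr)$. The second term contributes $O\bigl((M_n/m)\cdot(M_n/m)^{1/d}\bigr)=O\bigl((M_n/m)^{1+1/d}\bigr)$. For $d=1$, where $M_n=\sqrt n$, this is $O(n/m^2)$ and is absorbed into the first bound. Together these give \eqref{eq:rankwise-mean-one} and \eqref{eq:rankwise-mean}.

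The main obstacle is getting the exponent $1+1/d$ rather than a scale-independent $O(M_n/m)$ tail, which would give only $O(M_n/m)$ after integration. The key point is the integrality observation in Step 2: the expected total deficit $O(M_n)$ cannot support even one bad cell at scales with $ms^d\gg M_n$. A secondary check is that the pointwise count inequality in Step 1 is correct when combined with conditioning on $U$. It is, because $B_m$ is independent of $U$ and $|B_m\setminus A_m|=K_m$.
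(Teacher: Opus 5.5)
Your proof is correct and follows essentially the paper's route: the same good/bad classification by expected counts, Chebyshev together with the aggregate variance bound of Lemma~\ref{prop:aggregate-variance} on good cells, and the random-cut discrepancy $\E K_m=O(M_n)$ to bound the volume of bad cells and to exclude them once $r^d\gtrsim M_n/m$. The only cosmetic difference is that you inline a one-sided, cell-summed version of Lemma~\ref{prop:mean-bias}, whereas the paper passes through the total-variation quantity $H_m$ and applies it to the union of bad cells.
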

The proof of Lemma~\ref{prop:rankwise} is deferred to
Subsection~\ref{sec:proof-rankwise}.

\begin{proof}[Proof of Theorem~\ref{thm:costs}]
For $d=1$, \eqref{eq:rank-sum} and Lemma~\ref{prop:rankwise} give
\[
  \E G_n
  =O\!\left(\sqrt n+n\sum_{m>\sqrt n}
       \frac{1+\log(m^2/n)}{m^2}\right)
  =O(\sqrt n).
\]
The last step follows by the integral test, substituting $x=\sqrt n\,t$
and using $\int_1^\infty t^{-2}(1+2\log t)\,\mathrm{d}t=3$.

For $d\ge2$, the low ranks contribute $O(M_n)$, and
\[
  \E G_n=O\!\left(
    M_n+n^{1/d}\sum_{m>M_n}m^{-2/d}
    +M_n^{1+1/d}\sum_{m>M_n}m^{-1-1/d}\right).
\]
The last term is $O(M_n)$ by the integral test.  The middle term
is $O(\sqrt n\log n)$ when $d=2$ and $O(n^{1-1/d})$ when $d\ge3$.
Thus, in every dimension,
\begin{equation}\label{eq:master-upper}
\E G_n=O(M_n).
\end{equation}
On the line, sorted matching is optimal, and the standard one-dimensional
random matching scale is $\Theta(\sqrt n)$.  Together with the
classical higher-dimensional estimates~\cite{ajtai1984optimal,talagrand1992matching}, the offline scales are
\[
  \E\OPT_n=
  \begin{cases}
    \Theta(\sqrt n), & d=1,\\
    \Theta(\sqrt{n\log n}), & d=2,\\
    \Theta(n^{1-1/d}), & d\ge3.
  \end{cases}
\]
Dividing \eqref{eq:master-upper} by the offline matching scales gives the
competitive-ratio upper bounds.
\end{proof}

\subsection{Proof of Lemma~\ref{prop:rankwise}}
\label{sec:proof-rankwise}

We bound the probability of a large matching cost and then integrate
to obtain its expectation.  At a distance scale $r$, we partition the cube into
equal cells with side length comparable to $r$.  If the request's cell
contains a remaining server, its matching cost is $O(r)$.  Thus we
need to bound the probability that this cell is empty.

Some cells may contain few servers even in expectation.
We control
their total volume using the comparison in
Subsection~\ref{sec:random-cut-discrepancy}: compare the Greedy processes
under the torus and cut metrics on the same inputs.
By Lemma~\ref{lem:unwrapped-greedy-law}, unwrapping the Greedy
process under the cut metric gives the law of the original cube process.
After unwrapping both configurations,
the expected number of servers remaining under the torus metric in any region is $m$ times its
volume.  The two configurations' counts in that region differ only through
servers that remain unmatched under one metric but not the other.
For cells whose expected server count is at least half of $m$ times their
volume, Chebyshev's inequality and the variance bound control the chance
of being empty.

Since $D_m\le\sqrt d$, we have $\E D_m=O(1)$.  This proves
\eqref{eq:rankwise-mean-one} for $m\le\sqrt n$ when $d=1$, and
\eqref{eq:rankwise-mean} for $m\le M_n$ when $d\ge2$.
For the remainder of the proof, assume $m>M_n$ and recall the length
scale $\ell_m$ from \eqref{eq:intro-rank-scale}.
Since $M_n\ge\sqrt n$ in every dimension, we have
\[
  \ell_m=\frac{n^{1/d}}{m^{2/d}},
  \qquad
  \ell_m^d=\frac{n}{m^2}.
\]

For a measurable region $P\subseteq\Qd$, let $\rho_m(P) := \E N_P(S_m) / m$ be its
expected server count divided by $m$.
Thus $\rho_m$ is a probability measure and $\rho_m(\Qd)=1$.
Fix $\ell_m\le r\le1$ and partition the cube into half-open cells
of side length $s=\lceil1/r\rceil^{-1}\in[r/2,r]$.  For each cell $P$, write
$|P|=s^d$ for its volume.
Call $P$ \emph{bad} if $\rho_m(P)<|P|/2$ and \emph{good} otherwise.  Write
$\beta_m(r):=\sum_{P\ \mathrm{bad}}|P|$ for the total volume of bad cells.

A cell has diameter at most $\sqrt d\,r$.  Hence a match with a cost larger than
this can occur only when the request lies in a bad cell or in an empty
good cell.  Goodness is determined by the mean measure, so the good cells
form a deterministic subcollection of the partition.  The fresh request
is uniform and independent of $S_m$, which gives
\begin{equation}\label{eq:cube-tail-decomposition}
  \Pp\{D_m>\sqrt d\,r\}
  \le \beta_m(r)
     +\sum_{P\ \mathrm{good}}|P|\Pp\{N_P(S_m)=0\}.
\end{equation}

For a good cell,
\begin{equation*}
  \E N_P(S_m)=m\rho_m(P)\ge\frac{m|P|}{2}.
\end{equation*}
Chebyshev's inequality bounds its probability of being empty by
$4\Var(N_P(S_m))/(m^2|P|^2)$.  Sum this bound in
\eqref{eq:cube-tail-decomposition}, using $|P|=s^d\ge(r/2)^d$ and
the aggregate variance bound of Lemma~\ref{prop:aggregate-variance}:
\begin{align}
  \Pp\{D_m>\sqrt d\,r\}
  &\le \beta_m(r)
       +\frac{4}{m^2s^d}
          \sum_{P\ \mathrm{good}}\Var\!\bigl(N_P(S_m)\bigr) \notag\\
  &\le \beta_m(r)+O\!\left(\frac{n}{m^2r^d}\right). \label{eq:cube-tail-intermediate}
\end{align}
It remains to bound the total volume $\beta_m(r)$ of the bad cells.
For Greedy using Euclidean distances,
the expected number of remaining servers in a region $P$ is $m\rho_m(P)$.
For Greedy using the torus metric,
Fact~\ref{fact:torus-translation-stationarity} gives expected server count
$m|P|$.  The same holds after
unwrapping, since $\kappa_U$ applies an independent translation modulo one.
We measure the difference between these expectations, divided by $m$, by
defining
\[
  H_m:=\sup_{P\subseteq\Qd\ \mathrm{measurable}}|\rho_m(P)-|P||.
\]
This is the largest normalized expected-count error over all regions,
or equivalently the total variation distance from uniform measure.
The following lemma upper bounds $H_m$.

\begin{lemma}\label{prop:mean-bias}
For every $1\le m\le n$,
\begin{equation}\label{eq:mean-bias-transfer}
H_m=O\!\left(\min\left\{1,\frac{M_n}m\right\}\right).
\end{equation}
\end{lemma}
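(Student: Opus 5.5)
The plan is to realize $\rho_m$ through the random-cut coupling of Subsection~\ref{sec:random-cut-discrepancy} and compare it with the unwrapped torus configuration, whose mean measure is exactly uniform. The trivial bound $H_m\le1$ holds because both $\rho_m$ and Lebesgue measure are probability measures on $\Qd$. It therefore suffices to show $H_m=O(M_n/m)$.

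First I write $\rho_m$ through the coupling. By Lemma~\ref{lem:unwrapped-greedy-law}, $\kappa_U(A_m)\overset{\mathrm d}=S_m$, so for every measurable $P\subseteq\Qd$,
\[
  m\rho_m(P)=\E N_P(S_m)=\E\bigl|\kappa_U(A_m)\cap P\bigr|.
\]
Second, I compute the mean count of the unwrapped torus configuration. The process $(B_m)$ is independent of $U$, and $\kappa_U^{-1}(P)=\tau_U(P)$ up to the modulo-one identification. By Fact~\ref{fact:torus-translation-stationarity}\ref{item:torus-station}, conditioning on $U$ and using that $\tau_U$ preserves volume gives
\[
  \E\bigl|\kappa_U(B_m)\cap P\bigr|=\E\bigl[m\,|\tau_U(P)|\bigr]=m|P|.
\]
Third, I compare the two counts. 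Since $\kappa_U$ is a bijection, for every outcome
\[
  \bigl||\kappa_U(A_m)\cap P|-|\kappa_U(B_m)\cap P|\bigr|\le\max\{|A_m\setminus B_m|,|B_m\setminus A_m|\}=K_m.
\]
Taking expectations and dividing by $m$ yields $|\rho_m(P)-|P||\le\E K_m/m$ uniformly in $P$. Hence $H_m\le\E K_m/m=O(\min\{m,M_n\}/m)$ by Lemma~\ref{lem:discrepancy}, which is the claim.

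The main point requiring care is the second step. We must confirm that the torus mean identity survives unwrapping by a random, rather than fixed, cut, which uses the independence of $U$ from the torus inputs. We must also confirm that the measurable-region supremum causes no issue, and it does not, since the bound is pointwise in $P$. Everything else is immediate from the previously established discrepancy estimate.
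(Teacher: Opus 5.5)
Your proposal is correct and matches the paper's proof: you realize $\rho_m$ through $\kappa_U(A_m)$ via Lemma~\ref{lem:unwrapped-greedy-law}, show that the unwrapped torus configuration has exactly uniform mean counts, bound the pointwise count difference by $K_m$, and conclude with Lemma~\ref{lem:discrepancy}. The only cosmetic difference is in your second step: you condition on $U$ and invoke torus stationarity, whereas the paper conditions on $B_m$ and uses only that each $\kappa_U(b)$ is uniform under the independent cut, so stationarity is not actually needed there.
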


The proof of Lemma~\ref{prop:mean-bias} is deferred to
\Cref{sec:proof-mean-bias}.  Recall that a cell $P$ is bad if
$\rho_m(P)<|P|/2$.  Applying the definition of $H_m$ to the union of
the bad cells gives
\[
  H_m\ge\sum_{P\ \mathrm{bad}}\bigl(|P|-\rho_m(P)\bigr)
  \ge\frac{\beta_m(r)}2.
\]
By Lemma~\ref{prop:mean-bias}, there is a constant $C_0$ such that
$H_m\le C_0M_n/m$, so the preceding inequality gives
$\beta_m(r)\le2C_0M_n/m$.  Moreover, if a bad cell $P$ exists, then
its volume $|P|=s^d\ge(r/2)^d$ gives
\[
  C_0\frac{M_n}{m}\ge H_m\ge |P|-\rho_m(P)
  >\frac{|P|}{2}\ge2^{-d-1}r^d.
\]
Thus a bad cell can exist only when $r^d<2^{d+1}C_0M_n/m$.
Taking $C=2^{d+1}C_0$ therefore yields
\begin{equation}\label{eq:beta-bounds}
  \beta_m(r)=O\!\left(\frac{M_n}m\right),
  \qquad
  \beta_m(r)=0
  \quad\text{if}\quad
  r^d\ge C\frac{M_n}m.
\end{equation}
Since $m>M_n$, we have $\ell_m=n^{1/d}m^{-2/d}$.
Combining \eqref{eq:cube-tail-intermediate} and
\eqref{eq:beta-bounds} therefore gives
\begin{equation}\label{eq:cube-tail}
\Pp\{D_m>\sqrt d\,r\}
  =O\!\left(\frac{M_n}m
      \mathbf1\!\left\{r^d<C\frac{M_n}m\right\}
      +\left(\frac{\ell_m}r\right)^d\right).
\end{equation}

Finally, integrate the tail probability.  Since $0\le D_m\le\sqrt d$,
the change of variables $t=\sqrt d\,r$ gives
\[
  \E D_m=\sqrt d\int_0^1
       \Pp\{D_m>\sqrt d\,r\}\,\mathrm{d}r.
\]
{We use the trivial bound by one on $0\le r<\ell_m$ and
\eqref{eq:cube-tail} on the remaining interval.}  Suppose first that
$d\ge2$.  Then
\begin{align*}
  \E D_m
  =O\!\left(\ell_m+\frac{M_n}m
        \int_0^{\min\{1,(CM_n/m)^{1/d}\}}\mathrm{d}r
     +\ell_m^d\int_{\ell_m}^1r^{-d}\,\mathrm{d}r\right)
  =O\!\left(\frac{n^{1/d}}{m^{2/d}}
       +\left(\frac{M_n}m\right)^{1+1/d}\right),
\end{align*}
{where the second estimate uses $\min\{1,x\}\le x$ with
$x=(CM_n/m)^{1/d}$, and
$\ell_m^d\int_{\ell_m}^1r^{-d}\,\mathrm{d}r\le
\ell_m/(d-1)$.}
This proves \eqref{eq:rankwise-mean} for $m>M_n$.  For $d=1$,
we have $M_n=\sqrt n$ and $\ell_m=n/m^2$ when $m>M_n$.
Using \eqref{eq:cube-tail} in the same integral formula gives
\begin{align*}
  \E D_m
  =O\!\left(\ell_m+\left(\frac{M_n}m\right)^2
     +\ell_m\int_{\ell_m}^1\frac{\mathrm{d}r}{r}\right)
  =O\!\left(\frac{n}{m^2}\left(1+\log\frac{m^2}{n}\right)\right),
\end{align*}
because $(M_n/m)^2=n/m^2=\ell_m$.  This proves
\eqref{eq:rankwise-mean-one} for $m>\sqrt n$.

\subsection{Proof of Lemma~\ref{prop:mean-bias}}
\label{sec:proof-mean-bias}

Fix $1\le m\le n$.
To bound $H_m$, we must control $|\rho_m(P)-|P||$ uniformly over $P$.
Use the coupling from \Cref{sec:random-cut-discrepancy}, in
which $A_m$ and $B_m$ are the remaining servers under the cut metric
$d_U$ and the torus metric $d_{\mathbb T}$, respectively.
Lemma~\ref{lem:unwrapped-greedy-law} gives
\[
  \E|\kappa_U(A_m)\cap P|=m\rho_m(P).
\]
Greedy under $d_{\mathbb T}$ does not use $U$, so $B_m$ is independent
of $U$.  Conditional on $B_m$, each point $b\in B_m$ has a uniform
unwrapped location $\kappa_U(b)$.  Summing over its $m$ points gives
\[
  \E|\kappa_U(B_m)\cap P|=m|P|.
\]

Recall that $K_m=|A_m\setminus B_m|=|B_m\setminus A_m|$.
For every server $b\in A_m\cap B_m$, its image $\kappa_U(b)$ belongs
to $P$ in both configurations or in neither.  Thus these common servers
contribute equally to $|\kappa_U(A_m)\cap P|$ and
$|\kappa_U(B_m)\cap P|$.  The sets $A_m\setminus B_m$ and
$B_m\setminus A_m$ each contain $K_m$ servers, so the difference between
these two cardinalities is at most $K_m$ in absolute value:
\begin{equation}\label{eq:empirical-TV-coupling}
  \left||\kappa_U(A_m)\cap P|-|\kappa_U(B_m)\cap P|\right|\le K_m.
\end{equation}
The two mean identities above, followed by the triangle inequality for
expectations and \eqref{eq:empirical-TV-coupling}, give
\begin{align*}
  m|\rho_m(P)-|P||
  &=\left|\E|\kappa_U(A_m)\cap P|-\E|\kappa_U(B_m)\cap P|\right| \\
  &\le\E\Bigl|\,|\kappa_U(A_m)\cap P|-|\kappa_U(B_m)\cap P|\,\Bigr| \\
  &\le\E K_m.
\end{align*}
Divide by $m$ and take the supremum over $P$ to obtain
\begin{equation*}
  H_m\le\frac{\E K_m}{m}.
\end{equation*}
Finally, Lemma~\ref{lem:discrepancy} gives
$\E K_m=O(\min\{m,M_n\})$, which proves
\eqref{eq:mean-bias-transfer}.

\section{Competitive Ratio Lower Bound for \texorpdfstring{$d=2$}{d=2}}\label{sec:two-d-lower}

In this section, we prove \Cref{thm:lb-2d}.
Throughout the proof, $n$ is sufficiently large after the universal
constants have been fixed.
We use the cube configuration $S_m$, the coupled torus configurations
$A_m,B_m$, and their discrepancy $K_m$ from
\Cref{sec:random-cut-discrepancy}.
For the square matching costs $D_k$, write
\[
  G_{[a,b]}:=\sum_{k=\lceil a\rceil}^{\lfloor b\rfloor}D_k.
\]
For integers $a<b$, the transition from rank $b$ to rank $a$ uses
$G_{[a+1,b]}$.

\paragraph{The rank-interval estimate that suffices.}
Fix a sufficiently small universal constant $c_0\in(0,1/4]$, chosen
to satisfy the restrictions in the lemmas below.  Call a rank $m$ \emph{admissible} if $n^{3/4}\le m\le c_0n$, and set
\refstepcounter{equation}\label{eq:intro-localization-scale}\label{eq:adaptive-scale}%
\begin{equation*}
  h_m:=C_1\frac{\sqrt n}{m},
  \qquad T_m:=\frac{\sqrt n}{h_m}=\frac{m}{C_1}.
  \tag{\theequation}
\end{equation*}
Here $C_1$ will be a sufficiently large fixed constant.  We will prove that
for some constant $c>0$, every admissible $m$ satisfies
\begin{equation}\label{eq:square-block-charge}
\E G_{[cT_m,m]}=\Omega(\sqrt n).
\end{equation}
To see why \eqref{eq:square-block-charge} suffices, decrease $c$ if
necessary so that $\alpha:=c/C_1\in(0,1)$.
Start with $m_0=\lfloor c_0n\rfloor$ and choose
\[
  m_{j+1}=\left\lfloor\frac{\alpha m_j}{2}\right\rfloor,
\]
retaining these ranks as long as $m_j\ge n^{3/4}$.
Since $cT_{m_j}=\alpha m_j$ and $m_{j+1}<\alpha m_j$,
the intervals $[cT_{m_j},m_j]$ are disjoint.  Their endpoints decrease
by a fixed factor, so there are $\Theta(\log n)$ such intervals.
Summing \eqref{eq:square-block-charge} gives $\E G_n=\Omega(\sqrt n\log n)$.
Dividing by the classical offline scale
$\E\OPT_n=\Theta(\sqrt{n\log n})$~\cite{ajtai1984optimal,bobkov2021simple}
proves \Cref{thm:lb-2d}.
It remains to establish \eqref{eq:square-block-charge}.

\paragraph{Reduction to a server deficit in the square.}
Fix an admissible rank $m$.  We will deduce
\eqref{eq:square-block-charge} from a lower bound on the total server
deficit in square grid cells of side roughly $h_m$.
For a set $\Sigma$ of $m$ remaining servers, define the \emph{server
deficit} in a cell $P$ as
\[
  (m|P|-|\Sigma\cap P|)_+,
  \qquad z_+:=\max\{z,0\}.
\]
Here $m|P|$ is the expected number of the $m$ future requests that lie
in $P$.  The total deficit is the sum of these quantities over all cells.
Note that any requests in excess of the servers available in their cell must
be matched outside it, yielding a lower bound on total matching cost.
\Cref{prop:delayed-service} makes this implication precise.

Consider the original square Greedy process $S_k$.  Set
$s=\lceil1/h_m\rceil^{-1}\in[h_m/2,h_m]$ and choose a grid shift
$V\sim\operatorname{Unif}([0,s)^2)$ independently of the servers
$Y_1,\ldots,Y_n$ and requests $X_1,\ldots,X_n$.
Let $\mathcal C_V$ consist of the squares $V+sj+[0,s)^2$,
$j\in\mathbb Z^2$, that lie entirely in $[0,1]^2$.
The grid is used only to count deficits and crossings; Greedy's
matches do not depend on $V$.

\begin{samepage}
\begin{lemma}\label{prop:delayed-service}
Consider Euclidean Greedy on the square and the independent grid
$\mathcal C_V$ defined above.
For every fixed constant $d_0\in(0,1]$, there exists a constant $c>0$,
depending only on $d_0$, with the following property.
For every admissible rank $m$, if
\begin{equation}\label{eq:service-hypothesis}
  \E\sum_{P\in\mathcal C_V}(m|P|-|S_m\cap P|)_+\ge d_0T_m,
\end{equation}
then
\begin{equation}\label{eq:service-conclusion}
\E G_{[cT_m,m]}=\Omega(\sqrt n).
\end{equation}
Both expectations are over the input locations and the grid shift $V$.
\end{lemma}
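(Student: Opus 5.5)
We argue by counting \emph{crossing matches}: requests arriving during ranks $m,m-1,\dots,\lceil cT_m\rceil+1$ whose matched server lies in a different grid cell of $\mathcal C_V$. The plan has three steps: show many crossings must occur by a flow argument, show most crossings are long, and sum.

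\textbf{Step 1: many crossings are forced.} Let $R_P$ be the number of the next $m-k$ requests $X^{(m)},\dots,X^{(k+1)}$, with $k:=\lceil cT_m\rceil$, that land in cell $P$. Greedy serves at most $|S_m\cap P|$ of these requests inside $P$. Hence at least $(R_P-|S_m\cap P|)_+$ of them are matched to servers outside $P$, so the number of crossing matches is at least $\sum_P (R_P-|S_m\cap P|)_+$. The future requests are independent of $S_m$ and $V$, and $R_P\sim\mathrm{Bin}(m-k,|P|)$ has mean $(m-k)|P|\ge m|P|-k|P|$. Since $z\mapsto(z-a)_+$ is convex, Jensen conditional on $(S_m,V)$ gives
\[
\E(R_P-|S_m\cap P|)_+\ge\E\bigl(m|P|-|S_m\cap P|\bigr)_+-k|P|.
\]
Summing over cells and using $\sum_P|P|\le1$ together with \eqref{eq:service-hypothesis} bounds the expected number of crossings from below by $d_0T_m-cT_m-1\ge d_0T_m/2$. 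This requires $c\le d_0/4$, and uses $T_m\ge n^{1/4}/C_1\to\infty$ for admissible $m$.

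\textbf{Step 2: crossings are long.} A crossing match has length at least the distance from the request to $\partial P$. The request is uniform on $P$ given that it lands in $P$, so the probability that it lies within $\epsilon s$ of $\partial P$ is $O(\epsilon)$. This holds only marginally for each request. Conditioning on landing in $P$ does not correlate the position within $P$ with the event of crossing, since crossing depends on occupancy rather than position, but a union bound suffices anyway. The expected number of requests among the $m-k$ that land in some cell within distance $\epsilon s$ of that cell's boundary, or in the uncovered strip near $\partial[0,1]^2$, is $O(\epsilon m+ms)=O(\epsilon m)+O(\sqrt n)$. This must be compared with $d_0T_m=d_0m/C_1$. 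Choosing $\epsilon=c'd_0/C_1$ makes the first term at most $d_0T_m/8$. The strip term needs care: requests outside every cell of $\mathcal C_V$ are also uncontrolled, since $\mathcal C_V$ only includes full squares, and their expected count is $O(m s)=O(C_1\sqrt n)$. This is $o(T_m)$ exactly when $m\gg\sqrt n$, which holds for admissible $m\ge n^{3/4}$. Therefore at least $d_0T_m/4$ crossings in expectation have length at least $\epsilon s\ge\epsilon h_m/2$.

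\textbf{Step 3: sum.} We obtain $\E G_{[cT_m,m]}\ge (d_0T_m/4)(\epsilon h_m/2)=\Omega(d_0^2 T_mh_m/C_1)=\Omega(\sqrt n)$, because $T_mh_m=\sqrt n$ and $d_0$, $C_1$ are constants.

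\textbf{Main obstacle.} The delicate point is Step 1. The Jensen step must be done conditionally on $S_m$ and $V$, with the future requests independent of both. We must also handle that $|P|$ varies only through whether $P$ fits entirely inside the square. We first verify that $k|P|$ summed over cells is at most $k$ and is absorbed because $c$ is small relative to $d_0$. The dependence of $c$ on $d_0$ alone then comes out directly, and the dependence on $C_1$ can be absorbed since $C_1$ is a fixed universal constant.
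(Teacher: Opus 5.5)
Your proof is correct, and it differs from the paper's only in how crossings are converted into cost. Step 1 is essentially the paper's argument: a sample-path flow bound, then conditional Jensen given the grid and $S_m$, absorbing a loss of $k\sum_P|P|\le k$.

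In Step 2, the paper never shows that individual crossings are long. It conditions on all input locations, so Greedy's matches become fixed segments while the shift $V$ stays uniform. It then applies the random-grid separation lemma to each match, which gives $\E N_{\rm cross}=O(\E G_{[r+1,m]}/h_m)$ directly.

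You instead use a deterministic boundary-layer argument. A request at distance at least $\epsilon s$ from the boundary of its cell that is served outside the cell pays at least $\epsilon s$. For every fixed shift, the layers of width $\epsilon s$ have total area at most $4\epsilon$, so at most $4\epsilon m$ requests fall there in expectation. Subtracting these from the forced crossings leaves $\Omega(d_0T_m)$ long crossings.

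What your route buys is elementarity. It uses no randomness of $V$ in the cost conversion, so it would work for any fixed grid satisfying the deficit hypothesis. What it costs is constants. The forced crossings are only a $d_0/C_1$ fraction of the roughly $m$ requests, so you need $\epsilon\asymp d_0/C_1$. You therefore get $\Omega(d_0^2\sqrt n/C_1)$ rather than the paper's $\Omega(d_0\sqrt n)$. This is harmless, since $C_1$ is fixed before the lemma is invoked.

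Two small remarks:
\begin{itemize}
\item Your parenthetical claim that crossing depends on occupancy rather than position is false. Greedy can send a near-boundary request across the boundary even when its own cell still has servers. You correctly do not rely on it, because the linearity-of-expectation subtraction needs no independence.
\item Subtracting the uncovered strip near $\partial[0,1]^2$ is unnecessary. Requests outside the cells of $\mathcal C_V$ never enter your Step 1 lower bound.
\end{itemize}
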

\end{samepage}

The proof of Lemma~\ref{prop:delayed-service} is deferred to Subsection~\ref{sec:proof-delayed-service}.
Next, we establish the deficit hypothesis \eqref{eq:service-hypothesis} on the square
from a corresponding deficit bound for the torus configuration.

\paragraph{From torus server deficit to square server deficit.}
Fix an admissible rank $m$, and recall that $s=\lceil1/h_m\rceil^{-1}\in[h_m/2,h_m]$.
For a uniform shift $V\sim\operatorname{Unif}([0,s)^2)$
independent of the inputs, define the periodic grid by
\[
  \mathcal C_V^{\mathbb T}
  :=\left\{(V+sj+[0,s)^2)\bmod1:
        j\in\{0,\ldots,s^{-1}-1\}^2\right\}.
\]%
Since $s^{-1}$ is an integer, these $s^{-2}$ cells partition
$\mathbb T^2$, and each has area $s^2$.  A cell crossing a boundary of
$[0,1)^2$ continues from the opposite boundary and is counted as one cell,
even if its representation in $[0,1)^2$ consists of several pieces.

The following lemma transfers a lower bound on the total deficit
in the torus grid $\mathcal C_V^{\mathbb T}$ to the total deficit in
the square grid $\mathcal C_V$.

\begin{lemma}\label{lem:torus-square-deficit}
Fix any constant $c_{\rm def}>0$.
For every admissible rank $m$, if
\begin{equation}\label{eq:adaptive-deficit}
  \E\sum_{P\in\mathcal C_V^{\mathbb T}}(m|P|-|B_m\cap P|)_+
  \ge c_{\rm def}T_m,
\end{equation}
then
\begin{equation}\label{eq:square-uncut-deficit}
  \E\sum_{P\in\mathcal C_V}(m|P|-|S_m\cap P|)_+
  \ge \frac{c_{\rm def}}2T_m.
\end{equation}%
Each expectation is over the inputs of the corresponding Greedy
process and the independent grid shift $V$.
\end{lemma}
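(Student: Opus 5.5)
We plan to prove Lemma~\ref{lem:torus-square-deficit} through the random-cut coupling of \Cref{sec:random-cut-discrepancy}. Draw the torus inputs, the cut $U$, and the grid shift $V$ independently. By Lemma~\ref{lem:unwrapped-greedy-law}, conditional on $U$ the configuration $\kappa_U(A_m)$ has the law of $S_m$. Since $V$ is independent of everything, the left side of \eqref{eq:square-uncut-deficit} therefore equals
\[
  \E\sum_{P\in\mathcal C_V}\bigl(m|P|-|\kappa_U(A_m)\cap P|\bigr)_+ .
\]
We then pull the square grid back to the torus. Because $\kappa_U$ is the translation $z\mapsto z-U \pmod 1$, a square cell $P\in\mathcal C_V$ lying entirely in $[0,1]^2$ is the image under $\kappa_U$ of the torus cell $\kappa_U^{-1}(P)$. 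That torus cell belongs to the periodic grid with shift $(V+U)\bmod s$, and this shift is again uniform on $[0,s)^2$ and independent of the inputs.

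The key step is a cellwise comparison. For each uncut cell $P$, the map $z\mapsto(z)_+$ is $1$-Lipschitz, so
\[
  \bigl(m|P|-|\kappa_U(A_m)\cap P|\bigr)_+
  \ \ge\ \bigl(m|P|-|B_m\cap\kappa_U^{-1}(P)|\bigr)_+
  -\bigl||A_m\cap\kappa_U^{-1}(P)|-|B_m\cap\kappa_U^{-1}(P)|\bigr|.
\]
Summed over the disjoint cells, the error terms total at most $|A_m\triangle B_m|=2K_m$. Hence the square deficit is at least the torus deficit over the torus cells that are not split by a seam, minus $2K_m$.

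Two terms remain to control. The first is the cells split by the seams. There are at most $O(1/s)=O(h_m^{-1})$ of them, namely the cells meeting the lines $z_i=U_i$. Each split cell contributes at most $m|P|=ms^2$ to the torus deficit, so the total loss is $O(ms)=O(mh_m)=O(C_1\sqrt n)$. This looks too large compared with $T_m=m/C_1$ until we use admissibility: $m\ge n^{3/4}$ gives $\sqrt n\le m n^{-1/4}$, so the loss is $o(T_m)$ for fixed $C_1$ and large $n$.

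The second term is $2\E K_m$, which Lemma~\ref{lem:discrepancy} bounds by $O(M_n)=O(\sqrt n\log n)$. Again $m\ge n^{3/4}$ makes this $O(n^{-1/4}\log n)\,m=o(T_m)$. For $n$ large, both losses together are below $\frac{c_{\rm def}}{2}T_m$. Combining this with \eqref{eq:adaptive-deficit}, applied with the uniform shift $(V+U)\bmod s$, yields \eqref{eq:square-uncut-deficit}.

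We expect the main obstacle to be the bookkeeping of the grid identification. We must check that the torus grid relevant to the square configuration has a uniformly distributed shift independent of $B_m$. This uses the independence of $U$ and $V$ and that $B_m$ does not depend on $U$. We must also check that the square cells are exactly the unwrapped images of the torus cells that are not split by a seam, up to boundary cells that are not entirely inside $[0,1]^2$. Those excluded boundary cells are absorbed into the same $O(mh_m)$ count.
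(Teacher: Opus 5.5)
Your proposal is correct and follows essentially the same route as the paper: the random-cut coupling via Lemma~\ref{lem:unwrapped-greedy-law}, the $1$-Lipschitz comparison of positive parts costing at most $2K_m$, an $O(ms)=O(mh_m)$ loss from seam-split cells, and Lemma~\ref{lem:discrepancy} together with $m\ge n^{3/4}$ to make both losses $o(T_m)$. The only difference is the direction of the grid transfer: you pull the square grid back to the torus with shift $(V+U)\bmod s$, whereas the paper pushes the torus grid forward to the square with shift $(V-U)\bmod s$, which only needs uncut torus cells to map injectively to full square cells rather than the bijection you verify.
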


The proof of Lemma~\ref{lem:torus-square-deficit} is deferred to
\Cref{sec:proof-torus-square-deficit}.
It remains to prove the torus deficit bound \eqref{eq:adaptive-deficit}.

\paragraph{From request fluctuations to a torus server deficit.}
Fix an admissible rank $m$ and stop the torus Greedy process after matching
$X_1,\ldots,X_{n-m}$, when its remaining server set is $B_m$.
For each $1\le u\le n-m$, let $\mathsf G^{(-u)}$ denote
the torus Greedy process on the same initial servers that serves all
requests except $X_u$ in their original order.
After slot $n-m$, Lemma~\ref{lem:greedy-stability} gives its
unmatched server set as $B_m\cup\{Z_{m,u}\}$ for a unique extra server
$Z_{m,u}\notin B_m$.

The index $m$ specifies the comparison time, when the original
process has $m$ unmatched servers and $\mathsf G^{(-u)}$ has $m+1$.
Here $Z_{m,u}$ is the extra server at the end of the comparison,
which need not be the server originally matched to $X_u$, since
skipping $X_u$ can change later matches.

Recall that we apply an independent grid shift $V\sim\operatorname{Unif}([0,s)^2)$, where $s=\lceil1/h_m\rceil^{-1}$.
Restoring $X_u$ adds a request in the cell of $X_u$
and removes a remaining server in the cell of $Z_{m,u}$.  If these cells
coincide, the request count and the server count change by opposite
amounts in the same cell.  Otherwise, we view the comparison as incurring an error.
The random-grid separation bound given in
\Cref{lem:random-grid-separation} controls the probability of this
error by $O(d_{\mathbb T}(X_u,Z_{m,u})/h_m)$.

Notice that past-request counts follow binomial distributions with expected total
absolute deviation $\Theta(\sqrt n/h_m)=\Theta(T_m)$.
The next lemma transfers these request-count fluctuations to
the remaining-server counts when the sum of the expected distances
from each skipped request $X_u$ to its extra server $Z_{m,u}$ is
sufficiently small.
Informally, this additional condition ensures that the total comparison-error probability remains small, and hence server-count fluctuations are comparable to request-count fluctuations.

\begin{samepage}
\begin{lemma}\label{prop:demand-localization}
\leavevmode%
There exists a universal constant $\varepsilon>0$ such that,
for every admissible rank $m$, if
\begin{equation}\label{eq:endpoint-first-hypothesis}
  \sum_{u=1}^{n-m}\E d_{\mathbb T}(X_u,Z_{m,u})
  \le\varepsilon n h_m,
\end{equation}
then
\begin{equation}\label{eq:localized-deficit}
  \E\sum_{P\in\mathcal C_V^{\mathbb T}}
       (m|P|-|B_m\cap P|)_+=\Omega(T_m).
\end{equation}
The expectation includes the inputs and the independent grid shift $V$.
\end{lemma}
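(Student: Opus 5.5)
The plan is to prove \eqref{eq:localized-deficit} by showing that the combined count $W_P:=|B_m\cap P|+R_P$ has small aggregate variance. Here $R_P:=|\{u\le n-m:X_u\in P\}|$ is the past-request count of a torus cell $P$.

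\emph{Step 1: the target.} Observe that $|B_m\cap P|=W_P-R_P$. The requests $X_1,\dots,X_{n-m}$ are iid uniform and independent of $V$, so conditional on $V$ each $R_P$ is $\mathrm{Bin}(n-m,|P|)$ with $|P|=s^2$. Since $m\le c_0n$ and $m|P|\ge m h_m^2/4\to\infty$ for admissible $m$, the mean is $(n-m)|P|\gg1$, and the normal approximation (or a Paley--Zygmund bound on $(R_P-\E R_P)_+$) gives
\[
\E\bigl(R_P-(n-m)|P|\bigr)_+\ge c\sqrt{n|P|}.
\]
Summing over the $s^{-2}$ cells yields $\E\sum_P(R_P-\E R_P)_+\ge c\sqrt n/s\ge c\sqrt n/h_m=cT_m$. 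Next note the mean identity
\[
\E W_P=m|P|+(n-m)|P|=n|P|,
\]
which follows from Fact~\ref{fact:torus-translation-stationarity} and the fact that $V$ is independent of the inputs. Using $(a+b)_+\ge a_+-|b|$, we get
\[
(m|P|-|B_m\cap P|)_+\ \ge\ (R_P-\E R_P)_+-|W_P-\E W_P|.
\]
So it suffices to show $\E\sum_P|W_P-n|P||\le (c/2)T_m$. By Cauchy--Schwarz over the $s^{-2}$ cells, this reduces to showing $\E\sum_P\Var(W_P\mid V)\le \eta\, n$ for a small constant $\eta$, because $\sqrt{s^{-2}\eta n}=\sqrt\eta\,\sqrt n/s=O(\sqrt\eta\, T_m)$.

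\emph{Step 2: Efron--Stein by omitting one input.} Condition on $V$ and apply \eqref{eq:efron-stein-deletion} to each $W_P$, with inputs $Y_1,\dots,Y_n,X_1,\dots,X_{n-m}$. The choice of leave-one-out functions depends on which input is omitted.

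For a request $X_u$, take $W_P^{(-u)}$ to be the combined count of process $\mathsf G^{(-u)}$, that is, $|(B_m\cup\{Z_{m,u}\})\cap P|+R_P-\mathbf 1\{X_u\in P\}$. Then
\[
W_P-W_P^{(-u)}=\mathbf 1\{X_u\in P\}-\mathbf 1\{Z_{m,u}\in P\}.
\]
Summed over $P$, this is $0$ if $X_u$ and $Z_{m,u}$ share a cell and $2$ otherwise. By Lemma~\ref{lem:random-grid-separation} (with $V$ independent of everything else), the expected contribution is $O(\E d_{\mathbb T}(X_u,Z_{m,u})/h_m)$. Summing over $u$ and using \eqref{eq:endpoint-first-hypothesis} gives $O(\varepsilon n)$.

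For a server $Y_j$, omitting it and running Greedy from $n-1$ servers on the same $n-m$ requests leaves $m-1$ servers. The stability argument behind Lemma~\ref{lem:greedy-stability} shows this is $B_m$ minus one server, so the squared change summed over cells is at most $1$. This naive bound gives $O(n)$ in total, which is too large.

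\emph{Step 3: the main obstacle.} The difficulty is the server contribution: we need $\sum_j\E\sum_P(\cdot)^2\le\eta n$ with $\eta$ small, not merely $O(n)$. I would handle it with a better leave-one-out choice. Let $f^{(-j)}$ be the combined count in which $Y_j$ is also dropped and $m|P|$-type mass is added back in expectation. Specifically, compare the servers in a cell with the requests in that cell: a server $Y_j$ that is matched by time $n-m$ is matched to some request $X_{u}$, and the pair lies in one cell unless it is cut. Dropping both, or equivalently the pair, changes $W_P$ only by a cross-cell error. The remaining issue is the servers still unmatched at rank $m$; there are $m\le c_0n$ of them, each contributing at most $1$, for a total of $c_0n$. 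Matched pairs cut by the grid contribute $O(\sum_k\E D_k^{\mathbb T}/h_m)$, and by Lemma~\ref{prop:torus-edge} this is $O(\sqrt n\log n/h_m)=O(m\log n/C_1)$. The step I expect to be delicate is making ``drop a server'' correspond to a cancellation. I would choose $f^{(-j)}$ as the count obtained by omitting $Y_j$ together with its matched request, and control the ensuing chain reaction via stability plus the same endpoint-distance quantity $Z$ already assumed in \eqref{eq:endpoint-first-hypothesis}.

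Finally, choose $c_0$ and $\varepsilon$ small so that the total error satisfies $\eta\le(c/4)^2$. This forces the deficit to be at least $(c/2)T_m$.
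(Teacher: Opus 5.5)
\textbf{There is a genuine gap, and it is not confined to Step~3.} The target of Steps~1--2 is that your combined count $W_P=|B_m\cap P|+R_P$ satisfies $\E\sum_P|W_P-n|P||\le (c/2)T_m$, or equivalently $\sum_P\Var(W_P\mid V)\le\eta n$ with $\eta$ small. This cannot be expected to hold.

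Write $Y_P$ for the number of initial servers in $P$. Write $\mathrm{out}_P$ and $\mathrm{in}_P$ for the numbers of past matches leaving and entering $P$. Then, exactly,
\[
  W_P=Y_P+\mathrm{out}_P-\mathrm{in}_P .
\]
In the idealized case where no past match crosses a cell boundary, $W_P=Y_P\sim\mathrm{Bin}(n,s^2)$. This gives $\E\sum_P|W_P-n|P||\approx\sqrt{2/\pi}\,\sqrt n/s$, which already exceeds your lower bound $\E\sum_P(R_P-\E R_P)_+\approx\tfrac12\sqrt{2/\pi}\,\sqrt{n-m}/s$. Your final inequality would then be vacuous.

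The hypothesis \eqref{eq:endpoint-first-hypothesis} concerns only skipped requests. It gives no reason for the crossing flux to cancel $Y_P-n|P|$ cell by cell; heuristically, Greedy does not forget the initial servers at scale $h_m$. If the true variance is of order $n$, no choice of leave-one-out functions for the servers can make their Efron--Stein contribution $o(n)$.

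Your specific choice also fails on its own terms:
\begin{itemize}
\item Removing $Y_j$ together with the request $X_u$ it serves leaves the remaining configuration unchanged. Earlier requests did not pick $Y_j$, and after slot $u$ the two processes coincide. So $W$ drops by exactly $e_V(X_u)$, a squared change of $1$, not a cross-cell error.
\item This $f^{(-j)}$ depends on $Y_j$ through the index $u$, so it is not admissible in \eqref{eq:efron-stein-deletion}.
\end{itemize}

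\textbf{The missing idea is to never center at $n|P|$ and never resample the servers.} Condition on $\mathcal F_0=\sigma(Y_1,\ldots,Y_n,V)$ and apply \eqref{eq:efron-stein-deletion} only over $X_1,\ldots,X_{n-m}$. Your request computation in Step~2 then gives
\[
  \E\|W-\E[W\mid\mathcal F_0]\|_1=O(\sqrt\varepsilon\,T_m).
\]
The server-driven part $\E[L\mid\mathcal F_0]-\bar L$, with $L=(|B_m\cap P|)_P$ and $\bar L=(m|P|)_P$, need not be small. Since $R-\E R$ has conditional mean zero given $\mathcal F_0$ and $\bar L$ is $\mathcal F_0$-measurable,
\[
  R-\E R=(W-\E[W\mid\mathcal F_0])-(L-\bar L)+\E[L-\bar L\mid\mathcal F_0].
\]
Conditional Jensen bounds the expected $\ell_1$ norm of the last term by $\E\|L-\bar L\|_1$. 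Hence
\[
  2\E\|L-\bar L\|_1\ge\E\|R-\E R\|_1-O(\sqrt\varepsilon\,T_m)=\Omega(T_m),
\]
and $\sum_P(m|P|-|B_m\cap P|)_+=\tfrac12\|L-\bar L\|_1$ finishes the argument. The server fluctuations are absorbed into the deficit itself rather than treated as an error term. This is exactly how the paper's proof proceeds.
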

\end{samepage}

We prove Lemma~\ref{prop:demand-localization} in \Cref{sec:proof-demand-localization}.
Its hypothesis \eqref{eq:endpoint-first-hypothesis} follows from
the next lemma, whose proof is deferred to \Cref{sec:proof-endpoint-localization}.

\begin{lemma}\label{prop:endpoint-localization}
For every $\sqrt n\le m<n$ and $1\le u\le n-m$,
\begin{equation}\label{eq:endpoint-first}
  \E d_{\mathbb T}(X_u,Z_{m,u})
  =O\!\left(\E D_{m+1}^{\mathbb T}\right)
  =O\!\left(\frac{\sqrt n}{m}\right).
\end{equation}
\end{lemma}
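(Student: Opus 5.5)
The plan is to follow the route sketched in the overview. I fix $u$ and compare the original torus process with the skipping process $\mathsf G^{(-u)}$ step by step. I track the single extra server as a random walk in the periodic lift, and I show that this walk is a martingale whose squared increments are dominated by increments of the potential $Q$ from Lemma~\ref{lem:periodic-voronoi-deletion}.

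\emph{Setup.} Let $\Sigma'_k$ be the state of $\mathsf G^{(-u)}$ after it has served the requests $X_1,\dots,X_k$ other than $X_u$, and let $\Sigma_k$ be the state of the original process. Before slot $u$ the two processes coincide. At slot $u$ the original process deletes $W_u:=\NN^{d_{\mathbb T}}_{\Sigma'_{u-1}}(X_u)$, and $\mathsf G^{(-u)}$ deletes nothing. By Lemma~\ref{lem:greedy-stability}, for every $k\ge u$ we have $\Sigma'_k=\Sigma_k\cup\{W_k\}$ for a single extra server $W_k$. A short case check gives the update rule. If $X_k$ (for $k>u$) falls outside the Voronoi cell of $W_k$ in $\Sigma'_{k-1}$, both processes delete the same server and $W$ does not move. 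If $X_k$ falls inside it, $\mathsf G^{(-u)}$ deletes $W_k$, the original process deletes the nearest server in $\Sigma'_{k-1}\setminus\{W_k\}$, and the new extra server is that replacement. In the periodic lift, I place the first copy of $W_u$ at the copy nearest to $X_u$. At each move I add the replacement vector $T_{W}(X_k-W)$, so that the lifted endpoint $\widetilde Z$ satisfies $d_{\mathbb T}(X_u,Z_{m,u})\le\|X_u-\widetilde Z\|_2$.

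\emph{Key conditioning.} I condition on the whole trajectory $\mathcal F$ of $\mathsf G^{(-u)}$, which does not use $X_u$. Given $\mathcal F$, each $X_k$ with $k\ne u$ is independent and uniform on the Voronoi cell (in $\Sigma'_{k-1}$) of the server that $\mathsf G^{(-u)}$ deletes, and $X_u$ is uniform on the torus. By induction over $k$, I will show that $\Pp(W_k=s\mid\mathcal F)=v_s(\Sigma'_k)$, the Voronoi volume of $s$.
\begin{itemize}
\item At $k=u$ this holds because $X_u$ is uniform.
\item At a step where $\mathsf G^{(-u)}$ deletes $s'$, the mass $v_{s'}$ is redistributed exactly according to the regions $|\mathcal V_{s'}\cap\mathcal V_t^{\rm new}|$, because $X_k$ is uniform on $\mathcal V_{s'}$.
\end{itemize}
Given $\mathcal F$ and the current position, the increment has zero mean by the first identity of Lemma~\ref{lem:periodic-voronoi-deletion}, so the lifted position is a martingale. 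Averaging the second estimate of that lemma over the law of $W$ yields
\[
\E\bigl[\|\text{increment}\|_2^2\mid\mathcal F\bigr]\le v_{s'}\cdot\frac{1}{v_{s'}}\int_{\mathcal V_{s'}}\|T_{s'}\|_2^2\le(d+1)\bigl(Q(\Sigma'_k)-Q(\Sigma'_{k-1})\bigr).
\]
These bounds telescope, and $Q$ is monotone under deletion. The initial offset $\|X_u-\widetilde W_u\|$ has conditional second moment $Q(\Sigma'_{u-1})\le Q(\Sigma'_{\rm final})$. Orthogonality of martingale increments then gives
\[
\E\bigl[d_{\mathbb T}(X_u,Z_{m,u})^2\mid\mathcal F\bigr]\le(d+2)\,Q(\Sigma'_{\rm final}).
\]
Here $\Sigma'_{\rm final}$ is a rank-$(m+1)$ torus Greedy state, so $\E Q(\Sigma'_{\rm final})=\E(D^{\mathbb T}_{m+1})^2$.

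\emph{Main obstacle.} The remaining step is to pass from this second-moment bound to $O(\E D^{\mathbb T}_{m+1})=O(\sqrt n/m)$. Applying Jensen to conditional expectations gives $\E d_{\mathbb T}(X_u,Z_{m,u})\le C\,\E\sqrt{Q(\Sigma'_{\rm final})}$. However, integrating the tail \eqref{eq:torus-tail} to bound $\E Q$ in $d=2$ yields only $(n/m^2)\log(m^2/n)$, which loses a logarithm. To remove it, I would first try to sharpen the large-$r$ tail of the empty-cell probability at rank $m+1$. I would use a higher-moment Efron--Stein/Boucheron-type bound on the cell counts, since the stability of Corollary~\ref{lem:swap} also controls higher differences. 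The aim is $\Pp\{D^{\mathbb T}_{m+1}>\sqrt d\,r\}=O((\ell_{m+1}/r)^{2+\eta})$ for some $\eta>0$, which makes the second moment $O(\ell_{m+1}^2)$. If that fails, a fallback is to truncate the walk: increments longer than a threshold $\lambda\ell_{m+1}$ are controlled in $L^1$ through the tail bound, and only the short increments go through the $L^2$ martingale argument. Choosing $\lambda$ balances the two contributions. In either case the final equality $O(\sqrt n/m)$ follows from \eqref{eq:torus-edge}, since $m\ge\sqrt n$ gives $\ell_{m+1}\asymp\sqrt n/m$ for $d=2$.
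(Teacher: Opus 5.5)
\paragraph{The gap: an $L^2$ bound where an $L^1$ bound is needed.}
Your martingale computation is correct as far as it goes. Conditioning on the trajectory of $\mathsf G^{(-u)}$, the Voronoi-volume law of the extra server, the zero-drift identity, and the telescoping of $(d+1)\Delta Q$ do give $\E[d_{\mathbb T}(X_u,Z_{m,u})^2\mid\mathcal F]\le(d+2)\,Q(\Sigma'_{\rm final})$. However, the statement is a first-moment bound, and passing from this $L^2$ estimate to $O(\E D^{\mathbb T}_{m+1})$ is the heart of the proof, which you leave open. Via $\E\sqrt{Q}$ you only get $O\bigl(\frac{\sqrt n}{m}\sqrt{\log(m^2/n)}\bigr)$, a $\sqrt{\log n}$ loss at every admissible rank. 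That loss also breaks the later choice of a constant $C_1$ in the hypothesis $\sum_u\E d_{\mathbb T}(X_u,Z_{m,u})\le\varepsilon nh_m$.

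Neither remedy is carried out, and both face concrete obstacles.
\begin{itemize}
\item For (a), moment inequalities of Boucheron et al.\ type bound $\E|N_P-\E N_P|^q$ through moments of the single-cell variance proxy $V_P$, the number of inputs whose resampling changes $N_P$. Corollary~\ref{lem:swap} gives only bounded differences and $\sum_P\E V_P=O(n)$. With the available bound $V_P=O(n)$, the $q$-th moment tail is worse than Chebyshev on the whole range $\ell\le r\le\sqrt\ell$, so the logarithm survives. A tail $O((\ell/r)^{2+\eta})$ would require a genuinely new concentration result for Greedy.
\item For (b), truncated increments are no longer centered. Their squared sum is still controlled only by the telescoping $Q$ bound, which carries the log, unless you have a truncated analogue of the second Voronoi estimate, which you do not.
\item Also in (b), long increments are size-biased. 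The mark sits at the deleted server with probability equal to that server's cell volume, so long replacement vectors, which arise in large cells, are overweighted relative to a fresh request. The tail bound \eqref{eq:torus-tail} therefore does not control their $L^1$ mass. The natural estimate $\int_C\|T\|\mathbf 1_{\{\|T\|>h\}}\le h^{-1}\int_C\|T\|^2$ again telescopes to $h^{-1}\E Q$ and brings back the log.
\end{itemize}

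\paragraph{The missing idea: localize your potential argument scale by scale.}
Fix the deletion sequence and put $\rho:=d_{\mathbb T}(\cdot,F)$. The paper proves, for every $h>0$,
\[
\E\min\{d_{\mathbb T}(X,L_N)^2,h^2\}=O\Bigl(\int\min\{\rho^2,h^2\}\Bigr).
\]
It then integrates against $h^{-2}\,\mathrm{d}h$, using $\int_0^\infty h^{-2}\min\{a^2,h^2\}\,\mathrm{d}h=2a$, to get $\E d_{\mathbb T}(X,L_N)=O(\int\rho)$. Averaging $\int\rho$ gives exactly $\E D^{\mathbb T}_{m+1}$, so no tail information beyond the mean is needed.

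The capped bound uses your two Voronoi identities, but only from \emph{good} starting points: those $x$ with $\rho\le h$ on $B_{\mathbb T}(x,2h)$.
\begin{itemize}
\item While the mark is within distance $h$ of $x$, its current Voronoi cell lies in $B_{\mathbb T}(x,2h)\subseteq\{\rho\le h\}$. So the expected increment of $\min\{d_{\mathbb T}(x,\cdot)^2,h^2\}$ is at most $(d+1)$ times a $Q$-increment supported in $\{\rho\le h\}$.
\item Once the mark is at distance at least $h$ from $x$, the capped potential cannot increase.
\item Using $\Pp\{L_i=r_i\}=v_i(r_i)$, these localized increments telescope to $O(\int_{\{\rho\le h\}}\rho^2)$.
\end{itemize}
Bad starting points contribute $h^2|A_h^c|$, which is $O(\int\min\{\rho^2,h^2\})$ because $\rho$ is $1$-Lipschitz.
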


\paragraph{Putting everything together.}
\leavevmode%
To finish the proof of \Cref{thm:lb-2d}, sum
\eqref{eq:endpoint-first} over the choices of the skipped request $X_u$:
\[
  \sum_{u=1}^{n-m}\E d_{\mathbb T}(X_u,Z_{m,u})
  =O\!\left(\frac{n^{3/2}}{m}\right)
  =O\!\left(\frac{nh_m}{C_1}\right).
\]%
The implicit constant is independent of $C_1$, so choosing $C_1$
sufficiently large ensures \eqref{eq:endpoint-first-hypothesis}.
Lemma~\ref{prop:demand-localization} then gives the torus
deficit \eqref{eq:adaptive-deficit}.
Lemma~\ref{lem:torus-square-deficit} gives the square deficit
bound \eqref{eq:square-uncut-deficit}, which satisfies
\eqref{eq:service-hypothesis} with
$d_0=\min\{c_{\rm def}/2,1\}$.
Lemma~\ref{prop:delayed-service} converts this square deficit
into the cost bound \eqref{eq:square-block-charge}.
The disjoint-interval argument at
the start of the section completes the proof.

\begin{remark}\label{rmk:critical-scale}
At rank $m$, a torus cell of side $s$ has expected remaining-server count
$ms^2$, whereas its past-request count has standard deviation
$\Theta(\sqrt n\,s)$.  These quantities balance at $s\asymp\sqrt n/m$.
Lemma~\ref{prop:endpoint-localization} permits the fluctuation
comparison at a sufficiently large constant multiple of this scale.
The resulting deficit is proportional to $m$, so each cost interval
has endpoints in a fixed ratio.  This is why $\Theta(\log n)$
disjoint intervals contribute to the sharp lower bound.
\end{remark}

\subsection{Proof of Lemma~\ref{prop:delayed-service}}
\label{sec:proof-delayed-service}

We reveal enough future requests to force many matches across grid
cells in expectation, then average over the independent grid shift to
lower-bound their total cost.

Set $r:=\lfloor d_0T_m/2\rfloor < m$ and reveal the next $m-r$ requests,
$X_{n-m+1},\ldots,X_{n-r}$.
The Greedy process
starts with the square server configuration $S_m$ and has $r$ unmatched
servers after these requests have been matched.
For each cell $P\in\mathcal C_V$, let $F_P$ denote the number
of these requests whose locations lie in $P$:
\[
  F_P:=\sum_{i=n-m+1}^{n-r}\mathbf1_{\{X_i\in P\}}.
\]
Intersecting every square of the shifted grid with $[0,1]^2$ gives
a partition consisting of the full cells in $\mathcal C_V$ and the
nonempty boundary portions.
Let $N_{\rm cross}$ denote the number of matches made while serving
these requests for which the request and its assigned server lie in
different cells of this partition.

Among the $F_P$ requests in a cell $P\in\mathcal C_V$, at most
$|S_m\cap P|$ of them can be matched to a server in $P$.  Each remaining
request contributes a distinct match to $N_{\rm cross}$.  Thus, in every sample path,
\begin{equation}\label{eq:flux-identity}
  N_{\rm cross}\ge
  \sum_{P\in\mathcal C_V}(F_P-|S_m\cap P|)_+.
\end{equation}
Condition on $\mathcal H_m:=\sigma(V,Y_1,\ldots,Y_n,X_1,\ldots,X_{n-m})$,
which fixes the grid and $S_m$ while leaving the future requests
independent and uniform.
Thus, for every $P\in\mathcal C_V$,
\begin{equation}\label{eq:origin-noise}
  \E[F_P\mid\mathcal H_m]=(m-r)|P|.
\end{equation}
Jensen's inequality and \eqref{eq:origin-noise}, followed by
$(a-b)_+\ge a_+-b$ for $b\ge0$, give
\begin{equation}\label{eq:residual-deficit}
\begin{aligned}
  \E\bigl[(F_P-|S_m\cap P|)_+\mid\mathcal H_m\bigr]
  \ge ((m-r)|P|-|S_m\cap P|)_+
  \ge (m|P|-|S_m\cap P|)_+-r|P|.
\end{aligned}
\end{equation}
Taking expectations in \eqref{eq:flux-identity}, applying
\eqref{eq:residual-deficit}, and using
$\sum_{P\in\mathcal C_V}|P|\le1$ and \eqref{eq:service-hypothesis}
now yield
\begin{equation}\label{eq:deficit-flux}
\begin{aligned}
  \E N_{\rm cross}
  \ge\E\sum_{P\in\mathcal C_V}(m|P|-|S_m\cap P|)_+-r
  \ge d_0T_m-r\ge\frac{d_0T_m}{2}.
\end{aligned}
\end{equation}

To upper-bound $\E N_{\rm cross}$, condition on all
server and request locations.  All matches by Greedy are then fixed,
while the independent grid shift $V$ remains uniform.
Recall that the matches of requests
$X_{n-m+1},\ldots,X_{n-r}$ have total cost $G_{[r+1,m]}$.
Applying Lemma~\ref{lem:random-grid-separation} to each match,
summing over these matches, and averaging over the input locations gives
\begin{equation}\label{eq:flux-cost}
\E N_{\rm cross}=O\!\left(\frac1{h_m}\E G_{[r+1,m]}\right).
\end{equation}%
Combining \eqref{eq:deficit-flux} and \eqref{eq:flux-cost} gives
\begin{align*}
\E G_{[r+1,m]}=\Omega(d_0h_mT_m)=\Omega(\sqrt n),
\end{align*}
where $h_mT_m=\sqrt n$ by \eqref{eq:adaptive-scale}.
Choose the constant $c$ in the lemma at most $d_0/2$.
Then $r+1>d_0T_m/2\ge cT_m$, so $[r+1,m]\subseteq[cT_m,m]$ and
\eqref{eq:service-conclusion} follows.

\subsection{Proof of Lemma~\ref{lem:torus-square-deficit}}
\label{sec:proof-torus-square-deficit}

Use the coupling from \Cref{sec:random-cut-discrepancy},
with the random cut $U$ independent of the torus inputs and the grid
shift $V$.
Let $\mathcal C_{\rm uncut}(U,V)$ consist of the cells
$P\in\mathcal C_V^{\mathbb T}$ whose interiors avoid the cut lines
$z_1=U_1$ and $z_2=U_2$.
Each cut line meets the interiors of at most $s^{-1}$ grid cells,
each of area $s^2$.
The excluded cells therefore have total area at most
$O(s)\le O(h_m)$.
Since $(m|P|-|B_m\cap P|)_+\le m|P|$ for every cell $P$,
the total deficit of the excluded cells is at most $O(m h_m)$.
For each cell $P\in\mathcal C_V^{\mathbb T}$, compare the numbers
$|A_m\cap P|$ and $|B_m\cap P|$ of remaining servers in that cell.
Servers in $A_m\cap B_m$ contribute equally to these two numbers.
Since $A_m\setminus B_m$ and $B_m\setminus A_m$ each contain $K_m$
servers, summing the absolute differences over all cells gives
\[
  \sum_{P\in\mathcal C_V^{\mathbb T}}
    \bigl|\,|A_m\cap P|-|B_m\cap P|\,\bigr|\le2K_m.
\]
The inequality $|x_+-y_+|\le|x-y|$ shows that replacing $B_m$
by $A_m$ in the retained cells decreases their total deficit by at
most $2K_m$.
Combining this with the deficit bound on the excluded cells gives,
for every realization,
\begin{equation}\label{eq:deficit-transfer}
  \sum_{P\in\mathcal C_{\rm uncut}(U,V)}(m|P|-|A_m\cap P|)_+
  \ge \sum_{P\in\mathcal C_V^{\mathbb T}}(m|P|-|B_m\cap P|)_+-O(mh_m+K_m).
\end{equation}%
Recall from \eqref{eq:rank-cutoff} that $M_n=\sqrt n\log n$
when $d=2$.
Lemma~\ref{lem:discrepancy} therefore gives
\[
  \E K_m\le O(M_n)=O(\sqrt n\log n).
\]
Substituting $h_m=\Theta(\sqrt n/m)$ and $T_m=\Theta(m)$ from
\eqref{eq:intro-localization-scale} gives
\begin{align*}
  \frac{mh_m+\E K_m}{T_m}
  \le O \left( \frac{\sqrt n+ \sqrt n\log n}{m} \right)
  \le O(n^{-1/4}+n^{-1/4}\log n),
\end{align*}
where the last inequality uses $m\ge n^{3/4}$.

Thus, for all sufficiently large $n$, the expected deficit lost
in \eqref{eq:deficit-transfer} is at most $c_{\rm def}T_m/2$ for every
admissible rank $m$.
Taking expectations in \eqref{eq:deficit-transfer}
and using \eqref{eq:adaptive-deficit} therefore gives the intermediate bound
\[
  \E\sum_{P\in\mathcal C_{\rm uncut}(U,V)}(m|P|-|A_m\cap P|)_+
  \ge\frac{c_{\rm def}}2T_m.
\]

We now transfer this bound to the square grid and the original
square Greedy process.
By Lemma~\ref{lem:unwrapped-greedy-law}, conditional on $U$,
unwrapping the Greedy process under $d_U$ gives the joint law of the
original square inputs and Greedy process.
Unwrapping translates the grid by $-U$, giving the square grid
shift $\widehat V=(V-U)\pmod s$.
Conditional on $U$, this shift is uniform on $[0,s)^2$ and
independent of the unwrapped inputs.
This conditional joint law does not depend on $U$.  Thus the
unwrapped process with grid shift $\widehat V$ has the same joint
law as $S_k$ with an independent uniform grid shift $V$.
Because $s^{-1}$ is an integer, every uncut torus cell maps to a full
cell of this square grid, preserving its area and server count.
Every deficit term is nonnegative, so including all full square
grid cells and using the intermediate bound above gives
\begin{align*}
  \E\sum_{Q\in\mathcal C_V}(m|Q|-|S_m\cap Q|)_+
  &=\E\sum_{Q\in\mathcal C_{\widehat V}}
    (m|Q|-|\kappa_U(A_m)\cap Q|)_+\\
  &\ge \E\sum_{P\in\mathcal C_{\rm uncut}(U,V)}
    (m|P|-|A_m\cap P|)_+\\
  &\ge\frac{c_{\rm def}}2T_m.
\end{align*}%
This proves \eqref{eq:square-uncut-deficit}.

\subsection{Proof of Lemma~\ref{prop:demand-localization}}
\label{sec:proof-demand-localization}

Let
\[
  L:=(|B_m\cap P|)_{P\in\mathcal C_V^{\mathbb T}},
  \qquad \bar L:=(m|P|)_{P\in\mathcal C_V^{\mathbb T}}.
\]
Thus $L$ records the remaining server counts, and $\bar L$ is their
uniform benchmark.
Both vectors have total mass $m$, so the positive and negative
parts of $L-\bar L$ have equal total mass, giving the deficit identity
\begin{equation}\label{eq:torus-deficit-identity}
  \sum_{P\in\mathcal C_V^{\mathbb T}}(m|P|-|B_m\cap P|)_+
  =\frac12\|L-\bar L\|_1.
\end{equation}

For a point $x$, let $e_V(x)$ be its cell indicator vector, whose
coordinate at $P$ is $\mathbf1_{\{x\in P\}}$.
Write $P_V(x)$ for the cell containing $x$.
Define the centered request-count vector by
\[
  W:=\sum_{u=1}^{n-m}\left(e_V(X_u)-\frac{\bar L}{m}\right).
\]
Add the remaining-server and past-request counts to form
\[
  H:=L+\sum_{u=1}^{n-m}e_V(X_u).
\]%
We will show that the expected $\ell_1$ deviation of $H$ from
its mean given the initial servers and grid is much smaller than
$\E\|W\|_1$.
The remaining-server counts must therefore fluctuate enough
to offset the request-count fluctuations.

\paragraph{Step 1: the effect of skipping one request.}
Fix $1\le u\le n-m$ and consider $\mathsf G^{(-u)}$ through
slot $n-m$.
Its remaining server-count vector is $L+e_V(Z_{m,u})$, by the
definition of $Z_{m,u}$.
Its combined count vector is therefore
\[
  H^{(-u)}:=L+e_V(Z_{m,u})+
       \sum_{\substack{1\le v\le n-m\\v\ne u}}e_V(X_v).
\]%
Since $\mathsf G^{(-u)}$ never uses $X_u$, the vector
$H^{(-u)}$ is a function only of the initial servers, the grid, and
the other past requests.
Subtracting gives
\begin{equation}\label{eq:skipped-request-conditional-mean}
  H-H^{(-u)}=e_V(X_u)-e_V(Z_{m,u}).
\end{equation}%
Thus skipping $X_u$ leaves the combined counts unchanged
whenever $X_u$ and $Z_{m,u}$ lie in the same grid cell.

\paragraph{Step 2: the combined counts fluctuate little.}
We condition on $\mathcal F_0:=\sigma(Y_1,\ldots,Y_n,V)$,
which fixes the initial server locations and the grid shift.
Since $X_1,\ldots,X_{n-m}$ are independent of
$(Y_1,\ldots,Y_n,V)$, their conditional law given $\mathcal F_0$
is still that of independent uniform torus points.
Under this conditional law, apply Lemma~\ref{lem:efron-stein}(ii)
to each scalar cell count $H_P$ with comparison functions $H_P^{(-u)}$,
using the independent inputs $X_1,\ldots,X_{n-m}$.
These counts are bounded, and Step~1 shows that each $H_P^{(-u)}$
does not depend on $X_u$.
Summing the bounds~\eqref{eq:efron-stein-deletion} over the grid
cells and then averaging over the initial servers and grid gives
\begin{equation}\label{eqn:ortho-martingale-diff}
  \E\bigl\|H-\E[H\mid\mathcal F_0]\bigr\|_2^2
  \le\sum_{u=1}^{n-m}
       \E\|H-H^{(-u)}\|_2^2.
\end{equation}%
By \eqref{eq:skipped-request-conditional-mean}, each squared
norm on the right is zero when the two points lie in the same cell
and equals two otherwise, so \eqref{eqn:ortho-martingale-diff} yields
\begin{equation}\label{eq:remainder-two}
  \E\bigl\|H-\E[H\mid\mathcal F_0]\bigr\|_2^2
  \le2\sum_{u=1}^{n-m}
       \Pp\{P_V(Z_{m,u})\ne P_V(X_u)\}.
\end{equation}%
Each pair $(X_u,Z_{m,u})$ is independent of the grid shift $V$,
so Lemma~\ref{lem:random-grid-separation} and
\eqref{eq:endpoint-first-hypothesis} turn \eqref{eq:remainder-two} into
\[
  \E\bigl\|H-\E[H\mid\mathcal F_0]\bigr\|_2^2
  =O\!\left(\frac1{h_m}\sum_{u=1}^{n-m}
       \E d_{\mathbb T}(X_u,Z_{m,u})\right)
  =O(\varepsilon n).
\]%
Applying Cauchy--Schwarz first over the $s^{-2}$ grid cells
and then to the expectation over all inputs and the grid shift $V$
yields
\begin{equation}\label{eq:comparison-error-one}
\begin{aligned}
  \E\bigl\|H-\E[H\mid\mathcal F_0]\bigr\|_1
  &\le s^{-1}\E\bigl\|H-\E[H\mid\mathcal F_0]\bigr\|_2\\
  &\le s^{-1}\Bigl(\E\bigl\|H-\E[H\mid\mathcal F_0]\bigr\|_2^2\Bigr)^{1/2}\\
  &=O(s^{-1}\sqrt{\varepsilon n})
   =O(\sqrt\varepsilon\,T_m).
\end{aligned}
\end{equation}
The final line uses the preceding $O(\varepsilon n)$
second-moment bound, $s=\Theta(h_m)$, and $T_m=\sqrt n/h_m$.

\paragraph{Step 3: the request fluctuations force a server deficit.}
Notice that, conditional on $V$, each $W_P$ is a centered binomial variable with
parameters $n-m$ and $s^2$.
For every admissible $m$, we have $n-m=\Theta(n)$ and
$s=\Theta(h_m)=o(1)$, so
\[
  \sigma^2:=\E W_P^2=(n-m)s^2(1-s^2)=\Theta(nh_m^2).
\]%
Since $nh_m^2=C_1^2n^2/m^2\ge C_1^2/c_0^2$, choosing the
fixed constant $C_1$ sufficiently large ensures $\sigma^2\ge1$.
The binomial fourth-moment formula then gives
\[
  \E W_P^4=O(\sigma^4+\sigma^2)=O(\sigma^4).
\]%
H\"older's inequality now yields
\[
  \E|W_P|\ge
  \frac{(\E W_P^2)^{3/2}}{(\E W_P^4)^{1/2}}
  =\Omega(\sigma)=\Omega(\sqrt n\,h_m).
\]%
Summing over the $s^{-2}$ cells gives
\begin{equation}\label{eq:request-fluctuation-one}
  \E\|W\|_1=\Omega(s^{-2}\sqrt n\,h_m)=\Omega(T_m).
\end{equation}%
To compare these request fluctuations with the server counts,
note that each request satisfies
$\E[e_V(X_u)\mid\mathcal F_0]=\bar L/m$, so
$\E[W\mid\mathcal F_0]=0$.
Since $H=L+W+(n-m)\bar L/m$ and $\bar L$ is
$\mathcal F_0$-measurable, subtracting conditional means and
rearranging gives
\begin{equation}\label{eq:doob-sum}
  W=\bigl(H-\E[H\mid\mathcal F_0]\bigr)
      -(L-\bar L)+\E[L-\bar L\mid\mathcal F_0].
\end{equation}%
Taking norms in \eqref{eq:doob-sum}, applying the triangle
inequality, and then taking expectations yields
\begin{align*}
  \E\|W\|_1
  &\le\E\bigl\|H-\E[H\mid\mathcal F_0]\bigr\|_1
       +\E\|L-\bar L\|_1
  +\E\bigl\|\E[L-\bar L\mid\mathcal F_0]\bigr\|_1\\
  &\le\E\bigl\|H-\E[H\mid\mathcal F_0]\bigr\|_1
       +2\E\|L-\bar L\|_1,
\end{align*}
where the last inequality is conditional Jensen for the convex
function $z\mapsto\|z\|_1$, followed by taking expectations.
Combining \eqref{eq:comparison-error-one} with the request
fluctuation bound \eqref{eq:request-fluctuation-one}, for
sufficiently small universal $\varepsilon$, gives
\[
  2\E\|L-\bar L\|_1
  \ge\E\|W\|_1
     -\E\bigl\|H-\E[H\mid\mathcal F_0]\bigr\|_1
  =\Omega(T_m).
\]%
The deficit identity \eqref{eq:torus-deficit-identity} now gives \eqref{eq:localized-deficit}.

\subsection{Proof of \texorpdfstring{\Cref{prop:endpoint-localization}}{Lemma~\ref{prop:endpoint-localization}}}
\label{sec:proof-endpoint-localization}

We first analyze a fixed order of server deletions, then apply
the resulting bound to the deletion order in $\mathsf G^{(-u)}$.
Fix a finite set $\Sigma_0\subset\mathbb T^2$ of distinct
server locations and an ordered list $r_0,\ldots,r_{N-1}$ of
distinct servers in $\Sigma_0$, with $0\le N<|\Sigma_0|$.
This list is a \emph{deletion sequence}: step $i$ removes
server $r_i$.
Let $\Sigma_i:=\Sigma_0\setminus\{r_j:0\le j<i\}$ be the
set of servers remaining after $i$ deletions and $F:=\Sigma_N$,
so
\[
  \Sigma_0\supset\Sigma_1\supset\cdots\supset\Sigma_N=F\ne\varnothing,
  \qquad \Sigma_{i+1}=\Sigma_i\setminus\{r_i\},\quad 0\le i<N.
\]%
For $0\le i\le N$ and $s\in\Sigma_i$, write $C_i(s)$ for
the torus Voronoi cell of $s$ in $\Sigma_i$ and
$v_i(s):=|C_i(s)|$.
Sample $X$ uniformly on the torus and $W_i$ uniformly on $C_i(r_i)$,
$0\le i<N$, with all these variables mutually independent.
The choice $W_i\in C_i(r_i)$ ensures that Greedy selects the
prescribed server $r_i$ at step $i$.
To model the extra server, we place a mark on one server and
update its position as servers are removed.
Let $L_i\in\Sigma_i$ denote the server carrying the mark at step $i$,
and define its evolution by
\[
  L_0=\NN_{\Sigma_0}^{d_{\mathbb T}}(X),\qquad
  L_{i+1}=
  \begin{cases}
    L_i,&L_i\ne r_i,\\
    \NN_{\Sigma_{i+1}}^{d_{\mathbb T}}(W_i),&L_i=r_i.
  \end{cases}
\]
The mark records the location $L_i$ of the extra server; it is
a bookkeeping device, and does not affect Greedy's choices.
The mark stays at its current server until that server is deleted.
If the deleted server carries the mark, the mark moves to the
nearest surviving server to the same request $W_i$.

Although the mark may change servers repeatedly, the following
lemma bounds its expected final distance from $X$ by a universal
constant times the average distance to the nearest server in $F$.

\begin{lemma}\label{lem:fixed-deletion-localization}
\leavevmode%
For every fixed deletion sequence and marked process defined above,
\begin{equation}\label{eq:fixed-deletion-first}
  \E d_{\mathbb T}(X,L_N)
  =O\!\left(\int_{\mathbb T^2}d_{\mathbb T}(x,F)\,\mathrm{d}x\right),
\end{equation}
where the expectation is over $X,W_0,\ldots,W_{N-1}$.
\end{lemma}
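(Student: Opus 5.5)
We track the mark in the periodic-lift viewpoint. Lift $X$ to a point $\tilde X\in\mathbb R^2$ and define lifted mark positions $\tilde L_i\in\mathbb R^2$, periodic copies of $L_i$, as follows. $\tilde L_0$ is the copy of $L_0$ nearest to $\tilde X$, so $\|\tilde X-\tilde L_0\|_2=d_{\mathbb T}(X,\Sigma_0)$. When $L_i=r_i$, the request $W_i$ lies in $C_i(r_i)$, so we may write $W_i=r_i+x$ with $x\in\mathcal V_{r_i}(\Sigma_i)$. We then set $\tilde L_{i+1}:=\tilde L_i+T_{r_i}(x)$ with respect to $\Sigma_i$; this is a periodic copy of $\NN_{\Sigma_{i+1}}^{d_{\mathbb T}}(W_i)$. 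Otherwise $\tilde L_{i+1}=\tilde L_i$. The final distance then satisfies $d_{\mathbb T}(X,L_N)\le\|\tilde X-\tilde L_N\|_2$, and it suffices to bound $\E\|\tilde X-\tilde L_N\|_2$.

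Write $\tilde L_N-\tilde L_0=\sum_{i<N}\xi_i$ with $\xi_i:=\tilde L_{i+1}-\tilde L_i$, and let $\mathcal G_i:=\sigma(X,W_0,\ldots,W_{i-1})$. Conditional on $\mathcal G_i$, the variable $W_i$ is independent and uniform on $C_i(r_i)$. The first identity of Lemma~\ref{lem:periodic-voronoi-deletion} therefore gives $\E[\xi_i\mid\mathcal G_i]=\mathbf 1_{\{L_i=r_i\}}v_i(r_i)^{-1}\int T_{r_i}=0$, so the increments form a martingale difference sequence. Orthogonality and the second estimate give
\[
\E\|\tilde L_N-\tilde L_0\|_2^2=\sum_{i<N}\E\|\xi_i\|_2^2\le 3\sum_{i<N}\frac{\Pp\{L_i=r_i\}}{v_i(r_i)}\Delta_{r_i}Q(\Sigma_i).
\]
The key claim is $\Pp\{L_i=s\}=v_i(s)$ for every $s\in\Sigma_i$, that is, the mark is always distributed like the nearest server to a uniform point. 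This holds at $i=0$. For the induction step, if $s\ne r_i$ the mark can arrive at $s$ only from $r_i$, with probability $v_i(r_i)\cdot|C_i(r_i)\cap C_{i+1}(s)|/v_i(r_i)$. Hence $\Pp\{L_{i+1}=s\}=v_i(s)+|C_i(r_i)\cap C_{i+1}(s)|=v_{i+1}(s)$. Substituting, the sum telescopes: $\E\|\tilde L_N-\tilde L_0\|_2^2\le3(Q(F)-Q(\Sigma_0))\le3Q(F)$.

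Combining, $\E\|\tilde X-\tilde L_N\|_2\le\E d_{\mathbb T}(X,\Sigma_0)+\sqrt{3Q(F)}$, and the first term is at most $\int d_{\mathbb T}(x,F)\,\mathrm dx$. This leaves the main obstacle: $\sqrt{Q(F)}$ is an $L^2$ quantity, while \eqref{eq:fixed-deletion-first} asks for an $L^1$ bound. Jensen's inequality goes the wrong way, so the $L^2$ estimate has to be localized. I would first try to condition on the final cell: by the invariance $\Pp\{L_i=s\}=v_i(s)$, $L_N$ has law $v_N(\cdot)$ on $F$. One then wants a conditional version of the martingale bound, namely $\E[\|\tilde X-\tilde L_N\|^2\mid L_N=f]\lesssim v_N(f)^{-1}\int_{C_N(f)}\|x-f\|^2$. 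Summing the square roots then gives $\sum_f v_N(f)\,(\mathrm{avg}_{C_N(f)}\|x-f\|^2)^{1/2}$. In $d=2$, a Voronoi cell of $f$ is star-shaped around $f$, so its second moment about $f$ is comparable to the square of its first moment about $f$ (for cells with $f$ inside, up to a universal constant, by a reverse-H\"older inequality for convex bodies containing the point). That yields $O(\int d_{\mathbb T}(x,F)\,\mathrm dx)$. Making the conditional martingale bound rigorous would run the same argument backward in time, via a reversed martingale from the final cell. I expect this localization to be the hardest step; if it fails, I would fall back on bounding each step's conditional displacement by the local cell geometry.
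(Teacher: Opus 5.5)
\textbf{There is a genuine gap: the localization from $L^2$ to $L^1$, which you defer, is the whole content of the lemma, and the route you sketch for it does not work.}

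Your martingale computation is correct. It uses the same ingredients as the paper: both identities of Lemma~\ref{lem:periodic-voronoi-deletion}, the invariance $\Pp\{L_i=s\}=v_i(s)$, and the telescoping of $Q$. But it only proves $\E d_{\mathbb T}(X,L_N)\le\int\rho+\sqrt{3Q(F)}$ with $\rho:=d_{\mathbb T}(\cdot,F)$, and $\|\rho\|_{L^2}$ is not $O(\|\rho\|_{L^1})$ uniformly over $F$. This loss matters downstream. The tail bound \eqref{eq:torus-tail} only gives $\E Q(B_{m+1})=O\bigl(\frac{n}{m^2}\log\frac{m^2}{n}\bigr)$, so you would get $\E d_{\mathbb T}(X_u,Z_{m,u})=O(\sqrt{n\log n}/m)$. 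That forces $C_1\gtrsim\sqrt{\log n}$ in \eqref{eq:endpoint-first-hypothesis} and destroys the $\Theta(\log n)$ disjoint stages.

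Your proposed per-endpoint bound $\E[\|\tilde X-\tilde L_N\|_2^2\mid L_N=f]\lesssim v_N(f)^{-1}\int_{C_N(f)}\|x-f\|_2^2$ is false already for $N=1$. Take a cluster of diameter $O(\eta)$ in an otherwise empty torus: $r_0=0$, $f=(-\eta,0)$, flanking points $(-\eta/2,\pm\eta)$, and a few points such as $(-2\eta,0)$ and $(-3\eta/2,\pm\eta)$ behind $f$.
\begin{itemize}
\item $r_0$ is a hull vertex with a wide exterior angle, so $|C_0(r_0)|=\Theta(1)$.
\item After $r_0$ is deleted, the flanking points take almost all of $C_0(r_0)$. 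The point $f$ gains only points with $x_1+2|x_2|<\eta/4$.
\item Hence $C_1(f)$ has diameter $O(\eta)$, and $|C_0(r_0)\cap C_1(f)|=\Theta(\eta^2)=\Theta(v_1(f))$.
\end{itemize}
Since $W_0$ is independent of $X$, conditional on $\{L_0=r_0,L_1=f\}$ the start $X$ is still uniform on $C_0(r_0)$. So the left side is $\Theta(1)$ while the right side is $O(\eta^2)$. Conditioning on the endpoint destroys the martingale structure. Time reversal does not restore it either: in reverse, every move of the mark goes to the deleted server $r_i$, so the reversed mark has drift.

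The paper localizes by truncation rather than by conditioning on the endpoint. For each $h>0$ it proves $\E\min\{d_{\mathbb T}(X,L_N)^2,h^2\}=O\bigl(\int\min\{\rho^2,h^2\}\bigr)$. It then integrates against $h^{-2}\,\mathrm{d}h$, using $\int_0^\infty\min\{a^2,h^2\}h^{-2}\,\mathrm{d}h=2a$, to return to an $L^1$ bound.

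For fixed $h$, a starting point $x$ is good ($x\in A_h$) if $\rho\le h$ on $B_{\mathbb T}(x,2h)$. For good $x$, every server $s$ with $d_{\mathbb T}(x,s)<h$ has $C_i(s)\subseteq B_{\mathbb T}(x,2h)\subseteq\{\rho\le h\}$ at every step.
\begin{itemize}
\item When the mark is within $h$ of $x$, your two Voronoi identities bound the averaged one-step increment of the capped potential by $3\int_{\{\rho\le h\}}\Delta_i$.
\item Once the mark is at distance at least $h$ from $x$, the capped increment is nonpositive.
\item Summing with $\Pp\{X\in A_h,L_i=r_i\}\le v_i(r_i)$ telescopes to $3\int_{\{\rho\le h\}}\rho^2$.
\item Bad starting points contribute $h^2|A_h^c|=O\bigl(\int\min\{\rho^2,h^2\}\bigr)$, by a covering argument using that $\rho$ is $1$-Lipschitz.
\end{itemize}
This is a per-step potential argument and needs no orthogonality across steps. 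The cap at $h$ together with the good-start restriction is what keeps the quadratic bookkeeping local. In the example above, for small $h$ the starting points in the wedge are simply bad and are paid for by $h^2|A_h^c|$.
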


We now finish the proof of Lemma~\ref{prop:endpoint-localization}.
Fix the index $u$ of the request to be skipped.
Condition on the initial server configuration and the requests
$X_1,\ldots,X_{u-1}$.
Let $\Sigma_0:=B_{n-u+1}$ be the set of servers still available
in both the original torus Greedy process and $\mathsf G^{(-u)}$
after they have served $X_1,\ldots,X_{u-1}$.
Set $N=n-m-u$ and condition on the identity of the server
matched by $\mathsf G^{(-u)}$ to each of the $N$ requests
$X_{u+1},\ldots,X_{n-m}$.
For $0\le i<N$, write $r_i$ for the server matched to
$X_{u+i+1}$.
The resulting fixed list determines each remaining set
$\Sigma_i=\Sigma_0\setminus\{r_j:0\le j<i\}$ and hence the
Voronoi cell $C_i(r_i)$.
Given the initial servers and the requests before slot $u$,
specifying this list of matches is exactly the event
\[
  \bigcap_{i=0}^{N-1}\{X_{u+i+1}\in C_i(r_i)\}.
\]%
Each event restricts a different independent request to a fixed cell.
Thus the conditional laws of $W_i:=X_{u+i+1}$ are independent and
uniform on $C_i(r_i)$.
The skipped request $X_u$ remains independent and uniform,
because $\mathsf G^{(-u)}$ never uses it.

At slot $u$, the original torus Greedy process matches $X_u$
to its nearest server in $\Sigma_0$ and removes that server, while
$\mathsf G^{(-u)}$ skips $X_u$ and leaves $\Sigma_0$ unchanged.
Thereafter $\mathsf G^{(-u)}$ has one extra unmatched server
compared with the original torus process.
This extra server stays unchanged unless $\mathsf G^{(-u)}$
matches it to a later request; in that case, the new extra server is
the one chosen by the original torus process for that same request.
Thus, under the stated conditioning, its location evolves exactly
as the marked server in Lemma~\ref{lem:fixed-deletion-localization},
and its final location is $Z_{m,u}$.

Let $F_{m,u}:=B_m\cup\{Z_{m,u}\}$ be the unmatched server set
of $\mathsf G^{(-u)}$ after slot $n-m$, which has $m+1$ servers.
Applying \Cref{lem:fixed-deletion-localization} under the preceding
conditioning and then averaging gives
\begin{align*}
  \E d_{\mathbb T}(X_u,Z_{m,u})
  =O\!\left(\E\int_{\mathbb T^2}
       d_{\mathbb T}(x,F_{m,u})\,\mathrm{d}x\right)
  =O\!\left(\E D_{m+1}^{\mathbb T}\right)
   =O\!\left(\frac{\sqrt n}{m}\right).
\end{align*}%
Here $F_{m,u}$ has the law of $B_{m+1}$ because
$\mathsf G^{(-u)}$ has served $n-m-1$ iid uniform requests.
An independent uniform request therefore has expected distance
$\E D_{m+1}^{\mathbb T}$ to $F_{m,u}$, identifying the expected
spatial integral in the second bound.
The final bound is \eqref{eq:torus-edge}.

It remains to prove Lemma~\ref{lem:fixed-deletion-localization}.

\begin{proof}[Proof of Lemma~\ref{lem:fixed-deletion-localization}]
\leavevmode%
Write $\rho(x):=d_{\mathbb T}(x,F)$.
For $h>0$, define the capped squared distance
\[
  \phi_{h,x}(y):=\min\{d_{\mathbb T}(x,y)^2,h^2\}.
\]
We prove the comparison
\begin{equation}\label{eq:stopped-endpoint-tail}
  \E\phi_{h,X}(L_N)
  =O\!\left(\int_{\mathbb T^2}\min\{\rho(x)^2,h^2\}\,\mathrm{d}x\right),
  \qquad \forall h>0.
\end{equation}
This suffices because, for every $a\ge0$,
\[
  \int_0^\infty\frac{\min\{a^2,h^2\}}{h^2}\,\mathrm{d}h=2a.
\]
Indeed, integrating \eqref{eq:stopped-endpoint-tail} against
$h^{-2}\,\mathrm{d}h$ gives
\begin{align*}
  2\E d_{\mathbb T}(X,L_N)
  &=\int_0^\infty\frac{\E\phi_{h,X}(L_N)}{h^2}\,\mathrm{d}h\\
  &=O\!\left(\int_{\mathbb T^2}\int_0^\infty
       \frac{\min\{\rho(x)^2,h^2\}}{h^2}\,\mathrm{d}h\,\mathrm{d}x\right)
   =O\!\left(\int_{\mathbb T^2}\rho(x)\,\mathrm{d}x\right).
\end{align*}
All integrands are nonnegative and measurable, so Tonelli's
theorem justifies interchanging the scale integral with the
expectation and the spatial integral.

It remains to show \eqref{eq:stopped-endpoint-tail}.
Fix $h>0$, write $B_{\mathbb T}(x,r)$ for a closed torus ball of radius $r$ centered at $x$,
and define
\[
  A_h:=\{x\in\mathbb T^2:\rho(y)\le h
          \text{ for every }y\in B_{\mathbb T}(x,2h)\}.
\]
We call the points in $A_h$ \emph{good starting points}.
Since $\phi_{h,x}(y)\le h^2$ for all $x, y \in \mathbb T^2$, we can write
\begin{align}
  \E\phi_{h,X}(L_N)
  \le\E\!\left[\mathbf1_{\{X\in A_h\}}\phi_{h,X}(L_N)\right]
       +h^2|A_h^c|.
    \label{eqn:decom-local-good-bad}
\end{align}
We first bound the second term in the RHS of \eqref{eqn:decom-local-good-bad} corresponding to the bad starting points.
For a bad starting point $x\notin A_h$, choose $y\in B_{\mathbb T}(x,2h)$ with
$\rho(y)>h$.
Since $\rho$ is $1$-Lipschitz, every
$z\in B_{\mathbb T}(y,h/2)$ satisfies
$\min\{\rho(z)^2,h^2\}\ge h^2/4$.
Also, note that $B_{\mathbb T}(y,h/2) \subseteq B_{\mathbb T}(x,5h/2)$.
Consequently, for every $x\in\mathbb T^2$,
\[
  h^2\mathbf1_{A_h^c}(x)
  \le\frac4{|B_{\mathbb T}(0,h/2)|}
       \int_{B_{\mathbb T}(x,5h/2)}
            \min\{\rho(z)^2,h^2\}\,\mathrm{d}z.
\]
Integrating over $x$, translation invariance of the torus and Fubini's theorem give
\begin{align*}
  h^2|A_h^c|
  \le4\frac{|B_{\mathbb T}(0,5h/2)|}{|B_{\mathbb T}(0,h/2)|}
       \int_{\mathbb T^2}\min\{\rho(z)^2,h^2\}\,\mathrm{d}z
  =O\!\left(\int_{\mathbb T^2}\min\{\rho(z)^2,h^2\}\,\mathrm{d}z\right).
\end{align*}
It remains to bound the first term in the RHS of \eqref{eqn:decom-local-good-bad} corresponding to the good starting points.

\paragraph{The distribution of the mark.}
We first show that the marked server $L_i$ follows the distribution induced by Voronoi-cell volumes:
\begin{equation}\label{eq:killed-mark-domination}
  \Pp\{L_i=s\}=v_i(s),\qquad 0\le i\le N,\quad s\in\Sigma_i.
\end{equation}
Here probability is taken over $X,W_0,\ldots,W_{i-1}$
for the fixed deletion sequence.
To see this, start an auxiliary point at $X$ and, at deletion
$i$, replace it by $W_i$ if it lies in $C_i(r_i)$; otherwise leave
it unchanged.
Resampling uniformly within the fixed cell $C_i(r_i)$ preserves
uniformity on the torus.
After each deletion, the nearest available server to this point
follows exactly the update defining $L_i$: outside the deleted cell
the nearest server is unchanged; inside it, the auxiliary point is replaced by $W_i$, whose nearest server in $\Sigma_{i+1}$ becomes the new marked server.
The point therefore stays uniform and its nearest-server
probabilities give \eqref{eq:killed-mark-domination}.

\paragraph{Cell-containment property.}
As the motivation behind defining good starting points, they satisfy the following cell-containment property:
if $x\in A_h$ and $s\in\Sigma_i$ satisfies
$d_{\mathbb T}(x,s)<h$, then it holds that
\begin{align}
  C_i(s)\subseteq B_{\mathbb T}(x,2h)
  \subseteq\{y\in\mathbb T^2:\rho(y)\le h\}.
  \label{eqn:cell-containment}
\end{align}
To prove \eqref{eqn:cell-containment}, any point $y$ with $d_{\mathbb T}(x,y)=2h$ satisfies
\[
  d_{\mathbb T}(y,s)\ge2h-d_{\mathbb T}(x,s)>h
  \ge\rho(y)\ge d_{\mathbb T}(y,\Sigma_i),
\]
so it cannot belong to $C_i(s)$.
Since $C_i(s)$ is closed and connected with $s \in C_i(s)$, we conclude that $C_i(s) \subseteq B_{\mathbb T}(x,2h)$.

\paragraph{Estimate for good starting points.}
Recall that we have $L_{i + 1} = L_i$ when $L_i\ne r_i$, and $W_i$ is uniform in $C_i(r_i)$.
We use the centered cell $\mathcal V_{r_i}$ and replacement vector
$T_{r_i}$ from Lemma~\ref{lem:periodic-voronoi-deletion} for state
$\Sigma_i$, with addition to torus points understood modulo one.
It follows that
\begin{align}
    &\E\!\left[\mathbf1_{\{X\in A_h\}}
      \bigl(\phi_{h,X}(L_{i+1})-\phi_{h,X}(L_i)\bigr)\right] \notag \\
    &\qquad= \E\! \left[ \mathbf1_{\{X\in A_h, L_i = r_i\}} \cdot \frac1{v_i(r_i)}\int_{\mathcal V_{r_i}} \bigl(\phi_{h,X}(r_i+T_{r_i}(z))-\phi_{h,X}(r_i)\bigr)\,\mathrm{d}z \right]. \label{eqn:decom-quadra-add}
\end{align}
For $0\le i<N$, write $\Delta_i(y):=d_{\mathbb T}(y,\Sigma_{i+1})^2-d_{\mathbb T}(y,\Sigma_i)^2\ge0$.
Note that $\Delta_i(\cdot)$ vanishes outside the deleted cell $C_i(r_i)$.
We will show that 
\begin{align}
    \int_{\mathcal V_{r_i}}
    \bigl(\phi_{h,x}(r_i+T_{r_i}(z))-\phi_{h,x}(r_i)\bigr)\,\mathrm{d}z
    \leq 3\int_{\{y:\rho(y)\le h\}}\Delta_i(y)\,\mathrm{d}y, \qquad \forall x \in A_h.
    \label{eqn:cell-average-ub}
\end{align}
To see that \eqref{eqn:cell-average-ub} suffices, we have
\begin{align*}
    \E\!\left[\mathbf1_{\{X\in A_h\}}\phi_{h,X}(L_N)\right]
    &= \sum_{i=0}^{N - 1} \E\!\left[\mathbf1_{\{X\in A_h\}}
      \bigl(\phi_{h,X}(L_{i+1})-\phi_{h,X}(L_i)\bigr)\right] + \E\! \left[ \mathbf1_{\{X\in A_h\}} \phi_{h, X}(L_0) \right] \\
    &\leq 3\sum_{i=0}^{N - 1} \frac{\Pp\{X\in A_h,L_i=r_i\}}{v_i(r_i)}
           \int_{\{y:\rho(y)\le h\}}\Delta_i(y)\,\mathrm{d}y + \int_{A_h}d_{\mathbb T}(x,\Sigma_0)^2\,\mathrm{d}x \\
   &\le 3 \sum_{i=0}^{N - 1} \int_{\{y:\rho(y)\le h\}}\Delta_i(y)\,\mathrm{d}y + \int_{A_h}d_{\mathbb T}(x,\Sigma_0)^2\,\mathrm{d}x \\
   &= 3\int_{\{y:\rho(y)\le h\}}
       \bigl(\rho(x)^2-d_{\mathbb T}(x,\Sigma_0)^2\bigr)\,\mathrm{d}x + \int_{A_h}d_{\mathbb T}(x,\Sigma_0)^2\,\mathrm{d}x \\
    &\le 3\int_{\{y:\rho(y)\le h\}}\rho(x)^2\,\mathrm{d}x,
\end{align*}
as desired.
In particular, the first inequality holds by \eqref{eqn:decom-quadra-add}, \eqref{eqn:cell-average-ub}, and the initial value $\phi_{h,x}(L_0)=d_{\mathbb T}(x,\Sigma_0)^2$ for all $x \in A_h$.
The second inequality holds by $\Pp\{X\in A_h,L_i=r_i\}\le v_i(r_i)$ from
\eqref{eq:killed-mark-domination}.
The second equality holds since $\sum_i\Delta_i(y)=\rho(y)^2-d_{\mathbb T}(y,\Sigma_0)^2$.
The last inequality holds since $A_h \subseteq \{y:\rho(y)\le h\}$.
Therefore, it remains to prove \eqref{eqn:cell-average-ub}.

Fix $x \in A_h$.
On one hand, if $d_{\mathbb T}(x,r_i)<h$, choose a shortest
displacement $u$ from $x$ to $r_i$.
For every $z \in \mathcal V_{r_i}$, the vector $u+T_{r_i}(z)$ is a displacement from $x$ to the
replacement server $r_i + T_{r_i}(z)$, and hence expanding a square gives
\begin{align*}
  \phi_{h,x}(r_i+T_{r_i}(z))-\phi_{h,x}(r_i)
  \le\|u+T_{r_i}(z)\|_2^2-\|u\|_2^2
   =2u\cdot T_{r_i}(z)+\|T_{r_i}(z)\|_2^2.
\end{align*}
Therefore,
\begin{align}
  \int_{\mathcal V_{r_i}}
    \bigl(\phi_{h,x}(r_i+T_{r_i}(z))-\phi_{h,x}(r_i)\bigr)\,\mathrm{d}z \notag
    &\le 2 \int_{\mathcal V_{r_i}} u \cdot T_{r_i}(z) \mathrm{d}z + \int_{\mathcal V_{r_i}}
                  \|T_{r_i}(z)\|_2^2\,\mathrm{d}z \\
    &\le 3\int_{C_i(r_i)}
                  \Delta_i(y) \,\mathrm{d}y
   \le3\int_{\{y:\rho(y)\le h\}}\Delta_i(y)\,\mathrm{d}y, 
\end{align}
where the second inequality holds by \Cref{lem:periodic-voronoi-deletion}, and the last inequality holds by \eqref{eqn:cell-containment}.
On the other hand, if $d_{\mathbb T}(x,r_i)\ge h$, the capped squared distance $\phi_{h, x}(r_i)$
is already $h^2$, so 
\begin{align*}
    \phi_{h, x}(r_i + T_{r_i}(z)) - \phi_{h, x}(r_i)
    \leq 0
\end{align*}
for every $z \in \mathcal V_{r_i}$,
and hence \eqref{eqn:cell-average-ub} is immediate.
\end{proof}

\section*{Acknowledgement}
{
The proof was discovered by GPT 5.6 Sol during an interactive process guided by the authors. The authors subsequently worked with GPT to develop and refine the arguments and exposition. The authors independently verified all mathematical claims and take full responsibility for the final manuscript.}

\bibliographystyle{alpha}
\bibliography{references}

@article{ajtai1984optimal,
  title={On optimal matchings},
  author={Ajtai, Mikl{\'o}s and Koml{\'o}s, J{\'a}nos and Tusn{\'a}dy, G{\'a}bor},
  journal={Combinatorica},
  volume={4},
  number={4},
  pages={259--264},
  year={1984},
  publisher={Springer-Verlag Berlin/Heidelberg}
}

@article{DBLP:journals/ior/YangY26,
  author       = {Mingwei Yang and
                  Sophie H. Yu},
  title        = {Online Metric Matching: Beyond the Worst Case},
  journal      = {Oper. Res.},
  volume       = {74},
  number       = {1},
  pages        = {130--140},
  year         = {2026}
}

@article{DBLP:journals/ipl/TsaiTC94,
  author       = {Ying The Tsai and
                  Chuan Yi Tang and
                  Yunn Yen Chen},
  title        = {Average Performance of a Greedy Algorithm for the On-Line Minimum
                  Matching Problem on Euclidean Space},
  journal      = {Inf. Process. Lett.},
  volume       = {51},
  number       = {6},
  pages        = {275--282},
  year         = {1994}
}

@article{talagrand1992matching,
  title={Matching random samples in many dimensions},
  author={Talagrand, Michel},
  journal={The Annals of Applied Probability},
  pages={846--856},
  year={1992},
  publisher={JSTOR}
}

@inproceedings{DBLP:conf/compgeom/Raghvendra18,
  author       = {Sharath Raghvendra},
  title        = {Optimal Analysis of an Online Algorithm for the Bipartite Matching
                  Problem on a Line},
  booktitle    = {SoCG},
  series       = {LIPIcs},
  volume       = {99},
  pages        = {67:1--67:14},
  publisher    = {Schloss Dagstuhl - Leibniz-Zentrum f{\"{u}}r Informatik},
  year         = {2018}
}

@article{DBLP:journals/talg/PesericoS23,
  author       = {Enoch Peserico and
                  Michele Scquizzato},
  title        = {Matching on the Line Admits no o({\(\surd\)}log n)-Competitive Algorithm},
  journal      = {{ACM} Trans. Algorithms},
  volume       = {19},
  number       = {3},
  pages        = {28:1--28:4},
  year         = {2023}
}

@inproceedings{DBLP:conf/soda/MeyersonNP06,
  author       = {Adam Meyerson and
                  Akash Nanavati and
                  Laura J. Poplawski},
  title        = {Randomized online algorithms for minimum metric bipartite matching},
  booktitle    = {{SODA}},
  pages        = {954--959},
  publisher    = {{ACM} Press},
  year         = {2006}
}

@article{DBLP:journals/tcs/KhullerMV94,
  author       = {Samir Khuller and
                  Stephen G. Mitchell and
                  Vijay V. Vazirani},
  title        = {On-Line Algorithms for Weighted Bipartite Matching and Stable Marriages},
  journal      = {Theor. Comput. Sci.},
  volume       = {127},
  number       = {2},
  pages        = {255--267},
  year         = {1994}
}

@article{kanoria2025dynamic,
  title={Dynamic spatial matching},
  author={Kanoria, Yash},
  journal={The Annals of Applied Probability},
  volume={35},
  number={5},
  pages={3086--3118},
  year={2025},
  publisher={Institute of Mathematical Statistics}
}

@article{DBLP:journals/jal/KalyanasundaramP93,
  author       = {Bala Kalyanasundaram and
                  Kirk Pruhs},
  title        = {Online Weighted Matching},
  journal      = {J. Algorithms},
  volume       = {14},
  number       = {3},
  pages        = {478--488},
  year         = {1993}
}

@article{holden2021gravitational,
  title={Gravitational allocation for uniform points on the sphere},
  author={Holden, Nina and Peres, Yuval and Zhai, Alex},
  year={2021}
}

@inproceedings{DBLP:conf/icalp/GuptaGPW19,
  author       = {Anupam Gupta and
                  Guru Guruganesh and
                  Binghui Peng and
                  David Wajc},
  title        = {Stochastic Online Metric Matching},
  booktitle    = {{ICALP}},
  series       = {LIPIcs},
  volume       = {132},
  pages        = {67:1--67:14},
  publisher    = {Schloss Dagstuhl - Leibniz-Zentrum f{\"{u}}r Informatik},
  year         = {2019}
}

@article{DBLP:journals/orl/GairingK19,
  author       = {Martin Gairing and
                  Max Klimm},
  title        = {Greedy metric minimum online matchings with random arrivals},
  journal      = {Oper. Res. Lett.},
  volume       = {47},
  number       = {2},
  pages        = {88--91},
  year         = {2019}
}

@article{bobkov2021simple,
  title={A simple Fourier analytic proof of the AKT optimal matching theorem},
  author={Bobkov, Sergey G and Ledoux, Michel},
  journal={The Annals of Applied Probability},
  volume={31},
  number={6},
  pages={2567--2584},
  year={2021},
  publisher={Institute of Mathematical Statistics}
}

@article{DBLP:journals/algorithmica/BansalBGN14,
  author       = {Nikhil Bansal and
                  Niv Buchbinder and
                  Anupam Gupta and
                  Joseph Naor},
  title        = {A Randomized O(log2 k)-Competitive Algorithm for Metric Bipartite
                  Matching},
  journal      = {Algorithmica},
  volume       = {68},
  number       = {2},
  pages        = {390--403},
  year         = {2014}
}

@inproceedings{DBLP:conf/sigecom/BalkanskiFP23,
  author       = {Eric Balkanski and
                  Yuri Faenza and
                  No{\'{e}}mie P{\'{e}}rivier},
  title        = {The Power of Greedy for Online Minimum Cost Matching on the Line},
  booktitle    = {{EC}},
  pages        = {185--205},
  publisher    = {{ACM}},
  year         = {2023}
}

@inproceedings{DBLP:conf/sigecom/AkbarpourALS22,
  author       = {Mohammad Akbarpour and
                  Yeganeh Alimohammadi and
                  Shengwu Li and
                  Amin Saberi},
  title        = {The Value of Excess Supply in Spatial Matching Markets},
  booktitle    = {{EC}},
  pages        = {62},
  publisher    = {{ACM}},
  year         = {2022}
}

@article{tong2016online,
  title={Online minimum matching in real-time spatial data: experiments and analysis},
  author={Tong, Yongxin and She, Jieying and Ding, Bolin and Chen, Lei and Wo, Tianyu and Xu, Ke},
  journal={Proceedings of the VLDB Endowment},
  volume={9},
  number={12},
  pages={1053--1064},
  year={2016},
  publisher={VLDB Endowment}
}

@inproceedings{DBLP:conf/innovations/LiV026,
  author       = {Yingxi Li and
                  Ellen Vitercik and
                  Mingwei Yang},
  title        = {Smoothed Analysis of Online Metric Matching with a Single Sample:
                  Beyond Metric Distortion},
  booktitle    = {{ITCS}},
  series       = {LIPIcs},
  volume       = {362},
  pages        = {94:1--94:23},
  publisher    = {Schloss Dagstuhl - Leibniz-Zentrum f{\"{u}}r Informatik},
  year         = {2026}
}

@inproceedings{sibson1980vector,
  title={A vector identity for the Dirichlet tessellation},
  author={Sibson, Robin},
  booktitle={Mathematical Proceedings of the Cambridge Philosophical Society},
  volume={87},
  number={1},
  pages={151--155},
  year={1980},
  organization={Cambridge University Press}
}

@article{du1999centroidal,
  title={Centroidal Voronoi tessellations: Applications and algorithms},
  author={Du, Qiang and Faber, Vance and Gunzburger, Max},
  journal={SIAM review},
  volume={41},
  number={4},
  pages={637--676},
  year={1999},
  publisher={SIAM}
}

@article{DBLP:journals/sigecom/HuangTW24,
  author       = {Zhiyi Huang and
                  Zhihao Gavin Tang and
                  David Wajc},
  title        = {Online Matching: {A} Brief Survey},
  journal      = {SIGecom Exch.},
  volume       = {22},
  number       = {1},
  pages        = {135--158},
  year         = {2024}
}

@inproceedings{DBLP:conf/stoc/KarpVV90,
  author       = {Richard M. Karp and
                  Umesh V. Vazirani and
                  Vijay V. Vazirani},
  title        = {An Optimal Algorithm for On-line Bipartite Matching},
  booktitle    = {{STOC}},
  pages        = {352--358},
  publisher    = {{ACM}},
  year         = {1990}
}

@article{DBLP:journals/jacm/MehtaSVV07,
  author       = {Aranyak Mehta and
                  Amin Saberi and
                  Umesh V. Vazirani and
                  Vijay V. Vazirani},
  title        = {AdWords and generalized online matching},
  journal      = {J. {ACM}},
  volume       = {54},
  number       = {5},
  pages        = {22},
  year         = {2007}
}

@book{tao2012higher,
  title={Higher order Fourier analysis},
  author={Tao, Terence},
  year={2012},
  publisher={American Mathematical Soc.}
}

@book{burago2001course,
  title     = {A Course in Metric Geometry},
  author    = {Burago, Dmitri and Burago, Yuri D. and Ivanov, Sergei},
  volume    = {33},
  year      = {2001},
  publisher = {American Mathematical Society}
}

@article{DBLP:journals/siamcomp/AigerKS14,
  author       = {Dror Aiger and
                  Haim Kaplan and
                  Micha Sharir},
  title        = {Reporting Neighbors in High-Dimensional Euclidean Space},
  journal      = {{SIAM} J. Comput.},
  volume       = {43},
  number       = {4},
  pages        = {1363--1395},
  year         = {2014}
}

@article{DBLP:journals/jacm/Arora98,
  author       = {Sanjeev Arora},
  title        = {Polynomial Time Approximation Schemes for Euclidean Traveling Salesman
                  and other Geometric Problems},
  journal      = {J. {ACM}},
  volume       = {45},
  number       = {5},
  pages        = {753--782},
  year         = {1998}
}

@article{DBLP:journals/ior/ChenKKZ26,
  author       = {Yilun Chen and
                  Yash Kanoria and
                  Akshit Kumar and
                  Wenxin Zhang},
  title        = {Feature-Based Dynamic Matching},
  journal      = {Oper. Res.},
  volume       = {74},
  number       = {2},
  pages        = {788--803},
  year         = {2026}
}

@article{DBLP:journals/corr/abs-2606-05546,
  author       = {Josh Ascher and
                  Eric Balkanski and
                  Jason Chatzitheodorou and
                  Vasilis Gkatzelis},
  title        = {Online Min-Cost Matching with General Arrivals},
  journal      = {CoRR},
  volume       = {abs/2606.05546},
  year         = {2026}
}

@article{kumar2026feature,
  title={Feature-based Dynamic Matching in the Dark},
  author={Kumar, Akshit},
  journal={Available at SSRN 6839679},
  year={2026}
}

@inproceedings{DBLP:conf/approx/Raghvendra16,
  author       = {Sharath Raghvendra},
  title        = {A Robust and Optimal Online Algorithm for Minimum Metric Bipartite
                  Matching},
  booktitle    = {{APPROX-RANDOM}},
  series       = {LIPIcs},
  volume       = {60},
  pages        = {18:1--18:16},
  publisher    = {Schloss Dagstuhl - Leibniz-Zentrum f{\"{u}}r Informatik},
  year         = {2016}
}

@article{steele1986efron,
  author  = {J. Michael Steele},
  title   = {An {Efron--Stein} Inequality for Nonsymmetric Statistics},
  journal = {The Annals of Statistics},
  volume  = {14},
  number  = {2},
  pages   = {753--758},
  year    = {1986},
  doi     = {10.1214/aos/1176349952}
}

@article{boucheron2005moment,
  author  = {St{\'e}phane Boucheron and Olivier Bousquet and G{\'a}bor Lugosi and Pascal Massart},
  title   = {Moment Inequalities for Functions of Independent Random Variables},
  journal = {The Annals of Probability},
  volume  = {33},
  number  = {2},
  pages   = {514--560},
  year    = {2005},
  doi     = {10.1214/009117904000000856}
}

@book{talagrand2022upper,
  title={Upper and lower bounds for stochastic processes: decomposition theorems},
  author={Talagrand, Michel},
  year={2022},
  publisher={Springer Nature}
}

\appendix
\newpage

\section{Omitted Proofs in Section~\ref{sec:preliminaries}}
\label{app:preliminaries-proofs}

This appendix collects proofs and references for the
facts and lemmas stated in Section~\ref{sec:preliminaries}, using the
notation defined there.

\subsection{Proof of Fact~\ref{fact:torus-translation-stationarity}}
\label{app:proof-torus-translation}

Fix $u\in\mathbb T^d$.  For $x,y\in\mathbb T^d$, translating both points by $u$ leaves their
coordinate differences modulo one unchanged, so it preserves
$d_{\mathbb T}(x,y)$.  The map $\tau_u$ also preserves uniform volume.
At the first request, distance preservation and uniqueness imply
that the Greedy process on the translated inputs selects the translated
copy of the server chosen on the original inputs.
After deleting these servers, the remaining configurations still differ
by the same translation.  Induction over the request sequence proves
the assertion about all Greedy choices and shows that the unmatched
configuration at every rank is translated by $\tau_u$.

Under iid uniform input, simultaneous translation leaves the joint
distribution of the servers and requests unchanged.  The preceding
identity for the unmatched configurations therefore implies that
$\tau_u(B_m)$ has the same distribution as $B_m$ for every $u$ and $m$.
To obtain the expected-count identity in the fact, let $W$ be a uniform
torus point independent of $B_m$.  Conditional on $B_m$, each translated
server $\tau_W(b)$ is uniform, so linearity of expectation gives
\[
  \E\bigl[|\tau_W(B_m)\cap P|\mid B_m\bigr]
  =\sum_{b\in B_m}\Pp\{\tau_W(b)\in P\mid B_m\}
  =m|P|.
\]
Since $\tau_W(B_m)$ has the same distribution as $B_m$, taking
expectations gives \eqref{eq:torus-stationarity}, completing the proof
of Fact~\ref{fact:torus-translation-stationarity}.

\subsection{Proof of Fact~\ref{lem:random-cut}}
\label{app:proof-random-cut}

For (i), the map $\kappa_u(z)=z-u\pmod1$ is a translation modulo one:
it preserves volume and has inverse $w\mapsto w+u\pmod1$; see
also~\cite[Section~1.1]{tao2012higher}.
For (ii), conditioning on the independent cut $U$ leaves the inputs
iid uniform, and applying the fixed map $\kappa_U$ to each input
preserves this property by (i).
For (iii), the displacement $\kappa_U(z)-\kappa_U(z')$ differs from
$z-z'$ by an integer vector, so its norm is at least
$d_{\mathbb T}(z,z')$, proving~\eqref{eq:metric-domination}; see
also~\cite[Lemma~3.3.6]{burago2001course}.
In the no-cut case of (iv), the unwrapped coordinate differences realize
the chosen shortest arcs, so equality holds in this comparison.

For (v), each uniform cut point $U_i$ lies on the chosen coordinate
arc with probability $\Delta_i(z,z')$, its length.
The union bound and Cauchy--Schwarz therefore give
\[
  \Pp_U\{\text{at least one chosen arc is cut}\}
  \le\sum_{i=1}^d\Delta_i(z,z')
  \le\sqrt d\,d_{\mathbb T}(z,z'),
\]
proving (v); see also~\cite[Sections~1--2]{DBLP:journals/siamcomp/AigerKS14}
for the shifted-grid bound.

It remains to check the nearest-server implication (vi).
For every $z\in S\setminus\{b\}$, items (iii) and (iv) and the uniqueness
of the torus nearest server give
\[
  d_U(x,z)\ge d_{\mathbb T}(x,z)
             > d_{\mathbb T}(x,b)
             =d_U(x,b).
\]
Thus $b$ remains uniquely nearest under the cut metric, proving (vi).

\subsection{Source for Lemma~\ref{lem:random-grid-separation}}
\label{app:proof-random-grid-separation}

The boundary-crossing estimates for randomly shifted dissections are
developed in~\cite[Sections~2.2--2.3]{DBLP:journals/jacm/Arora98}.
For the precise Euclidean grid bound used here, the discussion of the first algorithm in
\cite[Section~1]{DBLP:journals/siamcomp/AigerKS14} gives the Euclidean
separation bound $\min\{1,\sqrt d\,\|x-y\|_2/s\}$ for a randomly shifted
grid of side length $s$.
For torus points, choose periodic copies whose Euclidean distance is
$d_{\mathbb T}(x,y)$.  Since $1/s$ is an integer, reducing the Euclidean
grid modulo one gives the torus grid; separation on the torus therefore
implies separation of these copies in the Euclidean grid.
The same cited estimate applies, and $s\in[h/2,h]$ gives
\eqref{eq:random-grid-separation}.

\subsection{Proof of Lemma~\ref{lem:periodic-voronoi-deletion}}
\label{app:proof-periodic-deletion}

Fix $S$ and $s$ as in Lemma~\ref{lem:periodic-voronoi-deletion}.
We first verify the geometric properties of $\mathcal V_s$ stated in
Subsection~\ref{sec:periodic-voronoi}.
The copies of $s$ one unit away in each coordinate force
$\mathcal V_s\subseteq[-1/2,1/2]^d$.  Only copies within distance
$\sqrt d$ of the origin can contribute a boundary face: at a point
equidistant from the origin and a copy $y$, we have
$\|y\|_2\le2\|x\|_2\le\sqrt d$.
There are finitely many such copies, so the cell is a bounded convex
polytope.  Translating back by $s$ and reducing modulo one maps it to
the torus Voronoi cell of $s$, one-to-one except on boundaries of zero
volume.  Thus $v_s$ is the probability that a uniform request selects $s$.

The proof uses the increase
in squared distance as a scalar potential.  Its gradient is minus twice
the switch displacement.  Integrating that gradient gives zero because
boundary contributions vanish or cancel across periodic faces; integrating
its scalar product with position gives the second-moment bound.
Introduce the increase in
squared distance caused by deleting $s$:
\begin{align}
  \psi_s(x)
  &:=\|x-T_s(x)\|_2^2-\|x\|_2^2 \notag\\
  &=
  \min_{\substack{t\in S\setminus\{s\}\\z\in\mathbb Z^d}}
  \left\{
    \|t-s+z\|_2^2-2x\cdot(t-s+z)
  \right\}.
  \label{eq:psi-affine}
\end{align}
Expanding the squared distances gives the equality in
\eqref{eq:psi-affine}, and $\psi_s\ge0$ on $\mathcal V_s$.
For every $x$ in this cell, a nearest surviving copy is within distance
$\sqrt d/2$ of $x$, and hence within distance $\sqrt d$ of the origin.
Only finitely many copies can therefore attain the minimum.
On each region where the minimizing copy is fixed, $\psi_s$ is a linear
function plus a constant, with
$\nabla\psi_s(x)=-2T_s(x)$.  These regions cover the cell except for
boundaries of zero volume.

\paragraph{Zero drift.}
For an ordinary bounded Voronoi cell with only its central site deleted,
the distance-increase potential vanishes on every boundary face.  Here
it can remain positive on faces shared with other copies of the deleted
server, so we use periodic cancellation on these faces.
The cell $\mathcal V_s$ is a bounded convex polytope.  On a boundary
face shared with a periodic copy of a server $t\ne s$, that copy and the
origin are equidistant, so $\psi_s=0$.  Every remaining face $F_z$,
shared with a noncentral copy $z\in\mathbb Z^d\setminus\{0\}$ of $s$, is
paired with $F_{-z}=F_z-z$.  On $F_z$, periodicity and
$\|x\|_2=\|x-z\|_2$ give $\psi_s(x)=\psi_s(x-z)$, while the outward unit
normals on the paired faces are opposite.  Write $\nu$ for the outward
unit normal and $\mathrm{d}\sigma$ for surface area on the boundary.  The boundary
contributions therefore cancel, and the divergence theorem gives
\[
  \int_{\partial\mathcal V_s}\psi_s\nu\,\mathrm{d}\sigma=0,
  \qquad
  \int_{\mathcal V_s}T_s(x)\,\mathrm{d}x
  =-\frac12\int_{\mathcal V_s}\nabla\psi_s(x)\,\mathrm{d}x
  =-\frac12\int_{\partial\mathcal V_s}\psi_s\nu\,\mathrm{d}\sigma
  =0,
\]
where the first equality in the second chain uses
$\nabla\psi_s=-2T_s$, the second is the componentwise divergence theorem,
and the last uses cancellation between the paired faces.

\paragraph{Bound on the squared displacement.}
Deleting $s$ changes the distance to the nearest server only on
$\mathcal V_s$, and $\psi_s$ is exactly the pointwise increase there.
Hence
\begin{equation}\label{eq:potential-increment}
  \Delta_sQ(S)=\int_{\mathcal V_s}\psi_s(x)\,\mathrm{d}x.
\end{equation}

Almost everywhere on $\mathcal V_s$,
\[
  \|T_s(x)\|_2^2
  =\psi_s(x)-x\cdot\nabla\psi_s(x)
  =(d+1)\psi_s(x)-\nabla\cdot\bigl(\psi_s(x)x\bigr),
\]
where the first equality uses $\nabla\psi_s=-2T_s$ together with
\eqref{eq:psi-affine}, and the second follows from
\[
  \nabla\cdot(\psi_s x)=x\cdot\nabla\psi_s+d\psi_s,
\]
because $\nabla\cdot x=d$ in $\mathbb R^d$.  Integrating and applying the
divergence theorem gives
\[
  \int_{\mathcal V_s}\|T_s(x)\|_2^2\,\mathrm{d}x
  =(d+1)\int_{\mathcal V_s}\psi_s(x)\,\mathrm{d}x
    -\int_{\partial\mathcal V_s}\psi_s(x)x\cdot\nu\,\mathrm{d}\sigma.
\]
The boundary integral is nonnegative: $\psi_s\ge0$, and
$x\cdot\nu\ge0$ on the boundary of a convex cell containing the origin.
Using \eqref{eq:potential-increment}, we conclude that
\[
  \int_{\mathcal V_s}\|T_s(x)\|_2^2\,\mathrm{d}x
  \le(d+1)\int_{\mathcal V_s}\psi_s(x)\,\mathrm{d}x
  =(d+1)\Delta_sQ(S).
\]
The coefficient $d+1$ comes from the divergence identity above and
requires no restriction to dimension two.  In ordinary single-site
deletion with a bounded cell, the potential vanishes on its entire
boundary and the corresponding second-moment relation is an equality.
For periodic deletion, the nonnegative boundary term yields the stated
inequality.

\end{document}